\tikzstyle{player}=[draw, thick, circle, fill=gray!15,inner sep=2pt, minimum width=12pt]
\tikzstyle{vplayer}=[draw, thick, circle, fill=gray!15,inner sep=0.5pt, minimum width=12pt]
\tikzstyle{random}=[draw, thick, diamond, rounded corners, fill=gray!15,inner sep=2pt, minimum width=12pt]
\tikzstyle{vrandom}=[draw, thick, diamond, rounded corners, fill=gray!15,inner sep=0.5pt, minimum width=12pt]
\g@addto@macro\bfseries{\boldmath}
\g@addto@macro\mdseries{\unboldmath}
\g@addto@macro\normalfont{\unboldmath}
\g@addto@macro\rmfamily{\unboldmath}
\g@addto@macro\upshape{\unboldmath}
\newcommand{\myhref}[1]{%
  \iffieldundef{doi}
    {\iffieldundef{url}
       {#1}
       {\href{\strfield{url}}{#1}}}
    {\href{http://dx.doi.org/\strfield{doi}}{#1}}%
}
    \newlength{\temp@x}%
    \newlength{\temp@y}%
    \newlength{\temp@w}%
    \newlength{\temp@h}%
    \def\my@coords#1#2#3#4{%
      \setlength{\temp@x}{#1}%
      \setlength{\temp@y}{#2}%
      \setlength{\temp@w}{#3}%
      \setlength{\temp@h}{#4}%
      \adjustlengths{}%
      \my@pdfliteral{\strip@pt\temp@x\space\strip@pt\temp@y\space\strip@pt\temp@w\space\strip@pt\temp@h\space re}}%
      \def\my@pdfliteral#1{\pdfliteral page{#1}}
      \def\adjustlengths{}%
      \def\my@pdfliteral #1{}
      \def\adjustlengths{\setlength{\temp@h}{-\temp@h}\addtolength{\temp@y}{1in}\addtolength{\temp@x}{-1in}}%
    \def\Hy@colorlink#1{%
      \begingroup
        \ifHy@ocgcolorlinks
          \def\Hy@ocgcolor{#1}%
          \my@pdfliteral{q}%
          \my@pdfliteral{7 Tr}
        \else
          \HyColor@UseColor#1%
        \fi
    }%
    \def\Hy@endcolorlink{%
      \ifHy@ocgcolorlinks%
        \my@pdfliteral{/OC/OCPrint BDC}%
        \my@coords{0pt}{0pt}{\pdfpagewidth}{\pdfpageheight}%
        \my@pdfliteral{F}
        %
        \my@pdfliteral{EMC/OC/OCView BDC}%
        \begingroup%
          \expandafter\HyColor@UseColor\Hy@ocgcolor%
          \my@coords{0pt}{0pt}{\pdfpagewidth}{\pdfpageheight}%
          \my@pdfliteral{F}
        \endgroup%
        \my@pdfliteral{EMC}%
        \my@pdfliteral{0 Tr}
        \my@pdfliteral{Q}%
      \fi
      \endgroup
    }%
\colorlet{DarkRed}{red!50!black}
\colorlet{DarkGreen}{green!50!black}
\colorlet{DarkBlue}{blue!50!black}
\newcommand{\set}[1]{\{#1\}}
\newcommand{\lu}{\textup{(}}
\newcommand{\ru}{\textup{)}}
\newcommand{\upbr}[1]{\lu #1\ru}
\newcommand{\objsty}[2]{\textrm{#1}(#2)}
\newcommand{\Reach}[1]{\objsty{Reach}{#1}}
\newcommand{\Safe}[1]{\objsty{Safe}{#1}}
\newcommand{\Buchi}[1]{\objsty{Büchi}{#1}}
\newcommand{\coBuchi}[1]{\objsty{coBüchi}{#1}}
\newcommand{\target}{T\xspace}
\newcommand{\pre}{\mathsf{Pre}\xspace}
\newcommand{\post}{\mathsf{Post}\xspace}
\newcommand{\dist}{d}
\newcommand{\dia}{D}
\newcommand{\adia}{\widetilde{D}}
\newcommand{\distx}[2]{\overrightarrow{N}_{#1}(#2)}
\newcommand{\distxto}[2]{\overleftarrow{N}_{#1}(#2)}
\newcommand{\depthout}[1]{\overrightarrow{B}(#1)}
\newcommand{\depthin}[1]{\overleftarrow{B}(#1)}
\newcommand{\highdeg}[1]{V_{#1}}
\newcommand{\fw}[1]{FW(#1)}
\newcommand{\bw}[1]{BW(#1)}
\theoremstyle{plain}
\declaretheorem[numberwithin=section]{theorem}
\declaretheorem[numberlike=theorem]{lemma}
\declaretheorem[numberlike=theorem]{corollary}
\declaretheorem[numberlike=theorem]{Remark}
\newtheorem{proposition}[theorem]{Proposition}
\newtheorem{reduction}[theorem]{Reduction}
\newtheorem{observation}[theorem]{Observation}
\newtheorem{Definition}[theorem]{Definition}
\title{Lower Bounds for Symbolic Computation on Graphs:\\
\Large Strongly Connected Components, Liveness, Safety, and Diameter}
\author[1]{Krishnendu Chatterjee}
\author[2]{Wolfgang Dvo{\v r}{\' a}k}
\author[3]{Monika Henzinger}
\author[3,4]{Veronika~Loitzenbauer}
\affil[1]{IST Austria}
\affil[2]{TU Wien, Institute of Information Systems, Vienna, Austria}
\affil[3]{University of Vienna, Faculty of Computer Science, Vienna, Austria}
\affil[4]{Bar-Ilan University}
\date{}
\newcommand{\citet}[1]{\citeauthor{#1}~\cite{#1}}
\begin{document}
\pagenumbering{gobble}
\maketitle
\begin{abstract}
A model of computation that is widely used in the formal analysis of reactive
systems is \emph{symbolic algorithms}. 
In this model the access to the input graph is restricted to consist of 
\emph{symbolic operations}, which are expensive in comparison to the standard RAM operations. 
We give lower bounds on the number of symbolic operations for basic graph
problems such as the computation of the strongly connected components and of the 
approximate diameter as well as for fundamental problems in model checking such 
as safety, liveness, and co-liveness. Our lower bounds are linear in the number 
of vertices of the graph, even for constant-diameter graphs. 
For none of these problems lower bounds on the number 
of symbolic operations were known before. The lower bounds show an interesting 
separation of these problems from the reachability problem, which can be solved 
with $O(\dia)$ symbolic operations, where $\dia$ is the diameter of the graph.
 
Additionally we present an approximation algorithm for the graph diameter which 
requires $\tilde{O}(n \sqrt{\dia})$ symbolic steps to achieve a $(1+\epsilon)$-approximation
for any constant $\epsilon > 0$. This compares to $O(n \cdot \dia)$ symbolic steps for the
(naive) exact algorithm and $O(\dia)$ symbolic steps for a 2-approximation. 
Finally we also give a refined analysis of the strongly connected components 
algorithms of~\cite{GentiliniPP08}, showing that it uses an optimal number of 
symbolic steps that is proportional to the sum of the diameters of the 
strongly connected components.
\end{abstract}

\clearpage
\pagenumbering{arabic}
\pagestyle{plain}
\setcounter{page}{1}

\section{Introduction}
Graph algorithms are central in the formal analysis of reactive systems. A reactive system consists of a set of variables and
a state of the system corresponds to a set of valuations, one for each of these variables. This naturally induces a directed graph:
Each vertex represents a state of the system and each directed edge represents a state transition that is possible in the system.
As the number of vertices is exponential in the number of variables of the system, these graphs are huge and, thus, they are usually not explicitly represented
during their analysis. Instead they are {\em implicitly represented} using e.g., binary-decision diagrams (BDDs) \cite{Bryant86,Bryant92}. To avoid considering specifics of the implicit representation and their manipulation,
an elegant theoretical model for algorithms that work on this implicit representation has been developed, called {\em symbolic algorithms} (see e.g.\ \cite{BurchCMDH90,ClarkeMCH96,Somenzi99,ClarkeGP99,ClarkeGJLV03,GentiliniPP08,ChatterjeeHJS13}).
In this paper we will give novel upper and (unconditional) lower bounds on the number of operations required by a symbolic algorithm for solving classic graph-algorithmic questions, such as computing the strongly connected components and the (approximate) diameter, as well as for graph-algorithmic questions that are important in the analysis of reactive systems, such as safety, liveness, and co-liveness objectives. Our lower bounds are based on new reductions of problems from
communication complexity to symbolic algorithms.

{\em Symbolic algorithms.}
A symbolic algorithm is allowed to use the same mathematical, logical, and memory access operations as a regular RAM algorithm, except for the
access to the input graph: It
 is not given  access to the input graph through an adjacency list or adjacency matrix representation but instead 
{\em only} through two types of {\em symbolic operations}:
\begin{compactenum}
\item {\em One-step operations Pre and Post:} Each {\em predecessor \upbr{Pre}} 
(resp., {\em successor \upbr{Post}}) operation is given a set $X$ of vertices and
returns the set of vertices~$Y$ with an edge to (resp., edge from) 
some vertex of~$X$.
\item {\em Basic set operations:} Each basic set operation is given one or two sets of vertices and performs a union, intersection, or complement on these sets.
\end{compactenum}
An initial set of vertices is given as part of the input, often consisting of a single vertex.

Symbolic operations are more expensive than the non-symbolic operations and 
thus one is mainly interested in the number of symbolic operations of such an algorithm 
(and the exact number of non-symbolic operations is often neglected).
Moreover, as the symbolic model is motivated by the compact representation of huge graphs,
we aim for symbolic algorithms that only store $O(1)$ or $O(\log n)$ many sets of vertices as 
otherwise algorithms become impractical due to the huge space requirements. 
Additionally, every computable graph-algorithmic question can be solved with $2n$ symbolic one-step operations when storing $O(n)$ many sets (and allowing an unbounded number of non-symbolic operations): 
For every vertex $v$ perform a {\em Pre} and a {\em Post} operation, store the results, which represent the full graph, and then compute the solution on this graph, using only non-symbolic operations.
Note, however, that our lower bounds do not depend on this requirement, i.e., they also apply to symbolic algorithms that store an arbitrary number of sets.
Furthermore the basic set operations (that only require vertices, i.e., the current state variables) are computationally much less 
expensive than the one-step operations (that involve both variables of the current and of the next state).
Thus, to simplify the analysis of symbolic
algorithms, we only analyze the number of one-step operations in the lower bounds that we present. For all upper bounds in prior work and in our work
the number of basic-set operations is at most linear in the number of one-step operations.

There is an interesting relationship between the two types of symbolic operations and Boolean matrix-vector operations: Interpreting the edge relationship as a Boolean matrix 
and a vertex set as a Boolean vector, the one-step operations correspond to (left- and right-sided)
matrix-vector multiplication, where the matrix is the adjacency matrix,
and basic set operations correspond to basic vector manipulations. Thus, an equivalent way of interpreting symbolic algorithms is by saying that the access to the graph is only allowed by performing a Boolean matrix-vector multiplication, where the vector represents a set of vertices and the matrix is the adjacency matrix.

Note also that there is a similarity to the CONGEST and the LOCAL model in synchronous 
distributed computation, as in these models each vertex in a synchronous network knows 
all its neighbors and can communicate with all of them in one round (exchanging
$O(\log n)$ bits in the CONGEST model), and the algorithmic complexity is measured in
rounds of communication. While in these models all neighbors of every individual 
vertex, i.e., all {\em edges} of the graph, can be determined in one round of 
communication, in the symbolic model this might require $n$ {\em Pre} and $n$ 
{\em Post} operations, each on an singleton set. 
Thus, determining (and storing) all edges of a symbolically represented graph is
expensive and we would ideally like to have algorithms that use sub-linear
(in the number of vertices) many symbolic one-step operations. 

{\em Objectives.} First we formally introduce the most relevant graph-algorithmic questions from the analysis of reactive systems \cite{MannaP92}. 
 Given a graph $G=(V,E)$ and a starting vertex $s \in V$, let ${\cal P}_s$ be the set of infinite paths in
 $G$ starting from $s$. Each objective corresponds to a set of requirements on an infinite path and the question that needs to be decided by the algorithm is whether there is a path in ${\cal P}_s$ that
 satisfies these requirements, in which case we say the path {\em satisfies the objective.}
 An objective $A$ is the {\em dual} of an objective $B$ if a path satisfies $A$ iff it does not satisfy $B$.

Let $\target \subseteq V$ be a set of target vertices given as input.
The most basic objective is \emph{reachability} where 
an infinite path satisfies the objective if the path visits a 
vertex of~$\target$ {\em at least once}.
The dual \emph{safety} objective 
is satisfied by infinite paths that only visit vertices of~$\target$.
The next interesting objective is the \emph{liveness} (aka \emph{B\"uchi}) objective 
that 
requires an infinite path to visit some vertex of~$\target$ \emph{infinitely often}.
The dual \emph{co-liveness} (aka \emph{co-B{\"u}chi}) objective 
requires an infinite path to eventually only visit vertices in $\target$. 
Verifying these objectives are the most fundamental graph-algorithmic questions 
in the analysis of reactive systems.

Computing the strongly connected components (SCCs) is at the heart of the fastest algorithms for liveness and co-liveness: For example,
there is a reduction from liveness to the computation of SCCs 
that takes symbolic steps in the order of the diameter of the graph. 
Thus, determining the symbolic complexity of SCCs also settles the symbolic complexity of liveness.

Furthermore, the diameter computation plays a crucial role in applications such as bounded
model-checking~\cite{BiereCCSZ03}, where the goal is to analyze the system for a bounded number
of steps, and it suffices to choose the diameter of the graph as bound.
Second, in many scenarios, such as in hardware verification, the graphs have 
small diameter, and hence algorithms that can detect if this is the case and then exploit the small diameter are relevant~\cite{BiereCCSZ03}. 
Motivated by these applications, we define the diameter of a graph as the largest 
finite distance in the graph, which coincides with the usual graph-theoretic 
definition on strongly connected graphs and is more general otherwise.

Note that linear lower bounds for the number of symbolic operations 
are non-trivial, since a one-step operation can involve {\em all} edges. 
For example, to determine all the neighbors of a given vertex $v$ takes one symbolic operation, while it takes $O(\deg(v))$ many operations in the classic setting. 
In the following we use $n$ to denote the number of vertices of a graph~$G$ 
and $\dia(G) = \dia$ to denote its diameter.

\smallskip\noindent{\em Previous results.}
To the best of our knowledge, no previous work has established lower bounds 
for symbolic computation.  

There is some prior work on establishing upper bounds on the number of symbolic operations:
In~\cite{GentiliniPP08} a symbolic algorithm that computes the SCCs
with $O(n)$ symbolic operations is presented.
This algorithm leads to an algorithm for liveness and co-liveness with $O(n)$
symbolic operations and improves on earlier work by~\cite{BloemGS06}, which 
requires $O(n \log n)$ symbolic operations.

Note that for the reachability  objective the straightforward algorithm requires $O(\dia)$
symbolic operations: Starting from the set containing only the start vertex $s$,
repeatedly perform a {\em Post}-operation until $\target$ is reached.
For safety the straightforward algorithm takes $O(n)$ symbolic operations:
Iteratively remove from $F$ vertices that do not have an outgoing
edge to another vertex of $F$, i.e., vertices of $F \setminus \pre(F)$, until a fixed point is reached.

Finally, there is a trivial algorithm for computing the diameter of the graph: 
Simply determine the depth of a breadth-first search from every vertex and output 
the maximum over all depths.
Computing the depth of a breadth-first search can be done with $O(\dia)$ 
many symbolic steps, thus this requires $O(n \dia)$ many symbolic steps in total.
In a strongly connected graph a 2-approximation of the diameter of the graph can 
be obtained by computing one breadth-first search from and one to
some arbitrary vertex and output the sum of the depths. 
This takes $O(\dia)$ symbolic steps.

\smallskip\noindent{\em Our contributions.} 
Our main contributions are novel lower bounds for the number of symbolic operations for many
of the above graph-algorithmic questions, leading to an interesting separation between seemingly similar problems.
\begin{compactenum}
\item For reachability objectives, the basic symbolic algorithm requires
$O(\dia)$ symbolic operations.
Quite surprisingly, we show that such diameter-based upper bounds are 
{\em not possible} for its dual problem,
namely safety, and are also not possible for liveness and co-liveness objectives. 
Specifically, we present tight lower bounds to show that, even for constant-diameter graphs, 
$\Omega(n)$ one-step symbolic operations are required for safety, liveness, 
and co-liveness objectives. 
In addition we establish tight bounds for symbolic operations required 
for the computation of SCCs, showing
a lower bound of $\Omega(n)$ for constant-diameter graphs.
See Table~\ref{tab:results_objectives_scc} for a summary of these results.

\begin{table}[t]
  \caption{Bounds on the number of required symbolic operations for different tasks. $\Theta(n)$ bounds hold even for graphs with constant diameter $\dia$.} \medskip
  \label{tab:results_objectives_scc}
  \centering
  \begin{tabular}{ c  c  c  c c}
    \toprule
    $\Reach{\target}$ & $SCC$ & $\Safe{\target}$ & $\Buchi{\target}$ & $\coBuchi{\target}$\\
    \midrule
    $\Theta(\dia)$ & $\Theta(n)$ & $\Theta(n)$ & $\Theta(n)$ & $\Theta(n)$ \\
    \bottomrule
  \end{tabular}
      \vspace{-5mm}
\end{table}

\item We show that even for strongly-connected constant-diameter graphs
approximating the diameter requires $\Omega(n)$ symbolic steps.
More precisely, 
any $(3/2-\varepsilon)$-approximation algorithm requires $\Omega(n)$ 
symbolic one-step operations, even 
for undirected and connected graphs 
with constant diameter.
We also give a novel upper bound: We present a $(1+\varepsilon)$-approximation
algorithm for any constant $\varepsilon > 0$ that
takes $\widetilde O(n \sqrt{\dia})$ symbolic steps.
This can be compared to the trivial $O(\dia)$ 2-approximation algorithm
and the $O(n \dia)$ exact algorithm. 
Notice that for explicitly represented graphs the approximation of the diameter
is already hard for constant-diameter graphs while in the symbolic model there
exists a trivial $O(n)$ upper bound in this case, thus showing a lower bound of $\Omega(n)$ 
is non-trivial.
See Table~\ref{tab:results_diameter} for a summary of these results.

\begin{table}[t]~\label{tab:results_diameter}
  \caption{Bounds on the number of symbolic operations for approximating the diameter of a graph. The lower bounds even hold for strongly connected graphs with constant diameter $\dia$.}\medskip
  \centering
  \begin{tabular}{l| c c c c}
    \toprule
    approx. & exact & $1 + \varepsilon$ & $3/2-\varepsilon$ & $2$\\
    \midrule
    upper bound & $O(n\cdot \dia)$ & $\widetilde O(n \sqrt{\dia})$ & $\widetilde O(n \sqrt{\dia})$ & $O(\dia)$\\
    lower bound & $\Omega(n)$ & $\Omega(n)$ & $\Omega(n)$\\
    \bottomrule
  \end{tabular}
  \vspace{-5mm}
\end{table}

\item Finally we give a refined analysis of the number of symbolic steps required 
for computing strongly connected components based on a different problem parameter. 
Let $SCCs(G)$ be the set of all SCCs of $G$ and $\dia_C$ the diameter of the 
strongly connected component $C$. We give matching upper and
lower bounds showing that the SCCs can be computed with
$\Theta(\sum_{C \in SCCs(G)} (\dia_C + 1))$ symbolic steps. Note that 
$\sum_{C \in SCCs(G)} (\dia_C + 1)$ can be a factor $n$ larger than $\dia(G)$.
\end{compactenum}

\smallskip\noindent{\em Key technical contribution.}
Our key technical contribution is based on the novel insight that lower bounds
for communication complexity can be used to establish lower bounds for symbolic
computation. We feel that this connection is of interest by itself and might lead 
to further lower bounds for symbolic algorithms.

Our lower bounds are by two kinds of reductions, both from the communication complexity problem of Set Disjointness with $k$ elements.
First, we give reductions that construct graphs such that one-step operations can be computed 
with $O(1)$ bits of communication between Alice and Bob and thus allow for linear lower bounds
on the number symbolic operations.
Second, we give a reduction that constructs a graph with only $\sqrt{k}$ many vertices, i.e., $n=\sqrt{k}$, but
allows one-step operations to require $O(n)$ bits of communication. 
This again results in linear lower bounds on the number of symbolic operations.

\section{Preliminaries}

\noindent{\bf Symbolic Computation.}
We consider symbolic computation on graphs. Given an input graph $G = (V, E)$ 
and a set of vertices $S \subseteq V$, 
the graph~$G$ can be accessed only by the following two types of operations:
\begin{compactenum}
	\item \emph{Basic set operations} like $\cup$, $\cap$, $\setminus$, $\subseteq$, and $=$;
	\item \emph{One-step operations} to obtain the predecessors or successors of the 
	vertices of~$S$ in~$G$. In particular we define the operations
	\begin{align*}
		\pre(S) =\{v \in V \mid \exists s \in S: (v, s) \in E\} \ \
		\text{and}  \ \
		\post(S)=\{v \in V \mid \exists s \in S: (s, v) \in E\}\,.
	\end{align*}
\end{compactenum}
In the applications the basic set operations are much cheaper as compared to the 
one-step operations. Thus we aim for lower bounds on the number of one-step
operations, while not accounting for set operations. In all our upper bounds the 
number of set operations is at most of the same order as the number of one-step operations.
Note that there is a one-to-one correspondence between a one-step operation and a Boolean
matrix-vector multiplication with the adjacency matrix 
and that for undirected graphs $\pre$ and $\post$ are equivalent.

\smallskip\noindent{\bf Communication Complexity Lower Bound for Set Disjointness.}
Our lower bounds are based on the known lower bounds for the communication
complexity of the \emph{Set Disjointness} problem. The classical symmetric 
two-party communication complexity model is as follows~\cite{KushilevitzN97}.
There are three finite sets $X,Y,Z$, the former two are possible inputs for a function $f: 
X \times Y \rightarrow Z$,
where the actual input $x\in X$ is only known by Alice, and the actual input $y \in Y$
is only known by Bob.
Alice and Bob want to evaluate a function 
$f(x,y)$ while sending as few bits as possible to each other.
The communication happens according to a fixed protocol, known to both players beforehand,
that determines which player sends which bits when, and when to stop. 

\smallskip\noindent\emph{Set Disjointness.}
In the Set Disjointness problem we have a universe $U = \set{0, 1, \ldots, k-1}$ of $k$ elements and
both sets $X$, $Y$ contain all bit vectors of length $k$, i.e., they represent all possible 
subsets of $U$ and are of size $2^k$.
Alice has a vector $x \in X$ and Bob has a vector $y \in Y$, and the function $f$ 
is defined as $f(x,y) = 1$ if for all $0 \leq i \leq k-1$ either $x_i=0$ or $y_i = 0$, and $f(x,y) = 0$ otherwise. 
We will sometimes use $S_x$ and $S_y$ to denote the sets corresponding to the vectors $x$ and $y$,
i.e., $S_x=\{i \mid x_i=1\}$ and $S_y=\{i \mid y_i=1\}$ and $f(x,y) = 1$ iff $S_x \cap S_y = \emptyset$.
We next state a fundamental lower bound for the communication complexity of the Set Disjointness problem 
which will serve as basis for our lower bounds on the number of symbolic operations.

\begin{theorem}[\cite{KalyanasundaramS92,Razborov92,Bar-YossefJKS04,HastadW07,KushilevitzN97}]\label{th:lbdisj}
	Any \upbr{probabilistic bounded error or deterministic} protocol for the Set 
	Disjointness problem sends $\Omega(k)$ bits in the worst case over all inputs.
\end{theorem}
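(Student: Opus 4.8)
The plan is to treat the deterministic and randomized cases separately, since the former admits a short combinatorial argument while the latter requires the information-theoretic machinery for which the cited papers are known.

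For the \emph{deterministic} bound I would use a \emph{fooling set} argument. Recall that if one exhibits a set $\mathcal{F}$ of input pairs on which $f$ is constantly $1$ and such that for any two distinct pairs in $\mathcal{F}$ at least one of the two ``off-diagonal'' combinations has $f$-value $0$, then the deterministic communication complexity is at least $\log_2 |\mathcal{F}|$: each pair must lie in a distinct monochromatic combinatorial rectangle, and a $c$-bit protocol partitions the input space into at most $2^c$ rectangles. First I would take
\[
\mathcal{F} = \{ (S_x, S_y) : S_y = U \setminus S_x \},
\]
the pairs of complementary sets. Each such pair is disjoint, so $f = 1$ throughout $\mathcal{F}$, and $|\mathcal{F}| = 2^k$. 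For distinct subsets $S \neq T$ there is an element in exactly one of them, say $i \in S \setminus T$; then $i \in S \cap (U \setminus T)$, so the off-diagonal pair $(S, U \setminus T)$ intersects and has $f = 0$. Hence $\mathcal{F}$ is a fooling set of size $2^k$ and the deterministic cost is at least $k$.

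For the \emph{randomized} bound I would go through \emph{information complexity} and a direct-sum reduction to the single-coordinate \textsc{And} function, following Bar-Yossef--Jayram--Kumar--Sivakumar. The key observation is that Set Disjointness is the coordinate-wise composition $f(x,y) = \neg\bigvee_{i} (x_i \wedge y_i)$. I would fix a ``hard'' input distribution $\mu$ on a single coordinate, realized as a mixture of product distributions via an auxiliary variable, and define the information cost of a protocol as the information its transcript reveals about the inputs conditioned on that variable. The heart of the argument is a \emph{direct sum} step showing that the conditional information cost of any bounded-error protocol for the $k$-coordinate problem is at least $k$ times the cost incurred on a single coordinate; this exploits that one may restrict to inputs in which at most one coordinate is ``jointly one,'' so the embedded single-coordinate instances behave independently. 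One then lower-bounds the single-coordinate information cost of \textsc{And} by a positive constant using a Hellinger-distance and cut-and-paste estimate on the two legal transcript distributions. Since information cost lower-bounds communication, this yields $\Omega(k)$.

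I expect the single-coordinate information lower bound to be the main obstacle: it is where the combinatorics of rectangles is replaced by an analytic estimate, and extracting an absolute constant (rather than one that degrades with the error parameter) requires exactly the Hellinger-distance bookkeeping that distinguishes the modern proofs. An alternative route that avoids information theory is Razborov's \emph{corruption} method: place a distribution on near-disjoint inputs and on minimally-intersecting inputs, and show that every large rectangle is ``corrupted'' by a constant fraction of $0$-inputs; there the technical crux is the rectangle-intersection estimate, which plays the same role as the hard analytic step above.
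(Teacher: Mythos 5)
The paper offers no proof of this statement at all: Theorem~\ref{th:lbdisj} is imported as a black box from the cited references, so there is no in-paper argument to compare yours against. Judged on its own terms, your deterministic half is complete and correct: the complementary-pairs fooling set $\{(S, U\setminus S)\}$ has size $2^k$, is monochromatically $1$, and for $S\neq T$ one of the off-diagonal pairs intersects, so any deterministic protocol needs at least $k$ bits. Your randomized half, by contrast, is a roadmap rather than a proof. It accurately reconstructs the Bar-Yossef--Jayram--Kumar--Sivakumar strategy (conditional information cost over a mixture of product distributions, a direct-sum reduction to a single \textsc{And} coordinate, and a Hellinger/cut-and-paste bound giving a constant per coordinate), and correctly identifies Razborov's corruption bound as the alternative route --- but the two steps you defer, the direct-sum inequality and the constant information lower bound for \textsc{And} (respectively, the rectangle-corruption estimate), are precisely where the entire difficulty of the theorem resides; nothing you have written discharges them. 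So the proposal should be read as: a genuine proof of the deterministic case, plus a correct identification of which known proof of the randomized case one would cite or reproduce --- which is, in effect, what the paper itself does by citing rather than proving. No step you state is wrong; the randomized bound is simply not established by the text as written.
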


\section{Lower Bounds}
In this section we present our lower bounds, which are the main results of the paper.

\subsection{Lower Bounds for Computing Strongly Connected Components}\label{sec:lb_SCCs}
We first consider the problem of computing the strongly connected components (SCCs) of a 
symbolically represented graph. 
The best known symbolic algorithm is  by Gentilini et al.~\cite{GentiliniPP08} and computes the SCCs of a Graph G
with $O(\min(n, \dia \cdot |SCCs(G)|))$ symbolic one-step operations and thus matches the linear running time 
of the famous Tarjan algorithm~\cite{Tarjan72} in the non-symbolic world.

We provide lower bounds showing that the algorithm is essentially optimal, 
in particular we show that $O(\dia)$ algorithms are impossible.
These lower bounds are by reductions from the communication complexity problem of
Set Disjointness to computing 
SCCs in a specific graph. In particular, we show that any algorithm that computes SCCs with $o(n)$
symbolic one-step operations would imply a communication protocol for the Set Disjointness problem with $o(k)$
communication.

\begin{reduction}\label{red:scc2}
  Let $(x,y)$ be  an instance of Set Disjointness and let w.l.o.g.\ $k= \ell \cdot \bar{k}$
  for some integers $\ell, \bar{k}$. We construct a directed graph $G = (V, E)$  with 
  $n = k + \ell$ vertices and $O(n^2)$ edges as follows.
    \upbr{1} The vertices are given by $V= \bigcup_{i=0}^{\ell-1} V_i$
    with $V_i=\{v_{i,0},\dots,v_{i,\bar{k}}\}$.
    \upbr{2} There is an edge from $v_{i,j}$ to $v_{i',j'}$ if either $i<i'$ or $i=i'$ and $j < j'$.
    \upbr{3} For $0 \leq i < \ell$, $0 \leq j < \bar{k}$ there is an edge from 
	     $v_{i,j+1}$ to $v_{i,j}$ iff $x_{i\cdot \bar{k} + j}=0$ or $y_{i\cdot \bar{k} + j}=0$. 
\end{reduction}

In our communication protocol both Alice and Bob compute the number of SCCs on 
the graph from Reduction~\ref{red:scc2}, according to a given algorithm. 
While both know all the vertices of the graph, they do not know all the edges 
(some depend on both $x$ and $y$) and 
thus whenever such an edge is relevant for the algorithm,
Alice and Bob have to communicate with each other.
We show that the graph is constructed such that for each subset $S\subseteq V$
the operations $\pre(S)$ and $\post(S)$
can be computed with only four bits of communication between Alice and Bob.

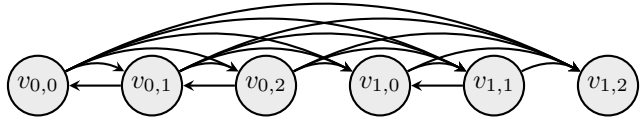
\begin{figure}[bt]
 \centering
\begin{tikzpicture}[>=stealth, yscale=0.9]
  \small
  		\path 	(0,0)node[player](v00){$v_{0,0}$}
			++(1.5,0)node[player](v01){$v_{0,1}$}
			++(1.5,0)node[player](v02){$v_{0,2}$}
			++(1.5,0)node[player](v10){$v_{1,0}$}
			++(1.5,0)node[player](v11){$v_{1,1}$}
			++(1.5,0)node[player](v12){$v_{1,2}$}
			;
		\path [->, bend left, thick]
			(v00) edge (v01)
			(v00) edge (v02)
			(v00) edge (v10)
			(v00) edge (v11)
			(v00) edge (v12)
			(v01) edge (v02)
			(v01) edge (v10)
			(v01) edge (v11)
			(v01) edge (v12)
			(v02) edge (v10)
			(v02) edge (v11)
			(v02) edge (v12)
			(v10) edge (v11)
			(v10) edge (v12)
			(v11) edge (v12)
			;
		\path [->, thick]
			(v11) edge (v10)
			(v02) edge (v01)
			(v01) edge (v00)
			;
  \end{tikzpicture}
  \caption{Reduction~\ref{red:scc2} for $k=4, \ell=2, S_x=\{2, 3\}, S_y=\{0, 1, 3\}$}
\end{figure}

\begin{theorem}\label{thm:scc2}
  Any \upbr{probabilistic bounded error or deterministic} symbolic algorithm
  that computes the SCCs of graphs with $n$ vertices
  needs $\Omega(n)$ symbolic one-step operations.
  Moreover, for a graph with the set $SCCs(G)$ of SCCs and diameter $\dia$
  any algorithm needs $\Omega(|SCCs(G)|\cdot \dia)$ symbolic one-step operations.
\end{theorem}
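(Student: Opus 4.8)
The plan is to turn Reduction~\ref{red:scc2} into a communication protocol for Set Disjointness and then invoke Theorem~\ref{th:lbdisj}. The first ingredient is a structural understanding of the SCCs of $G$. Since every edge between two different blocks $V_i,V_{i'}$ points from the lower to the higher block (item (2) with $i<i'$), no SCC can span two blocks, so each SCC lies inside a single $V_i$. Within $V_i$ all forward edges $v_{i,j}\to v_{i,j'}$ ($j<j'$) are present, hence $v_{i,j}$ and $v_{i,j+1}$ are strongly connected iff the backward edge $v_{i,j+1}\to v_{i,j}$ is present, which by item (3) happens iff $x_{i\bar{k}+j}=0$ or $y_{i\bar{k}+j}=0$, i.e.\ iff $i\bar{k}+j\notin S_x\cap S_y$. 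Thus each block splits into SCCs exactly at the missing backward edges, and summing over blocks gives $|SCCs(G)|=\ell+|S_x\cap S_y|$. In particular $|SCCs(G)|=\ell$ iff $S_x\cap S_y=\emptyset$, so knowing the SCCs (indeed only their number) decides Set Disjointness. The diameter is controlled the same way: the only finite distances are $1$ along a forward edge and $j-j'$ along an intact backward chain inside a block, so $\dia(G)\le\bar{k}$, with equality whenever some block retains all its backward edges.

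The second, and main, ingredient is the claim that each one-step operation costs only $O(1)$ bits when Alice and Bob jointly simulate the algorithm, maintaining identical sets (using the shared algorithmic randomness as public coins in the randomized case). Order the vertices lexicographically by $(i,j)$, so the forward edges are exactly the pairs $u<w$. Then the forward contribution to $\post(S)$ is $\{w : w>\min(S)\}$, which depends only on $S$ and is therefore computed by both players with no communication; symmetrically the forward contribution to $\pre(S)$ is $\{w : w<\max(S)\}$. I will show that the private backward edges can alter this public set only at the immediate order-neighbour of $\min(S)$ (resp.\ $\max(S)$): because two consecutive vertices $v_{a,b},v_{a,b+1}$ straddle $\min(S)$ only when $\min(S)\in\{v_{a,b},v_{a,b+1}\}$, the status of at most two backward edges is ever relevant, and each such edge is resolved by Alice announcing whether her bit is $0$ and Bob announcing whether his is $0$. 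This is at most four bits per one-step operation, matching the bound promised after the reduction.

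Combining the two ingredients yields both statements. Suppose a deterministic or bounded-error symbolic algorithm computes the SCCs of every graph in the family of Reduction~\ref{red:scc2} using at most $T$ one-step operations. Simulating it gives a protocol deciding Set Disjointness on instances of size $k=\ell\bar{k}$ with at most $4T$ bits, so Theorem~\ref{th:lbdisj} forces $4T=\Omega(k)$, i.e.\ $T=\Omega(\ell\bar{k})$. For the first statement I take $\bar{k}=O(1)$; then $n=k+\ell=\Theta(k)$ and $\dia\le\bar{k}=O(1)$, so $T=\Omega(k)=\Omega(n)$ even on constant-diameter graphs. For the second statement I keep $\ell$ and $\bar{k}$ as free parameters and restrict to unique-disjointness instances ($|S_x\cap S_y|\le 1$), which still require $\Omega(k)$ communication; on these instances $|SCCs(G)|\in\{\ell,\ell+1\}=\Theta(\ell)$ and, for $\ell\ge 2$, some block is untouched so $\dia=\bar{k}$, whence $T=\Omega(\ell\bar{k})=\Omega(|SCCs(G)|\cdot\dia)$.

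The step I expect to be the main obstacle is the $O(1)$-bit claim for the one-step operations: one must verify carefully that the only vertices whose membership in $\pre(S)$ or $\post(S)$ is not already fixed by the public forward structure are the two order-neighbours of $\min(S)$ and $\max(S)$, so that at most a constant number of private bits is ever consulted, independent of $|S|$. Everything else—the SCC-count formula, the diameter bound, and the bookkeeping of the parameters $\ell$ and $\bar{k}$—is routine once this localization is in hand.
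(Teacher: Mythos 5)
Your proposal is correct and follows essentially the same route as the paper: the same structural analysis of Reduction~\ref{red:scc2} (the paper's Lemma~\ref{lem:scc1}), the same four-bit simulation of each $\pre$/$\post$ operation via the two relevant backward edges at $\min(S)$ and $\max(S)$ (the paper's Lemma~\ref{lem:scc2}), and the same parameter choices ($\bar{k}=O(1)$ for the $\Omega(n)$ bound, free $\ell,\bar{k}$ for the $\Omega(|SCCs(G)|\cdot\dia)$ bound). Your restriction to unique-disjointness instances to pin down $|SCCs(G)|=\Theta(\ell)$ and $\dia=\bar{k}$ is a small but welcome tightening of the paper's more terse treatment of the second bound.
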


We first show that Reduction~\ref{red:scc2} is a valid reduction from the Set Disjointness problem to an SCC problem. The missing proofs are given in Section~\ref{sec:lb_SCCs_proofs}.

\begin{lemma}\label{lem:scc1}
  $f(x,y)=1$ iff the graph constructed in Reduction~\ref{red:scc2} has exactly $\ell$ SCCs.
\end{lemma}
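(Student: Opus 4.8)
The plan is to count the SCCs of $G$ exactly and show the count equals $\ell + |S_x\cap S_y|$, from which the claim is immediate since $f(x,y)=1$ iff $S_x\cap S_y=\emptyset$.

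First I would observe that no SCC can contain vertices from two different blocks $V_i$ and $V_{i'}$. By edge rule (2), all edges between distinct blocks point from the lower-indexed block to the higher-indexed one (from $V_i$ to $V_{i'}$ only when $i<i'$), and rule (3) adds only edges internal to a block; hence there is no directed path from a vertex of $V_{i'}$ back to $V_i$ when $i<i'$. Consequently every SCC lies within a single block, and the total number of SCCs of $G$ is the sum, over $i$, of the number of SCCs of the subgraph induced by $V_i$.

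Next I would analyze a single block $V_i=\{v_{i,0},\dots,v_{i,\bar k}\}$ through its second index $j$. The forward edges of rule (2) give $v_{i,j}\to v_{i,j'}$ for every $j<j'$, so the only question is when a higher vertex can reach a lower one, which can happen only via the backward edges of rule (3). The key structural fact — and the step I expect to be the main obstacle to state cleanly — is a cut argument: for any threshold $j$, the single backward edge $v_{i,j+1}\to v_{i,j}$ is the \emph{only} edge from $\{v_{i,j+1},\dots,v_{i,\bar k}\}$ into $\{v_{i,0},\dots,v_{i,j}\}$, since forward edges strictly increase the index and the remaining backward edges are single steps that do not cross this threshold. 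Therefore, if that backward edge is absent, no vertex above level $j$ can reach any vertex at or below level $j$; and if it is present, then $v_{i,j}$ and $v_{i,j+1}$ are mutually reachable (the forward edge one way, this backward edge the other).

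From this I would conclude that $v_{i,j}$ and $v_{i,j+1}$ lie in the same SCC iff the backward edge $v_{i,j+1}\to v_{i,j}$ is present, so the SCCs of $V_i$ are exactly its maximal runs of consecutive present backward edges, and their number equals $1$ plus the number of \emph{absent} backward edges in $V_i$. By rule (3) the edge $v_{i,j+1}\to v_{i,j}$ is absent iff $x_{i\bar k+j}=1$ and $y_{i\bar k+j}=1$, i.e.\ iff $i\bar k+j\in S_x\cap S_y$. Since the ranges $\{i\bar k,\dots,i\bar k+\bar k-1\}$ over $0\le i<\ell$ partition the universe $\{0,\dots,k-1\}$, summing over all blocks yields a total of $\ell+|S_x\cap S_y|$ SCCs. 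Hence $G$ has exactly $\ell$ SCCs iff $S_x\cap S_y=\emptyset$ iff $f(x,y)=1$, as required.
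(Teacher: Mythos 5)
Your proof is correct and follows essentially the same route as the paper's: both rest on the observations that no SCC can span two blocks $V_i$ and that the single backward edge $v_{i,j+1}\to v_{i,j}$ is the only way to cross the cut at position $j$ inside a block, so its absence splits the block. You merely carry the argument further to the exact count $\ell+|S_x\cap S_y|$, where the paper's terser proof only establishes ``exactly $\ell$'' versus ``at least $\ell+1$''.
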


The critical observation for the proof of Theorem~\ref{thm:scc2} is that any
algorithm that computes SCCs with $N$ many symbolic one-step operations 
implies the existence of a communication protocol for Set Disjointness 
that only requires $O(N)$ communication.

\begin{lemma}\label{lem:scc2}
 For any algorithm that computes SCCs with $N$ symbolic one-step operations 
 there is a communication protocol for Set Disjointness that requires $O(N)$ communication.
\end{lemma}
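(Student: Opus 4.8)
The plan is to turn any symbolic SCC algorithm into a communication protocol by having Alice and Bob simulate the algorithm in lockstep on the graph of Reduction~\ref{red:scc2}. Both players know the entire vertex set and all type-(2) forward edges (these are independent of the input), as well as the algorithm itself (and, in the randomized case, a shared random string). Only the type-(3) backward edges depend on the input, and crucially each such edge from $v_{i,j+1}$ to $v_{i,j}$ exists iff $x_{i\bar{k}+j}=0$ or $y_{i\bar{k}+j}=0$, so its existence is jointly determined by one bit known to Alice and one bit known to Bob. The players maintain identical memory states: every basic set operation is performed by both with no communication, and for every one-step operation they exchange a constant number of bits so that both learn the resulting set and stay synchronized. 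After the $N$-th one-step operation the algorithm has produced its output, which both players read off identically; they then decide whether the graph has exactly $\ell$ SCCs and invoke Lemma~\ref{lem:scc1} to output $f(x,y)$. If each one-step operation costs $O(1)$ bits, the total communication is $O(N)$, proving the lemma.

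The heart of the argument is showing that a single $\post(S)$ (and symmetrically $\pre(S)$) can be computed with four bits of communication for any $S\subseteq V$ that both players already share. Order the vertices by the total order underlying the forward edges, $v_{0,0}<v_{0,1}<\dots<v_{0,\bar{k}}<v_{1,0}<\dots$; then the forward-successors of $S$ are precisely the vertices strictly greater than $\min S$ in this order, which both players compute with no communication. The backward edges only connect vertices that are consecutive in the order and lie in the same block, so the only successors a backward edge can contribute are at most one step below an element of $S$. Since every forward-successor is already accounted for, the only backward-successors that can be genuinely new are those lying at or below $\min S$, and the consecutivity constraint forces these to be at most two vertices: $\min S$ itself (reached from its immediate successor, if that successor is in $S$ and in the same block) and the immediate predecessor of $\min S$ (reached from $\min S$, if that predecessor is in the same block). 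Each of these two candidate edges depends on a single coordinate, so Alice sends her two bits and Bob sends his two bits — four bits total — after which both players know exactly which of the two vertices to add and hence the full set $\post(S)$. The analysis for $\pre(S)$ is identical with $\min S$ replaced by $\max S$ and predecessors/successors swapped.

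The main obstacle is precisely this combinatorial claim that the dense forward edges absorb all but $O(1)$ of the input-dependent backward edges in each operation; I would state and verify it as a short separate claim, paying attention to the block boundaries, since when $\min S$ is the first vertex $v_{i,0}$ of a block its predecessor lies in block $i-1$ and carries no backward edge, reducing the number of relevant coordinates. Once this claim is in place, the remaining steps are routine: the shared random string guarantees the two simulations remain identical and that the bounded-error guarantee of the algorithm transfers verbatim to the protocol, and summing the $O(1)$ cost over the $N$ one-step operations yields the $O(N)$ bound. Combining this with Theorem~\ref{th:lbdisj} in the proof of Theorem~\ref{thm:scc2} will then force $N=\Omega(k)=\Omega(n)$.
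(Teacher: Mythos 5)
Your proposal is correct and follows essentially the same argument as the paper: lockstep simulation of the algorithm by both players, free set operations, and the observation that the dense definite forward edges leave at most the two input-dependent edges incident to $\min S$ (resp.\ $\max S$) relevant for each $\post$ (resp.\ $\pre$) operation, costing four bits each. Your extra remark about block boundaries is a correct refinement the paper leaves implicit.
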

\begin{proof}
 In our communication protocol both Alice and Bob consider the graph from Reduction~\ref{red:scc2}.
 We call edges of the graph that are present independently of $x$ and $y$ 
 \emph{definite} edges and 
 edges whose presence depends on $x$ and $y$ \emph{possible} edges.
 
 Both Alice and Bob  execute the given symbolic algorithm to decide whether the graph has $\ell$ SCCs (cf.~Lemma~\ref{lem:scc1}). 
 As both know all the vertices, they can execute set operations without communicating.
 Communication is only needed when executing symbolic one-step operations, 
 since for these some of the possible edges might affect the outcome.
 
 We next argue that each symbolic one-step operations can be executed with a constant number of bits of communication.
 First notice that as both Alice and Bob execute the same algorithm simultaneously,
 they both know the input set to an operation and 
 they only need communication about the possible edges that can change the output.
 Both can independently identify these possible edges and 
 they can decide whether such an edge exists by sending one bit each.
 We next argue that for each one-step operation we need to consider at most two possible edges.
 For this we consider the vertices $v_{i,j}$ in their linear ordering given 
 by $i \cdot (\bar{k}+1) + j$, e.g., $v_{0,0}=v_0$ and $v_{\ell-1,\bar{k}} = v_{k+\ell-1}$.
 \smallskip
 
	$\post$ operation: Let $S$ be the input set and let $v_{\min}$ the vertex 
	with the \emph{minimum} index in~$S$.
	Then we have $\{v_{\min+1}, \dots v_{k+\ell-1}\} \subseteq \post(S)$ and 
	potentially also $v_{\min}$ and $v_{\min-1}$
	can be in $\post(S)$, but no other vertices.
	That is, we have $\{v_{\min+1}, \dots v_{k + \ell-1}\} \subseteq \post(S) \subseteq \{v_{\min-1}, \dots v_{k + \ell-1}\}$.
	To decide whether $v_{\min}$ is in $\post(S)$, we first check whether $v_{\min+1} \in S$
	and if so we check whether the edge $(v_{\min+1},v_{\min})$ is present.
	To decide $v_{\min-1} \in \post(S)$, we check whether the edge $(v_{\min},v_{\min-1})$ is present.
	That is, to compute $\post(S)$ we only access two possible edges.\smallskip
    
	$\pre$ operation: Let $S$ be the input set and let $v_{\max}$ the vertex with the \emph{maximum} index in~$S$.
	Then we have $\{v_{0}, \dots v_{\max-1}\} \subseteq \pre(S)$ and potentially 
	also $v_{\max}$ and $v_{\max+1}$
	can be in $\pre(S)$, but no other vertices.
	That is, we have $\{v_{0}, \dots v_{\max-1}\} \subseteq \pre(S) \subseteq \{v_{0}, \dots v_{\max+1}\}$.	
	To decide whether $v_{\max}$ is in $\pre(S)$, we first check whether $v_{\max-1} \in S$
	and if so we check whether the edge $(v_{\max},v_{\max-1})$ is present.
	To decide if $v_{\max+1} \in \pre(S)$, we check whether the edge $(v_{\max+1},v_{\max})$ is present.
	That is, we can compute $\pre(S)$ with accessing only two possible edges.\smallskip

 By the above we have that a symbolic algorithm with $N$ one-step operations
 gives rise to a communication protocol for 
 Set Disjointness with $O(N)$ bits of communication.
\end{proof}

  By Lemma~\ref{lem:scc2} we have that any algorithm computing SCCs with $o(n)$
   symbolic one-step operations would contradict Theorem~\ref{th:lbdisj}.
  Now inspecting the graph of Reduction~\ref{red:scc2}, we observe that its diameter 
  is equal to $\bar{k}$, which leads to the following lower bounds.
  For $\ell=k/2$ the graph has diameter $2$ and thus the $\Omega(n)$ holds even 
  for graphs of constant diameter.
  On the other side, for $\ell=1$ disjoint sets $S_x$ and $S_y$ 
  correspond to strongly connected graphs 
  and thus the $\Omega(n)$ lower bounds also holds for graphs with a bounded number of SCCs,
  i.e., there are no $O(|SCCs(G)|)$ symbolic algorithms.   
  Finally for $\ell=\sqrt{k}$ we obtain a $\Omega(|SCCs(G)|\cdot \dia)$ lower bound.

\begin{Remark}
   By the above no algorithm can compute SCCs with $f(\dia)\cdot n^{o(1)}$ or 
   $f(|SCCs(G)|)\cdot n^{o(1)}$ 
   symbolic one-step operations for any function $f$.
   In contrast, if we consider both parameters simultaneously, there is an $O(|SCCs(G)| \cdot \dia)$ symbolic algorithm.
\end{Remark}

The above lower bounds for computing SCCs match the $O(\min(n, \dia \cdot |SCCs(G)|))$ bound by the algorithm of
Gentilini et al.~\cite{GentiliniPP08}.
One way to further improve the algorithm is to not consider the diameter of the 
whole graph but the diameter $\dia_C$ of each single SCC~$C$. 
In that direction the previous reduction already gives us an $\Omega(\sum_{C \in SCCs(G)} (\dia_C))$ lower bound
and we will next improve it to an $\Omega(\sum_{C \in SCCs(G)} (\dia_C+1))$ lower bound (i.e., it is $\Omega(n)$ even if $\sum_{C \in SCCs(G)} (\dia_C) \in O(1)$).
These two bounds differ if the graph has a large number of trivial SCCs.
Thus we next give a reduction that constructs a graph that has only trivial SCCs
if $S_x$ and $S_y$ are disjoint.

\begin{reduction}\label{red:scc3}
  Given an instance $(x,y)$ of Set Disjointness,
  we construct a directed graph $G = (V, E)$ with $n = k+1$ vertices and $O(n^2)$ edges as follows.
    \upbr{1} The vertices are given by $V=\{v_0, v_1, \dots ,v_k\}$.
    \upbr{2} There is an edge from $v_i$ to $v_j$ for $i<j$.
    \upbr{3} For $0 \leq j \leq k-1$ there is an edge from $v_{j+1}$ to $v_j$ iff $x_j=1$ and $y_j=1$. 
\end{reduction}

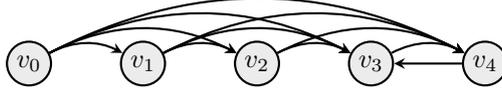
\begin{figure}[t]
 \centering
  \begin{tikzpicture}[>=stealth, yscale=0.9]
  \small
  		\path 	(0,0)node[player](v0){$v_0$}
			++(1.5,0)node[player](v1){$v_1$}
			++(1.5,0)node[player](v2){$v_2$}
			++(1.5,0)node[player](v3){$v_3$}
			++(1.5,0)node[player](v4){$v_4$}
			;
		\path [->, bend left, thick]
			(v0) edge (v1)
			(v0) edge (v2)
			(v0) edge (v3)
			(v0) edge (v4)
			(v1) edge (v2)
			(v1) edge (v3)
			(v1) edge (v4)
			(v2) edge (v3)
			(v2) edge (v4)
			(v3) edge (v4)
			;
		\path [->, thick]
			(v4) edge (v3)
			;
  \end{tikzpicture}
  \caption{Reduction~\ref{red:scc3} for $k=4, \ell=2, S_x = \{2, 3\}, S_y = \{0, 1, 3\}$}
  \label{fig:scc3}
\end{figure}

\begin{theorem}\label{thm:scc3}
  Any \upbr{probabilistic bounded error or deterministic} symbolic algorithm
  that computes the SCCs needs $\Omega(|SCCs(G)| + \sum_{C \in SCCs(G)} \dia_C)$ symbolic one-step operations.
\end{theorem}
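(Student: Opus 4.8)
The plan is to mirror the two-step template of Lemma~\ref{lem:scc1} and Lemma~\ref{lem:scc2}, now using Reduction~\ref{red:scc3} in place of Reduction~\ref{red:scc2}, and to exploit a structural identity that pins the parameter $|SCCs(G)| + \sum_{C}\dia_C$ to exactly $n$ on every instance of the reduction. First I would determine the SCC structure of the graph. The only index-decreasing edges are the backward edges $v_{m+1}\to v_m$, each decreasing the index by exactly one, while the (definite) forward edges only increase it. Hence, by a parity/crossing argument across the boundary between indices $m$ and $m+1$, any path from $v_j$ down to $v_i$ with $i<j$ must traverse the edge $v_{m+1}\to v_m$ for every $i\le m<j$. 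Therefore $v_i$ and $v_j$ lie in the same SCC iff every backward edge $v_{m+1}\to v_m$ with $i\le m<j$ is present, i.e.\ iff $\{i,\dots,j-1\}\subseteq S_x\cap S_y$ (by construction a backward edge at index $m$ is present iff $x_m=y_m=1$, i.e.\ $m\in S_x\cap S_y$). Thus the SCCs are exactly the maximal intervals of consecutive indices whose connecting backward edges are all present, giving the analogue of Lemma~\ref{lem:scc1}: $|SCCs(G)| = (k+1)-|S_x\cap S_y|$, and in particular $f(x,y)=1$ iff the graph has exactly $n=k+1$ (all trivial) SCCs.

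The key new ingredient is the diameter of each SCC. For an SCC equal to the interval of vertices $v_a,\dots,v_b$, the induced subgraph contains all forward edges and the full backward chain $v_b\to v_{b-1}\to\dots\to v_a$; since any downward move decreases the index by exactly one, the largest finite distance is $\dia_C=b-a$, so $\dia_C+1=|C|$. Because the SCCs partition $V$, summing yields $|SCCs(G)| + \sum_{C}\dia_C = \sum_C(\dia_C+1) = \sum_C|C| = n$ for \emph{every} instance of Reduction~\ref{red:scc3}. This identity is exactly what allows a plain $\Omega(n)$ lower bound to be restated as the claimed $\Omega(|SCCs(G)| + \sum_C\dia_C)$ bound, and it handles the regime of many trivial SCCs that Reduction~\ref{red:scc2} missed (there $\sum_C\dia_C$ can be $O(1)$ while $|SCCs(G)|=n$).

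Next I would reprove the simulation of Lemma~\ref{lem:scc2} for this graph. Ordering the vertices as $v_0,\dots,v_k$ and letting $\min$ (resp.\ $\max$) be the smallest (resp.\ largest) index in the input set $S$, one checks that $\post(S)$ equals the definite tail $\{v_{\min+1},\dots,v_k\}$ up to the two candidate vertices $v_{\min}$ and $v_{\min-1}$, each decided by inspecting a single backward edge; symmetrically $\pre(S)$ equals $\{v_0,\dots,v_{\max-1}\}$ up to $v_{\max}$ and $v_{\max+1}$. Since Alice knows all $x_m$ and Bob all $y_m$, the presence of any backward edge is settled with two bits, so each one-step operation costs $O(1)$ communication; this is verbatim the argument of Lemma~\ref{lem:scc2} with the edge condition $x_m=y_m=1$. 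Consequently an SCC algorithm using $N$ one-step operations yields a Set Disjointness protocol using $O(N)$ bits: Alice and Bob run the algorithm, count the resulting SCCs, and declare the sets disjoint iff the count equals $k+1$.

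Finally I would combine with Theorem~\ref{th:lbdisj}. A (deterministic or bounded-error) symbolic SCC algorithm using $o(n)=o(k)$ one-step operations on all graphs of Reduction~\ref{red:scc3} would give an $o(k)$ Set Disjointness protocol, a contradiction; hence some instance forces $\Omega(n)$ one-step operations. Since $|SCCs(G)| + \sum_C\dia_C = n$ on every such instance and $k$ (hence $n$) is arbitrary, this is precisely the claimed $\Omega(|SCCs(G)| + \sum_{C}\dia_C)$ lower bound. The step I expect to be the main obstacle is the structural one — rigorously establishing via the crossing argument that the SCCs are exactly the maximal runs of present backward edges, so that $\dia_C+1=|C|$ and the parameter collapses to $n$; the communication simulation is routine once Lemma~\ref{lem:scc2} is available.
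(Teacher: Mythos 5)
Your proof is correct, and it reaches the stated bound by a route that differs from the paper's in one substantive way. The paper obtains the two terms of the bound from two \emph{separate} reductions: it invokes Theorem~\ref{thm:scc2} (via Reduction~\ref{red:scc2}) to get $\Omega(\sum_{C}\dia_C)$, uses Reduction~\ref{red:scc3} only to get $\Omega(|SCCs(G)|)$ on the instances with $n$ trivial SCCs, and then combines the two bounds. You instead use Reduction~\ref{red:scc3} alone and prove the structural identity $|SCCs(G)|+\sum_{C}\dia_C=n$ on \emph{every} instance of that reduction (via the observation that SCCs are maximal intervals of present backward edges, each of diameter $|C|-1$), so that the single $\Omega(n)$ communication lower bound immediately reads as $\Omega(|SCCs(G)|+\sum_C\dia_C)$. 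This buys you something real: the worst-case hard instances of Set Disjointness need not be the disjoint ones, so arguments that evaluate the parameter only on particular instances (as the paper's phrasing ``the instances where $x$ and $y$ are disjoint have $n$ SCCs and $\sum_C\dia_C=0$'' does, and as its claim that $\sum_C\dia_C\in\Theta(n)$ for Reduction~\ref{red:scc2} does --- that holds only when the sets are disjoint, since in general $\sum_C\dia_C=n-|SCCs(G)|$ there) require an extra word to be airtight, whereas your identity holds uniformly and sidesteps the issue. The communication simulation and the appeal to Theorem~\ref{th:lbdisj} are verbatim the paper's; the interval/crossing analysis of the SCC structure that you flag as the main obstacle is indeed the only new work, and your argument for it (each backward edge decreases the index by exactly one, so a downward path must cross every intermediate boundary) is sound.
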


\subsection{Lower Bounds for Liveness, Reachability, and Safety Objectives}\label{sec:lb_objectives}
In this section we extend our lower bounds for SCC computation to Liveness, Reachability, and Safety Objectives on graphs.\smallskip

\noindent\emph{Lower Bounds for Reachability.} 
The lower bounds for Reachability are an immediate consequence from our lower bounds for SCC computation
in Theorem~\ref{thm:scc2}. When setting $\ell=1$ in Reduction~\ref{red:scc2} then the vertex $v_{0,0}$ is reachable from all vertices iff
the graph is strongly connected iff the sets $S_x$ and $S_y$ are disjoint.

\begin{theorem}\label{thm:reach}
  Any \upbr{probabilistic bounded error or deterministic} symbolic algorithm
  that solves Reachability in graphs with diameter $\dia$ requires $\Omega(\dia)$ symbolic one-step operations.
\end{theorem}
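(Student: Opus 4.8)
The plan is to derive the reachability lower bound directly from the SCC construction, by specializing Reduction~\ref{red:scc2} to the case $\ell = 1$ (so $\bar{k} = k$ and $n = k+1$) and turning the strong-connectivity test into a single-source reachability query. The resulting graph $G$ has vertices $v_{0,0}, \dots, v_{0,k}$, a complete set of forward edges $v_{0,j} \to v_{0,j'}$ for $j < j'$, and backward edges $v_{0,j+1} \to v_{0,j}$ that are present exactly when $x_j = 0$ or $y_j = 0$. I would fix the start vertex $s = v_{0,k}$ (the maximum-index vertex) and the target set $\target = \{v_{0,0}\}$, and claim that the reachability objective from $s$ is satisfied if and only if $f(x,y) = 1$.

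First I would establish this correctness claim. The only edges that decrease the vertex index are the consecutive backward edges, so the unique way to cross from $\{v_{0,j^*+1}, \dots, v_{0,k}\}$ into $\{v_{0,0}, \dots, v_{0,j^*}\}$ is the edge $v_{0,j^*+1} \to v_{0,j^*}$. If $S_x \cap S_y = \emptyset$, all backward edges are present and the descending walk $v_{0,k} \to v_{0,k-1} \to \dots \to v_{0,0}$ witnesses reachability (and since $G$ is then strongly connected, this finite path extends to an infinite one, matching the infinite-path formulation of the objective); conversely, if some index $j^*$ has $x_{j^*} = y_{j^*} = 1$, the corresponding crossing edge is absent, so $v_{0,k}$ cannot reach $v_{0,0}$ at all. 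This is exactly the $\ell = 1$ specialization of Lemma~\ref{lem:scc1}, where having a single SCC is equivalent to every vertex reaching $v_{0,0}$, so I can largely invoke it.

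Next I would transfer the communication argument. The simulation in the proof of Lemma~\ref{lem:scc2} depends only on the edge structure of the graph from Reduction~\ref{red:scc2}, not on what the symbolic algorithm computes: for any input set, a $\pre$ or $\post$ operation is determined up to at most two \emph{possible} edges, which Alice and Bob (holding $x$ and $y$, and both knowing the fixed $s$ and $\target$) can resolve with $O(1)$ bits. Hence a symbolic algorithm deciding reachability with $N$ one-step operations yields an $O(N)$-bit protocol computing $f(x,y)$. By Theorem~\ref{th:lbdisj} this forces $N = \Omega(k)$. Since in the disjoint instances the distance from $v_{0,k}$ to $v_{0,0}$ is exactly $k$ — backward edges move one index at a time and no shortcut exists — the diameter of $G$ equals $k = \Theta(n)$, and therefore $N = \Omega(\dia)$, as claimed.

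The main obstacle, and the one genuinely new point beyond the SCC proof, is reducing a global condition (``all vertices reach $v_{0,0}$'', i.e.\ strong connectivity) to a single reachability query from one fixed source. Choosing $s$ to be the top vertex $v_{0,k}$ is what makes this work: reaching $v_{0,0}$ from $v_{0,k}$ requires traversing every backward edge $v_{0,j+1} \to v_{0,j}$, so a single query tests disjointness over the entire universe, whereas a start vertex of index $m < k$ would only test the bits $0, \dots, m-1$ and the equivalence with disjointness would break. A minor secondary point to check is the diameter bookkeeping, namely that the hard (disjoint) instances have diameter exactly $k$, so that the $\Omega(k)$ communication bound reads as the desired $\Omega(\dia)$ and matches the trivial $O(\dia)$ upper bound.
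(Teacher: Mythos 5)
Your proposal is correct and follows essentially the same route as the paper: it specializes Reduction~\ref{red:scc2} to $\ell=1$, observes that $v_k$ reaches $v_0$ iff the graph is strongly connected iff $S_x\cap S_y=\emptyset$, reuses the $O(1)$-bits-per-one-step simulation of Lemma~\ref{lem:scc2} to get $\Omega(k)$ operations, and notes that the hard (disjoint) instances have diameter exactly $k$. The paper's proof is just a terser statement of the same argument, so no further comparison is needed.
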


\smallskip\noindent\emph{Lower Bounds for Liveness.} 
To show an $\Omega(n)$ lower bound for Liveness objectives which holds even 
for graphs of bounded diameter, we introduce another reduction. 
This reduction is again from the Set Disjointness Problem and also constructs
a graph such that $\pre$ and $\post$ operations can be executed with a constant number of bits of communication 
between Alice and Bob.

\begin{reduction}\label{red:buchi}
  Given an instance $(x,y)$ of Set Disjointness,
  we construct a directed graph $G = (V, E)$   with $n = k+1$ vertices and $O(n^2)$ edges as follows.
    \upbr{1} The vertices are given by $V=\{v_0, v_1, \dots ,v_k\}$.
    \upbr{2} There is an edge from $v_i$ to $v_j$ for $i<j$ and there is a loop edge $(v_k,v_k)$.
    \upbr{3} For $0 \leq j \leq k-1$ there is a loop edge $(v_j, v_j)$ iff $x_j=1$ and $y_j=1$. 
\end{reduction}

Notice that the graph in Reduction~\ref{red:buchi} has diameter $\dia=1$ 
and thus allows to show the lower bounds stated in Theorem~\ref{thm:buchi}
when considering $\target=\{v_0, v_1, \dots, v_{k-1}\}$, 
with the exception of \upbr{2} which is by Reduction~\ref{red:scc2} and $T=\{v_0\}$.

\begin{figure}[bth]
 \centering
  \begin{tikzpicture}[>=stealth]
  \small
  		\path 	(0,0)node[player](v0){$v_0$}
			++(1.5,0)node[player](v1){$v_1$}
			++(1.5,0)node[player](v2){$v_2$}
			++(1.5,0)node[player](v3){$v_3$}
			++(1.5,0)node[player](v4){$v_4$}
			;
		\path [->, bend left, thick]
			(v0) edge (v1)
			(v0) edge (v2)
			(v0) edge (v3)
			(v0) edge (v4)
			(v1) edge (v2)
			(v1) edge (v3)
			(v1) edge (v4)
			(v2) edge (v3)
			(v2) edge (v4)
			(v3) edge (v4)
			;
		\path [->, loop, out=-45, in=-135, looseness=4, thick]
			(v4) edge (v4)
			(v3) edge (v3)
			;
  \end{tikzpicture}
  \caption{Reduction~\ref{red:buchi} for $k=4, \ell=2, S_x=\{2, 3\}, S_y=\{0, 1, 3\}$}
  \label{fig:buchi}
\end{figure}
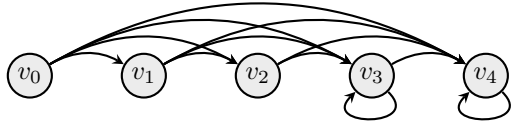

\begin{theorem}\label{thm:buchi}
  For any \upbr{probabilistic bounded error or deterministic} symbolic algorithm
  that solves $\Buchi{\target}$
  the following lower bounds on the 
  required number of symbolic one-step operations hold:
\upbr{1}~$\Omega(n)$ even for instances with constant $\dia$;
\upbr{2}~$\Omega(\dia)$ even for instances with $|\target|=1$;
\upbr{3}~$\Omega(|\target|)$ even for instances with constant $\dia$;
\upbr{4}~$\Omega(|\target| + \dia)$; and
\upbr{5}~$\Omega(|SCCs(G)| + \sum_{C \in SCCs(G)} \dia_C)$.
\end{theorem}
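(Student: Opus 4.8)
The plan is to prove every part by a reduction from Set Disjointness, exactly in the spirit of Lemma~\ref{lem:scc2}. For each part I build a graph on which the answer to $\Buchi{\target}$ equals the value of a Set-Disjointness instance, and on which each $\pre(S)$ or $\post(S)$ differs from a player-independent ``definite'' set by only a constant number of \emph{possible} edges (the ones depending on $x,y$). Alice and Bob run the given symbolic algorithm in lockstep, exchanging $O(1)$ bits per one-step operation to resolve those possible edges; an algorithm using $N$ one-step operations then yields a protocol using $O(N)$ bits, and Theorem~\ref{th:lbdisj} forces $N=\Omega(k)$. The single structural fact driving all the bookkeeping is that within each gadget the forward edges form a transitive tournament along a linear order, so the only cycles are the possible self-loops (or the possible chain back-edges); this makes both the correctness statement and the ``$O(1)$ possible edges'' count easy to read off.

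Parts \upbr{1} and \upbr{3} use Reduction~\ref{red:buchi} with start $v_0$ and $\target=\{v_0,\dots,v_{k-1}\}$. Because the forward edges are acyclic, any infinite path eventually stays on a self-loop, and the guaranteed loop at $v_k$ lies outside $\target$, so $\Buchi{\target}$ is satisfiable iff some $v_j$ with $j<k$ carries a self-loop, i.e.\ iff $S_x\cap S_y\neq\emptyset$. In $\post(S)$ every vertex above the minimum index of $S$ is already a forward-successor, so only the self-loop at that minimum vertex is in doubt (and symmetrically the maximum vertex for $\pre(S)$); hence one possible edge and $O(1)$ bits. With $\dia=1$ and $|\target|=k=\Theta(n)$ this yields $\Omega(n)$ and $\Omega(|\target|)$ at constant diameter. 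Part \upbr{2} instead reuses Reduction~\ref{red:scc2} with $\ell=1$, so that $\pre,\post$ cost $O(1)$ bits by Lemma~\ref{lem:scc2}, taking $\target=\{v_0\}$ and start $v_k$: reaching $v_0$ from $v_k$ needs the entire descending chain of back-edges, present iff $S_x\cap S_y=\emptyset$, and in that case the back-edge $(v_1,v_0)$ closes a cycle through $v_0$. Thus $\Buchi{\target}$ is satisfiable iff the sets are disjoint, while $|\target|=1$ and $\dia=\bar k=\Theta(n)$, giving $\Omega(\dia)$ for singleton targets.

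For \upbr{4} and \upbr{5} I reduce from one instance split into a first block of $d$ coordinates and a last block of $t$ coordinates, gluing the two gadgets together: a descending chain $u_0,\dots,u_d$ on the first block as in Reduction~\ref{red:scc2}, a self-loop gadget $w_0,\dots,w_{t-1}$ on the last block as in Reduction~\ref{red:buchi}, and a single bridge edge $u_0\to w_0$, with start $u_d$ and $\target=\{w_0,\dots,w_{t-1}\}$. Reaching the gadget forces a complete descent of the chain (first block disjoint) and then a self-loop among the targets (last block \emph{not} disjoint), so $\Buchi{\target}$ computes $\mathrm{DISJ}(\text{first})\wedge\neg\mathrm{DISJ}(\text{last})$. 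Fixing the last block to an intersecting pair recovers Set Disjointness on $d$ elements, and fixing the first block to the empty set recovers the negation of Set Disjointness on $t$ elements, so this function needs $\Omega(d+t)$ bits. On the disjoint-first instances $|\target|=t$, $\dia=\Theta(d)$, and the SCCs are one chain-component of diameter $d$ together with $t$ trivial components, so $|SCCs(G)|+\sum_{C}\dia_C=\Theta(d+t)$; the resulting $\Omega(d+t)$ lower bound is therefore simultaneously $\Omega(|\target|+\dia)$ and $\Omega(|SCCs(G)|+\sum_C \dia_C)$, and tuning $d,t$ realizes every value of these parameters.

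The step I expect to be the real obstacle is the $O(1)$-communication bound for the glued graph of \upbr{4}/\upbr{5}: since $S$ may contain many gadget vertices, a careless evaluation of $\post(S)$ would have to test a self-loop at each of them. This is precisely why the gadget must carry the full forward tournament of Reduction~\ref{red:buchi} rather than isolated self-loops: every gadget vertex strictly above the minimum index of $S$ is already a forward-successor, so its self-loop cannot change the output and only the single minimal vertex need be tested, and dually for $\pre(S)$. Checking that this redundancy is not disturbed by the bridge edge or by the interaction between the two blocks, and that the diameter and SCC parameters take the stated values on the hard instances, is the one place requiring genuine care rather than a direct transcription of Lemmas~\ref{lem:scc1}--\ref{lem:scc2}; the communication lower bound for the conjunction, and all the single-gadget parts, then follow mechanically.
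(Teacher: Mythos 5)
Your proposal is correct, and for parts \upbr{1}--\upbr{3} it coincides with the paper's proof: \upbr{1} and \upbr{3} via Reduction~\ref{red:buchi} with $\target=\{v_0,\dots,v_{k-1}\}$ (Büchi satisfiable iff some target vertex carries a self-loop iff $S_x\cap S_y\neq\emptyset$, one possible edge per one-step operation), and \upbr{2} via Reduction~\ref{red:scc2} with $\ell=1$ and $\target=\{v_0\}$, reusing the reachability argument. Where you genuinely diverge is in \upbr{4} and \upbr{5}. The paper obtains these essentially for free: $\Omega(|\target|+\dia)$ follows by combining the $\Omega(\dia)$ bound on the family with $|\target|=O(1)$ and the $\Omega(|\target|)$ bound on the family with $\dia=O(1)$ (an algorithm running in $o(|\target|+\dia)$ on all inputs would violate one of the two), and \upbr{5} is the analogous combination of the $\Omega(|SCCs(G)|)$ and $\Omega(\sum_C \dia_C)$ bounds from the two reductions. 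You instead build a single glued instance family computing $\mathrm{DISJ}$ on one block conjoined with $\neg\mathrm{DISJ}$ on the other, and lower-bound its communication complexity by restriction. This is more work --- you must re-verify the $O(1)$-bits-per-operation property on the combined graph, which you correctly identify as the delicate point, and you should make explicit that the $w$-gadget retains a guaranteed loop outside $\target$ so that every vertex has an outgoing edge --- but it buys a slightly stronger statement: a single parameterized family on which $|\target|$ and $\dia$ (respectively $|SCCs(G)|$ and $\sum_C\dia_C$) are \emph{simultaneously} large, rather than a bound assembled from two disjoint worst-case families. Both arguments are sound; the paper's is the more economical, yours the more uniform.
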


\smallskip\noindent\emph{Lower Bounds for co-Liveness and Safety.} 
The following lower bounds are by Reduction~\ref{red:buchi} (and variations of it) and  
the set of safe vertices $\target=\{v_0, v_1, \dots, v_{k-1}\}$.

\begin{theorem}\label{thm:coBuchi}
  For any \upbr{probabilistic bounded error or deterministic} symbolic algorithm
  that solves $\Safe{T}$ or $\coBuchi{T}$ 
  the following lower bounds on the required
  number of symbolic one-step operations hold:  
\upbr{1}~$\Omega(n)$ even for constant diameter graphs;
\upbr{2}~$\Omega(|\target|)$ even for constant diameter graphs; and
\upbr{3}~$\omega(\sum_{C \in SCCs(G)} (\dia_C + 1))$ even for constant diameter graphs.
\end{theorem}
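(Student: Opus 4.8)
The plan is to reduce Set Disjointness to $\Safe{\target}$ (and to $\coBuchi{\target}$) using Reduction~\ref{red:buchi} with target set $\target=\{v_0,\dots,v_{k-1}\}$ and start vertex $s=v_0$, following the same communication-complexity template as Lemma~\ref{lem:scc2}. The first step is to verify correctness of the reduction: the subgraph induced by $\target$ consists only of the forward edges $v_i\to v_j$ (for $i<j<k$) together with the loops $(v_j,v_j)$ that are present iff $j\in S_x\cap S_y$. Since the forward edges form a DAG, an infinite path that stays inside $\target$ exists iff some vertex of $\target$ carries a self-loop, i.e.\ iff $S_x\cap S_y\neq\emptyset$, and any such loop is reachable from $v_0$ within $\target$ via forward edges. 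Hence from $s=v_0$ both $\Safe{\target}$ and $\coBuchi{\target}$ are satisfiable iff $f(x,y)=0$, so a symbolic algorithm deciding either objective also decides Set Disjointness.

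The second step is the communication analysis, mirroring Lemma~\ref{lem:scc2}. The only \emph{possible} edges are the loops on $v_0,\dots,v_{k-1}$, so Alice and Bob can compute every one-step operation with $O(1)$ bits. For $\post(S)$, if $v_{\min}$ is the minimum-index vertex of $S$ then $\{v_{\min+1},\dots,v_k\}\subseteq\post(S)$ is definite and the only uncertain vertex is $v_{\min}$, whose loop the players test with one bit each; symmetrically, for $\pre(S)$ with maximum-index vertex $v_{\max}$ the set $\{v_0,\dots,v_{\max-1}\}$ is definite and only the loop on $v_{\max}$ is uncertain. Thus a symbolic algorithm using $N$ one-step operations yields an $O(N)$-bit protocol (with public coins in the randomized case), and Theorem~\ref{th:lbdisj} gives $N=\Omega(k)=\Omega(n)$. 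As the graph of Reduction~\ref{red:buchi} has diameter $1$ and $|\target|=k=\Theta(n)$, this already establishes parts~\upbr{1} and~\upbr{2}.

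For part~\upbr{3} I would use a variation that keeps $\sum_{C\in SCCs(G)}(\dia_C+1)$ constant while preserving the $\Omega(n)$ bound. Concretely, add the definite back-edges $v_k\to v_j$ for all $j$. The resulting graph is strongly connected with diameter $2$ (any $v_i$ reaches $v_k$ by a forward edge, and $v_k$ reaches any $v_j$ directly), so it forms a single SCC and $\sum_{C\in SCCs(G)}(\dia_C+1)=O(1)$. Crucially the new edges emanate from $v_k\notin\target$, so the subgraph induced by $\target$ is unchanged and the equivalence ``satisfiable iff $S_x\cap S_y\neq\emptyset$'' still holds for both objectives; moreover the set of possible edges is still exactly the loops on $v_0,\dots,v_{k-1}$, so the per-operation communication stays $O(1)$ (for $\post$, $v_k\in S$ now forces $\post(S)=V$, and otherwise only the loop on $v_{\min}$ is uncertain; for $\pre$, $v_k$ is added whenever $S\neq\emptyset$ and only the loop on $v_{\max}$ is uncertain). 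Hence the lower bound is again $\Omega(n)$, which, since $\sum_{C\in SCCs(G)}(\dia_C+1)=O(1)$, is $\omega(\sum_{C\in SCCs(G)}(\dia_C+1))$, as required.

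The main obstacle is the design in part~\upbr{3}: one must shrink $\sum_{C\in SCCs(G)}(\dia_C+1)$ to $O(1)$ (so that $\Omega(n)$ becomes a genuine $\omega$-separation) without either creating a cycle inside $\target$---which would make the objective satisfiable regardless of disjointness and destroy the reduction---or increasing the number of possible edges touched by a single $\pre$/$\post$ beyond $O(1)$. Routing all added connectivity through the single out-of-target hub $v_k$ is what reconciles these competing requirements; re-verifying the two-bit bound for the modified one-step operations is then the only remaining calculation.
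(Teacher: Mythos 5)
Your proposal is correct and follows essentially the same route as the paper: Reduction~\ref{red:buchi} with $\target=\{v_0,\dots,v_{k-1}\}$ and the constant-communication simulation of one-step operations for parts~(1) and~(2), and for part~(3) a strongly connected, constant-diameter modification that leaves the induced subgraph on $\target$ (and hence the reduction) intact. The only difference is cosmetic: the paper replaces the loop $(v_k,v_k)$ by the single edge $(v_k,v_0)$, whereas you add definite edges from $v_k$ to all vertices; both yield $\sum_{C\in SCCs(G)}(\dia_C+1)=O(1)$ while preserving the $O(1)$-bit per-operation communication bound.
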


Notice that the parameters diameter, number of SCCs, or diameters of SCCs do not help in the case of Safety. 
This is because every graph can be reduced to a strongly connected graph with diameter $2$
without changing the winning set as follows:
Add a new vertex $v$ that has an edge to and an edge from
all original vertices but do not add $v$ to the safe vertices~$T$.

We complete this section with a $\Omega(\dia)$ lower bound for $\coBuchi{T}$ which is by a variant of Reduction~\ref{red:scc2}.

\begin{proposition}\label{prop:coBuchi}
  Any \upbr{probabilistic bounded error or deterministic} symbolic algorithm
  that solves $\coBuchi{T}$ on graphs with diameter $\dia$ needs $\Omega(\dia)$
  symbolic one-step operations.
\end{proposition}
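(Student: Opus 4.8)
The plan is to prove the bound by a variant of Reduction~\ref{red:scc2} with $\ell = 1$, augmented by a single self-loop, so that the co-liveness question collapses to a reachability question of the type behind Theorem~\ref{thm:reach}. Concretely, given a Set Disjointness instance $(x,y)$ with universe of size $k$, I would take the graph on vertices $v_{0,0}, \dots, v_{0,k}$ built as in Reduction~\ref{red:scc2} for $\ell=1$ (complete forward edges $v_{0,j} \to v_{0,j'}$ for $j < j'$, and a backward edge $v_{0,j+1} \to v_{0,j}$ present iff $x_j = 0$ or $y_j = 0$), and additionally insert the self-loop $(v_{0,0},v_{0,0})$, which is a \emph{definite} edge. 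I then set the target set to the single vertex $T = \{v_{0,0}\}$ and take $v_{0,k}$ as the start vertex. This graph has $n = k+1$ vertices; in the disjoint case it is strongly connected with diameter $\bar{k} = k$ (forward steps cover any index increase in one step, while backward steps decrease the index by exactly one), and in every case its largest finite distance is at most $k$, so we may regard $\dia = k$.

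The heart of the correctness argument is the chain of equivalences: $\coBuchi{T}$ is satisfiable from $v_{0,k}$ iff $v_{0,0}$ is reachable from $v_{0,k}$ iff all backward edges are present iff $S_x \cap S_y = \emptyset$. For the first equivalence, since $T = \{v_{0,0}\}$ and $v_{0,0}$ carries the only self-loop, any infinite path that eventually stays in $T$ must have the suffix $v_{0,0}^{\omega}$; hence it satisfies the objective exactly when it reaches $v_{0,0}$, and conversely reaching $v_{0,0}$ lets the path loop there forever. For the second equivalence, the only edge into $v_{0,0}$ is the backward edge from $v_{0,1}$, and inductively the only way to descend below index $m$ is via the backward edge $v_{0,m} \to v_{0,m-1}$, since forward edges only increase the index. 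Thus the lowest vertex reachable from $v_{0,k}$ is governed by the longest initial run of present backward edges, and $v_{0,0}$ is reached iff every backward edge is present, i.e.\ iff for all $j$ we have $x_j = 0$ or $y_j = 0$, which is precisely disjointness.

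With correctness in hand, the lower bound follows the template of Lemma~\ref{lem:scc2}: Alice and Bob simulate the given symbolic co-liveness algorithm on the shared vertex set, communicating only to resolve the at most two possible (input-dependent) backward edges relevant to each $\pre$ or $\post$ operation; the added self-loop is a definite edge and costs no communication. Hence an algorithm using $N$ one-step operations yields a protocol for Set Disjointness using $O(N)$ bits, so Theorem~\ref{th:lbdisj} forces $N = \Omega(k) = \Omega(\dia)$.

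I expect the main obstacle to be pinning down the first equivalence rigorously --- that co-liveness here is genuinely equivalent to reachability of $v_{0,0}$ --- together with verifying that no alternative recurrent behavior inside $T$ can satisfy the objective; once $T$ is a single self-looped vertex this is immediate, which is exactly why the self-loop augmentation is the right modification. The reachability characterization itself is then a short induction already implicit in the reasoning behind Theorem~\ref{thm:reach}.
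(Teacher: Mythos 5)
Your proposal is correct and matches the paper's proof essentially verbatim: the paper also takes Reduction~\ref{red:scc2} with $\ell=1$, adds the definite self-loop at $v_0$, sets $T=\{v_0\}$, and reduces co-liveness from $v_k$ to reachability of $v_0$, inheriting the $\Omega(\dia)$ communication lower bound. Your write-up just spells out the reachability characterization and the constant-communication simulation in more detail than the paper does.
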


\subsection{Lower Bound for Approximate Diameter}\label{sec:lb_diameter}

\noindent\emph{The Approximate Diameter Problem.}
Let $G = (V, E)$ be a directed graph with $n$ vertices~$V$ and $m$ edges~$E$. 
Let $\dist(u, v)$ denote the shortest distance from $u \in V$ to $v \in V$ in~$G$,
i.e., the smallest number of edges of any path from $u$ to $v$ in~$G$.
Recall that we define the \emph{diameter} of~$G$ as 
the maximum of $\dist(u, v)$ over all pairs $u, v$ for which $u$ can reach~$v$ in~$G$.
\footnote{Usually the diameter is defined over all pairs
of vertices, not just the reachable ones, and is therefore $\infty$ if $G$ is not 
strongly connected. Our definition is more general since determining whether the
graph is strongly connected takes only $O(\dia)$ symbolic steps and additionally 
our definition is more natural in the symbolic setting as it 
provides an upper bound on the number of one-step operations needed until a fixed 
point is reached, which is an essential primitive in symbolic graph algorithms.}
We consider the problem of approximating the diameter~$\dia$ of a graph by a 
factor~$c$, where the goal is to compute an estimate $\adia$ such that 
$\dia / c \le \adia \le \dia$.
As undirected graphs are special cases of directed
graphs, the lower bound is presented for undirected graphs
and the upper bound for directed graphs (see Section~\ref{sec:upper}), 
i.e., both hold for undirected and directed graphs.

\smallskip\noindent\emph{Result.}
We show a lower bound of $\Omega(n)$ on the number of symbolic steps
needed to distinguish between a diameter of 2 and a diameter of 3, even in 
an undirected connected graph. The basic symbolic algorithm for computing 
the diameter exactly takes $O(n \cdot \dia)$ many symbolic steps. Thus our lower
bound is tight for constant-diameter graphs. 

\smallskip\noindent\emph{Outline Lower Bound.}
We show how to encode an instance of the Set Disjointness Problem
with a universe of size $k$ in an (undirected, connected) graph with
$\Theta(\sqrt{k})$ vertices and $\Theta(k)$ edges such that 1) in a communication protocol 
any symbolic one-step operation can be simulated with $\Theta(\sqrt{k})$ bits
and 2) the graph has diameter~2 if the two sets are disjoint and diameter~3 otherwise.
Thus the communication complexity lower bound of $\Omega(k)$ for Set Disjointness
implies a lower bound of $\Omega(\sqrt{k}) = \Omega(n)$ for the number
of symbolic one-step operations to compute a $(3/2 - \varepsilon)$-approximation
of the diameter of a graph with $n$ vertices.

\begin{reduction}\label{red:diasparse}
Let $(x,y)$ be an instance of the 
Set Disjointness problem of size $k$ and let $s = \sqrt{k}$. 
We construct an undirected graph $G = (V, E)$
  with $n = 3s + 2$ vertices and $O(k)$ edges as follows.
      \upbr{1} There are three sets $A, B, C$ with $s$ vertices each and two auxiliary 
      vertices~$u$ and $t$.
      We denote the $i$-th vertex of each of $A, B, C$ with a lowercase letter indicating the set and subscript~$i$.
      \upbr{2} There is an edge between $u$ and $t$ and between $u$
      and each vertex of $A$ and $B$ and between $t$ and each vertex of $C$.
      \upbr{3} For each $0 \le i < s$ there is an edge between $a_i \in A$ and $b_i \in B$.
      \upbr{4} For $0 \le \ell < k$ let $i, j < s$ be such that $\ell = i \cdot s + j$.
      There is an edge between $a_i \in A$ and $c_j \in C$ iff $x_\ell = 0$
      and there is an edge between $b_i \in B$ and $c_j \in C$ iff $y_\ell = 0 $.
\end{reduction}
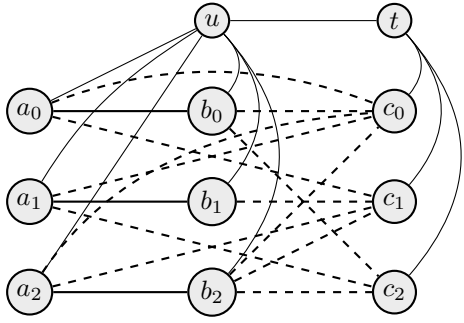
\begin{figure}
 \centering
 \begin{tikzpicture}[>=stealth,yscale=0.8,xscale=1.2]
  \small
		\draw  (2, 1.5)node[player](u){$u$};
		\draw  (4, 1.5)node[player](t){$t$};
  		\path 	(0,0)node[player](a0){$a_0$}
			++(0,-1.5)node[player](a1){$a_1$}
			++(0,-1.5)node[player](a2){$a_2$}
			;
  		\path 	(2,0)node[player](b0){$b_0$}
			++(0,-1.5)node[player](b1){$b_1$}
			++(0,-1.5)node[player](b2){$b_2$}
			;
  		\path 	(4,0)node[player](c0){$c_0$}
			++(0,-1.5)node[player](c1){$c_1$}
			++(0,-1.5)node[player](c2){$c_2$}
			;
		\path [-]
			(u) edge (t)
			(u) edge (a0)
			(u) edge[bend right = 12] (a1)
			(u) edge (a2)
			[bend left]
			(t) edge (c0)
			(t) edge (c1)
			(t) edge (c2)
			(u) edge (b0)
			(u) edge (b1)
			(u) edge (b2)
			;
		\path [-, thick]
			(a0) edge (b0)
			(a1) edge (b1)
			(a2) edge (b2)
			;
		\path [-, dashed, thick] 			
			(a0) edge[bend left] (c0)
			(a0) edge (c1)
			(a1) edge (c0)
			(a1) edge (c2)
			(a2) edge[bend left] (c0)
			(a2) edge (c1)
			;
		\path [-, dashed, thick] 			
			(b0) edge (c0)
			(b0) edge (c2)
			(b1) edge (c1)
			(b2) edge (c0)
			(b2) edge (c1)
			(b2) edge (c2)
			;
  \end{tikzpicture}
  \caption{Reduction~\ref{red:diasparse} for $k=9, s=3, S_x=\{2,4, 8\}, S_y=\{1,3,5\}$.}
\end{figure}

We first show that this graph has diameter~$2$ if $S_x$ and $S_y$ are 
disjoint and diameter~$3$ otherwise and then show how Alice can obtain a 
communication protocol for the Set Disjointness problem from any symbolic 
algorithm that can distinguish these two cases.
\begin{lemma}\label{lem:diasparse}
	Let $G = (V, E)$ be the graph given by Reduction~\ref{red:diasparse} and let 
	$D$ denote its diameter. If $S_x \cap S_y = \emptyset$, then $D = 2$, otherwise $D = 3$.
\end{lemma}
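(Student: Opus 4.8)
The plan is to exploit the hub structure provided by $u$ and $t$ to show that almost every pair of vertices is at distance at most $2$ unconditionally, and then to pin down the handful of ``cross'' pairs whose distance encodes the Set Disjointness instance.

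First I would observe that $u$ and $t$ act as hubs. By edge set (2), $u$ is adjacent to $t$ and to every vertex of $A\cup B$, while $t$ is adjacent to $u$ and to every vertex of $C$; hence every vertex lies within distance $2$ of $u$ (the vertices of $C$ via $t$) and symmetrically within distance $2$ of $t$. This already yields $\dist(p,q)\le 2$ for every pair with $p,q\in A\cup B\cup\{u,t\}$ (route through $u$), for every pair with $p,q\in C\cup\{t\}$ (route through $t$), and for every pair that involves $u$ or $t$. The only pairs that can possibly have distance $3$ are therefore the cross pairs $\{a_i,c_j\}$ and $\{b_i,c_j\}$, because no single hub is adjacent to both endpoints: $u$ is not adjacent to any $c_j$, and $t$ is not adjacent to any $a_i$ or $b_i$.

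Next I would analyze the cross pairs. Fix $\ell=i\cdot s+j$. I claim $\dist(a_i,c_j)\le 2$ iff $\ell\notin S_x\cap S_y$. A length-$1$ path exists exactly when the edge $a_ic_j$ is present, i.e.\ when $x_\ell=0$; a length-$2$ path through the twin vertex $b_i$ (using the edge $a_ib_i$ from (3) together with the edge $b_ic_j$) exists exactly when $y_\ell=0$. The main obstacle is to verify that \emph{no other} length-$2$ route exists when $x_\ell=y_\ell=1$. For this I would enumerate the common neighbors of $a_i$ and $c_j$: the candidate $u$ fails since $u\notin N(c_j)$, the candidate $t$ fails since $t\notin N(a_i)$, any $c_{j'}\in N(a_i)$ fails since $C$ is an independent set, and any $a_{i'}$ or $b_{i'}\in N(c_j)$ fails since $A$ is independent and $a_i$'s only $B$-neighbor is $b_i$. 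Thus the sole possible common neighbor is $b_i$, present iff $y_\ell=0$. Consequently $\dist(a_i,c_j)\le 2$ iff $x_\ell=0$ or $y_\ell=0$, i.e.\ iff $\ell\notin S_x\cap S_y$; and when $x_\ell=y_\ell=1$ the path $a_i$--$u$--$t$--$c_j$ gives $\dist(a_i,c_j)=3$ exactly. The pair $\{b_i,c_j\}$ is handled by the symmetric argument, exchanging the roles of $x_\ell,y_\ell$ and of $a_i,b_i$.

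Finally I would assemble the diameter. The graph is always connected and the $u$--$t$ edge gives a length-$3$ path between any $A\cup B$ vertex and any $C$ vertex, so $D\le 3$ in all cases, while $\dist(u,c_j)=2$ forces $D\ge 2$. If $S_x\cap S_y=\emptyset$, then by the cross-pair analysis every cross pair has distance at most $2$, hence all pairs do and $D=2$. If instead some $\ell\in S_x\cap S_y$, then writing $\ell=i\cdot s+j$ we obtain $\dist(a_i,c_j)=3$, so $D=3$. This is precisely the claimed equivalence.
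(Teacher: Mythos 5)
Your proof is correct and follows essentially the same route as the paper's: reduce via the hubs $u$ and $t$ to the cross pairs $\{a_i,c_j\}$ and $\{b_i,c_j\}$, use the matching edge $a_ib_i$ to get distance $2$ when one of $x_\ell,y_\ell$ is $0$, and rule out any other common neighbor to force distance $3$ when $x_\ell=y_\ell=1$. Your explicit enumeration of candidate common neighbors is a slightly more systematic phrasing of the paper's neighbor-listing argument, but the content is the same.
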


In the graph $G = (V, E)$ given by Reduction~\ref{red:diasparse} 
Alice knows all the vertices of the graph and all 
the edges except those who are constructed based on $y$, i.e., Alice 
does not know the edges between $B$ and $C$. To take into account
the edges between $B$ and $C$, Alice has to communicate with Bob. To show a lower bound 
on the number of symbolic steps, we show next an upper bound on the number of bits
of communication between Alice and Bob needed 
to simulate a symbolic one-step operation on~$G$. With the simulation of the one-step
operations, the symbolic algorithm can be used as a communication protocol for
distinguishing whether $G$ has diameter $2$ or $3$ 
and thus by Lemma~\ref{lem:diasparse} to solve
the Set Disjointness problem. 
Whenever the symbolic algorithm 
performs a $\pre$ or $\post$ operation (which are equivalent on undirected graphs)
for a set $S$ that contains vertices of $B$ or $C$, then Alice can simulate this
one-step operation by specifying the vertices of $B$ and $C$ that are in $S$ 
with a bit vector of size $2 s$, where Bob answers with a bit vector, again 
of size $2 s$, that indicates all vertices that are adjacent to $(B \cup C) \cap S$.
Thus the communication protocol can simulate a symbolic algorithm that performs 
$T$ one-step operations with at most $4 s T$ bits of communication.
Hence we have by Theorem~\ref{th:lbdisj} that $4 s T \ge \Omega(k) = \Omega(s^2)$
and thus $T \ge \Omega(s) = \Omega(n)$. Together with Lemma~\ref{lem:diasparse},
this proves the following theorem.
Note that any $(3/2-\varepsilon)$-approximation 
algorithm for the diameter of a graph can distinguish between diameter 2 and 3.

\begin{theorem}\label{th:diasparse}
	Any \upbr{probabilistic bounded error or deterministic} symbolic algorithm that
	computes a $(3/2-\varepsilon)$-approximation 
	of the diameter of an undirected connected graph with $n$ vertices needs $\Omega(n)$
	symbolic one-step operations.
\end{theorem}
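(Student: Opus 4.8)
The plan is to reduce Set Disjointness to the problem of distinguishing a graph of diameter $2$ from one of diameter $3$, using the graph produced by Reduction~\ref{red:diasparse}, and then to argue that any symbolic algorithm making few one-step operations yields a cheap two-party protocol, contradicting Theorem~\ref{th:lbdisj}.

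First I would observe that a $(3/2-\varepsilon)$-approximation already separates diameter $2$ from diameter $3$. If the true diameter is $\dia$ and the algorithm returns $\adia$ with $\dia/(3/2-\varepsilon) \le \adia \le \dia$, then $\dia = 2$ forces $\adia \le 2$, whereas $\dia = 3$ forces $\adia \ge 3/(3/2-\varepsilon) > 2$ for every $\varepsilon > 0$ (since $3 > 3 - 2\varepsilon$). Hence thresholding $\adia$ at $2$ decides which of the two cases holds, and by Lemma~\ref{lem:diasparse} this in turn decides whether $S_x \cap S_y = \emptyset$.

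Next I would set up the protocol. Both players run the given symbolic algorithm on the graph of Reduction~\ref{red:diasparse} with $s = \sqrt{k}$, so $n = 3s + 2 = \Theta(\sqrt{k})$. Alice holds $x$ and Bob holds $y$; inspecting the reduction, the only edges whose presence depends on $y$, and which are therefore unknown to Alice, are those between $B$ and $C$. All basic set operations are performed locally and cost no communication, so the entire cost lies in simulating the one-step operations. The key point is that each one-step operation can be simulated with $O(s)$ bits: since $\pre$ and $\post$ coincide on an undirected graph and the only hidden edges lie between $B$ and $C$, the output of a one-step operation on an input set $S$ differs from what Alice can compute on her own only on the vertices reachable through $B$--$C$ edges. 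Alice therefore transmits the restriction of $S$ to $B \cup C$ (a $2s$-bit vector), Bob replies with the set of $B \cup C$ vertices adjacent to $(B \cup C) \cap S$ through the $y$-edges (another $2s$-bit vector), and Alice merges this answer with her locally computed part, for a total of at most $4s$ bits per one-step operation.

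Finally I would combine the pieces: a symbolic algorithm using $T$ one-step operations yields a protocol exchanging at most $4sT$ bits that solves Set Disjointness over a universe of size $k = s^2$, so by Theorem~\ref{th:lbdisj} we must have $4sT \ge \Omega(k) = \Omega(s^2)$, giving $T \ge \Omega(s) = \Omega(n)$. The main obstacle, and the conceptually novel step, is justifying the $O(s)$-bit simulation. Unlike the constant-communication constructions of Reductions~\ref{red:scc2} and~\ref{red:buchi}, here a single one-step operation can depend on up to $\Theta(s) = \Theta(\sqrt{k})$ hidden edges rather than a constant number; the argument that only the $B \cup C$-restriction of $S$ and of the response need to be communicated is exactly what keeps the per-operation cost at $O(\sqrt{k})$ even though the graph encodes $\Theta(k)$ edges in only $\Theta(\sqrt{k})$ vertices, yielding the desired $\Omega(n)$ bound.
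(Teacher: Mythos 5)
Your proposal is correct and follows essentially the same route as the paper: the same reduction and diameter dichotomy (Lemma~\ref{lem:diasparse}), the same $4s$-bit simulation of each one-step operation by exchanging the $B \cup C$ restrictions of the query set and of the answer, and the same final accounting $4sT \ge \Omega(k) = \Omega(s^2)$ giving $T = \Omega(n)$. The only addition is that you spell out explicitly why a $(3/2-\varepsilon)$-approximation separates diameter $2$ from $3$, which the paper states without calculation.
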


\section{Upper Bounds}\label{sec:upper}
In this work we present the following upper bounds.

\subsection{Upper Bounds for Computing Strongly Connected Components}
We revisit the symbolic algorithm of Gentilini et al.~\cite{GentiliniPP08} that computes 
the SCCs and present a refined analysis to show that it only requires 
$O(\sum_{C \in SCCs(G)} (\dia_C+1))$ symbolic operations, 
improving the previously known $O(\min(n, \dia \cdot |SCCs(G)|))$ bound 
and matching the lower bound of Theorem~\ref{thm:scc3} 
(details in Section~\ref{sec:sccalg}).

\begin{theorem}\label{th:sccalg}
The algorithm of Gentilini et al.~\cite{GentiliniPP08} computes the SCCs of a graph~$G$
with $O(|SCCs(G)| + \sum_{C \in SCCs(G)} \dia_C)$ symbolic operations.
\end{theorem}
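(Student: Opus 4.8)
The plan is to recall the recursive structure of the algorithm of Gentilini et al.\ and to charge its symbolic one-step operations to the SCCs. On the current subgraph the algorithm picks a vertex $v$, computes its forward-reachable set $F$ by iterating $\post$ from $\{v\}$ to a fixpoint (the forward search, which Gentilini et al.\ augment to also return a \emph{skeleton} seeding the next recursion), then extracts the SCC $C \ni v$ inside $F$ by iterating $S \mapsto S \cup (F \cap \pre(S))$ from $\{v\}$, reports $C$, and recurses on the two disjoint subgraphs induced by $V \setminus F$ and $F \setminus C$. Since every SCC other than $C$ lies entirely in one of these two parts, the SCCs are partitioned cleanly across the recursion tree.

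First I would bound the $\pre$ operations. The backward expansion $S \mapsto S \cup (F \cap \pre(S))$ started at $\{v\}$ never leaves $C$: if $w \in F$ has an edge into some $u \in C$, then $v$ reaches $w$ (as $w \in F$) and $w$ reaches $v$ (through $u$ and the SCC $C$), so $w \in C$. Hence the expansion uses only edges of the induced subgraph on $C$, reaches all of $C$, and stabilises after at most $\dia_C + 1$ iterations. As each SCC is reported exactly once, the $\pre$ operations total $O(|SCCs(G)| + \sum_{C \in SCCs(G)} \dia_C)$.

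The main work, and the step I expect to be the obstacle, is bounding the $\post$ operations of the forward searches; this is precisely where the earlier $O(\dia \cdot |SCCs(G)|)$ analysis is wasteful, since a single forward search may have depth as large as the diameter of the whole subgraph. In one call the number of $\post$ operations equals the forward BFS depth of $v$; decomposing a deepest forward shortest path into maximal segments inside single SCCs and transitions between SCCs shows this depth is at most the number of distinct SCCs met along the path plus the sum of their diameters. The difficulty is that there are up to $|SCCs(G)|$ recursive calls, so charging naively would multiply these quantities. The skeleton returned by the forward search is designed to prevent this: it seeds the recursion on $F \setminus C$ with representative vertices of the already-discovered forward layers, so the forward search of the child continues below the layers explored by the parent rather than re-traversing them.

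Concretely, I would establish an invariant (or potential argument) showing that along every root-to-leaf branch of the recursion the forward depths telescope, and that, combined with the clean partition of SCCs across the two recursive subproblems, each SCC $C$ is charged for forward progress only $O(1)$ times and by at most $\dia_C + 1$ per charge. The delicate verification is the skeleton property that guarantees this non-overlap of charged forward layers between parent and child and between siblings. Summing the forward charge with the $\pre$-operation bound then yields the claimed $O(|SCCs(G)| + \sum_{C \in SCCs(G)} \dia_C)$ total, matching the lower bound of Theorem~\ref{thm:scc3}.
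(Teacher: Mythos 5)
Your charging scheme for the $\pre$ operations is exactly the paper's, and your stated target --- each SCC $C$ pays $O(\dia_C+1)$ for forward-search work only $O(1)$ times --- is the right one. But the proof is incomplete precisely at the step you flag as delicate, and the mechanism you sketch there is not the one that works. The child's forward search does \emph{not} ``continue below the layers explored by the parent'': it is a fresh fixpoint computation from the skeleton's endpoint $v'$ inside $\fw{v}\setminus SCC(v)$, and its depth does not telescope against the parent's depth along a root-to-leaf branch. What actually makes the accounting go through is a property of the skeleton that your argument never invokes: the skeleton contains exactly one vertex per BFS level, so the cost of a forward search is $O(|\mathcal{S}'|)$ where $\mathcal{S}'$ is the skeleton it produces, and $\mathcal{S}'$ induces a \emph{shortest path} in the current subgraph. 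Shortest-path-ness is the crux: the intersection of a shortest path with any single SCC $C$ is itself a shortest path inside $C$, hence has at most $\dia_C+1$ vertices. This is what converts ``depth of one forward search'' --- which, as you note, may span many SCCs --- into a quantity that decomposes over SCCs with the correct per-SCC bound.

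The second missing half is the bookkeeping showing each SCC is hit by at most two skeletons. The paper splits $|\mathcal{S}'|$ into $|\mathcal{S}'\cap SCC(v)|$, paid immediately and bounded by $\dia_{SCC(v)}+1$, and $|\mathcal{S}'\setminus SCC(v)|$, which is deferred: these vertices are handed down as the inherited skeleton of the recursive calls. One then checks that a vertex of an inherited skeleton, which by construction can reach the skeleton's endpoint, lies in the forward set of the next search only if it belongs to the SCC identified there; hence every skeleton vertex is consumed exactly once, at the call identifying its SCC, and at that call the inherited skeleton contributes at most $\dia_C+1$ consumed vertices, again by the shortest-path property (Lemma~\ref{lem:shortestpath}). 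Without these two facts --- forward-search cost equals skeleton size, and skeletons are shortest paths --- your ``non-overlap of charged forward layers'' remains an unproved assertion, and the telescoping route you propose does not supply it.
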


\subsection{Upper Bounds for Liveness, Reachability, and Safety Objectives}\label{sec:verifalg}

The upper bounds for Reachability, Safety, Liveness, and co-Liveness, are
summarized in the following proposition, which is straightforward to obtain
as discussed below.

\begin{proposition}\label{lem:objectives_ub}
	Let $SCC$ be the number of symbolic steps required to compute the SCCs of the graph and let $\target$ be the set of target/safe vertices. Then
	$Reach(\target)$ can be solved with $O(\dia)$ symbolic operations; 
	$\Buchi{\target}$ can be solved with $O(\min(SCC+\dia,|\target|\cdot \dia))$ symbolic operations;
	$\coBuchi{\target}$ can be solved with $O(|\target| + \dia)$ symbolic operations; and 
	$\Safe{\target}$ can be solved with $O(|\target|)$ symbolic operations.
\end{proposition}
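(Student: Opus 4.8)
The plan is to treat the four objectives separately, in each case computing the \emph{winning set} $W$ (the set of start vertices admitting a path satisfying the objective) as a single monotone fixed point of $\pre$, and then returning the answer as the membership test $s \in W$, which is a free set operation. The one primitive I rely on throughout is backward reachability on the \emph{whole} graph: starting from $R_0 = S$ and iterating $R_{i+1} = R_i \cup \pre(R_i)$ reaches its fixed point after at most $\dia+1$ steps, since by our definition of $\dia$ every vertex that can reach $S$ does so along a shortest path of length at most $\dia$; each step costs one $\pre$ operation. I write $\Phi(S)$ for this backward-reachability set, computable in $O(\dia)$ operations.

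For \Reach{\target} the winning set is exactly $\Phi(\target)$, giving $O(\dia)$ operations. For \Safe{\target} a start vertex wins iff it lies in the largest $S \subseteq \target$ with $S \subseteq \pre(S)$ (every vertex of $S$ keeps a successor in $S$); I would compute $S$ by the safety iteration $S_0 = \target$, $S_{i+1} = S_i \cap \pre(S_i)$, which strictly shrinks $\target$ until it stabilizes and hence halts after at most $|\target|$ steps, for $O(|\target|)$ operations. For \coBuchi{\target} (eventually stay in $\target$) a vertex wins iff it can reach this same safe set $S$, because an infinite suffix inside $\target$ is available precisely from the vertices of $S$; so I first compute $S$ in $O(|\target|)$ operations and then $\Phi(S)$ in $O(\dia)$ operations, yielding $O(|\target| + \dia)$.

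The substantive case is \Buchi{\target} (visit $\target$ infinitely often), for which I would give two algorithms and take the cheaper. Approach (i): a path visits $\target$ infinitely often iff it eventually remains inside a single nontrivial strongly connected component meeting $\target$; hence the winning set is $\Phi(W_0)$, where $W_0$ is the union of the SCCs that intersect $\target$ and contain a cycle (size $\ge 2$, or a singleton carrying a self-loop). Computing the SCC decomposition costs $SCC$ operations \cite{GentiliniPP08}, selecting the good components uses only set operations on the reported SCCs, and the final reachability costs $O(\dia)$, for a total of $O(SCC + \dia)$. Approach (ii) avoids SCCs via the nested fixed point $W = \nu Z.\, \Phi\bigl(\target \cap \pre(Z)\bigr)$: a vertex lies in $Z$ iff it can reach a target vertex that itself has a successor in $Z$, and unfolding the greatest fixed point produces an infinite trajectory through $\target$. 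Each outer iteration recomputes one backward-reachability set \emph{on the full graph} in $O(\dia)$ operations, and since the seed $\target \cap \pre(Z)$ is a monotonically shrinking subset of $\target$ it changes at most $|\target|$ times, bounding the outer loop by $O(|\target|)$ and giving $O(|\target| \cdot \dia)$. Combining (i) and (ii) gives the claimed $O(\min(SCC + \dia,\ |\target| \cdot \dia))$.

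The main obstacle is the \Buchi{\target} analysis. One must justify the terminal-SCC characterization of ``infinitely often,'' keep every inner reachability on the full graph so that its length stays bounded by $\dia$ (restricting to a shrinking subgraph would inflate distances and break this bound), and argue both iteration counts: that the outer $\nu$-loop in (ii) is driven by the shrinking target subset and hence bounded by $|\target|$, and that identifying the nontrivial target-meeting SCCs in (i) — including the detection of self-loops on singleton components — stays within the $SCC$ budget. The remaining three objectives are routine, each a single monotone fixed point with a transparent iteration bound.
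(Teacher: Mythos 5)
Your proposal is correct and achieves all four claimed bounds, but it reaches three of them by a genuinely different route than the paper. For reachability and for the $O(SCC+\dia)$ Büchi algorithm you essentially coincide with the paper (SCC decomposition, keep the non-trivial components meeting $\target$, then one backward reachability). For the rest you work directly with $\mu$-calculus-style fixed points: safety as the greatest fixed point $S_{i+1}=S_i\cap\pre(S_i)$ bounded by $|\target|$ iterations, co-Büchi as reachability of that safe set, and the $O(|\target|\cdot\dia)$ Büchi bound via the classical nested fixed point $\nu Z.\,\Phi(\target\cap\pre(Z))$, with the outer loop bounded by the monotone shrinking of the seed $\target\cap\pre(Z)\subseteq\target$. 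The paper instead obtains safety and co-Büchi by running the SCC algorithm on the graph restricted to $\target$ (collecting non-trivial SCCs and then doing reachability within $\target$, using that the restricted graph has only $|\target|$ vertices), and obtains the $O(|\target|\cdot\dia)$ Büchi bound by testing, for each $v\in\target$ separately, whether $v$ can reach itself in $O(\dia)$ steps and then doing one reachability to the surviving targets. Both routes are sound and give identical bounds; yours is more elementary and self-contained (only the $O(SCC+\dia)$ branch needs the SCC algorithm as a black box), while the paper's is uniform in reusing its SCC machinery. You also correctly handle the two subtleties that matter here: every backward reachability whose cost you charge to $\dia$ is performed on the full graph (the paper makes the same point when noting that restricting to $\target$ can change the diameter, which is why its safety bound is $O(|\target|)$ rather than $O(\dia)$), and the per-singleton self-loop tests in the SCC-based Büchi algorithm cost one extra one-step operation per trivial SCC, which is within the $SCC$ budget since $SCC\ge|SCCs(G)|$.
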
 

\noindent\emph{Algorithm for Reachability.} 
Given a target set $\target$, we can easily compute the vertices that can reach $\target$ by iteratively applying $\pre$ operations
until a fixed-point is reached. By the definition of diameter, this requires only $O(\dia)$ symbolic operations.

\smallskip\noindent\emph{Algorithms for Liveness.}
A simple algorithm for Liveness first starts an algorithm for computing SCCs and whenever an SCC is reported it
tests whether the SCC contains one of the target vertices and if so adds all vertices of the SCC to the winning set.
Finally, after all SCCs have been processed, the algorithm computes all vertices that can reach the current winning set and adds
them to the winning set.
That is, in total the algorithm only needs $O(SCC+\dia)$ many symbolic operations where $SCC$ is the number of symbolic operations 
required by the $SCC$ algorithm.
An alternative algorithm for Liveness with $O(|\target|\cdot \dia)$ symbolic operations is as follows.
For each $v \in \target$ check with $O(\dia)$ symbolic one-step operations whether the vertex can reach itself and if not remove the vertex from $\target$.
Then do a standard reachability with the updated set $\target$ as target, again with $O(\dia)$ symbolic one-step operations.

\smallskip\noindent\emph{Algorithm for co-Liveness.}
Given a set $\target$ of safe vertices, an algorithm for co-Liveness first restricts the graph to the vertices of $\target$;
in the symbolic model this can be done by intersecting the outcome of each $\pre$ and $\post$ operation with $\target$.
One then uses an SCC algorithm and whenever a non-trivial SCC is reported, all its vertices are added to the 
winning set. Finally, after all SCCs have been processed, all vertices that can reach
the current winning set in the original graph are added to the winning set.
That is, in total the algorithm only needs $O(|\target| + \dia)$ many symbolic operations, 
where $|\target|$ comes from the linear number of symbolic operations 
required by the $SCC$ algorithm for the modified graph.

\smallskip\noindent\emph{Algorithm for Safety.}
Given a set $\target$ of safe vertices, an algorithm for safety first restricts the graph to the vertices of $\target$.
One then uses an SCC algorithm and whenever a non-trivial SCCs is reported, all its vertices are added to the 
winning set. Finally, after all SCCs have been processed, all vertices that can reach, within $\target$,
the current winning set are added to the winning set.
That is, in total the algorithm only needs $O(|\target|)$ symbolic operations 
as both the $SCC$ algorithm for the modified graph and reachability in the modified graph are in $O(\target)$. 
Also notice that reachability is not bounded by $O(\dia)$ as restricting the graph to vertices of $\target$ can change the diameter.

  Notice that none of the above algorithms stores all the SCCs, but processes one SCC at a time.
  That is, the algorithms themselves only need a constant number of sets plus the sets stored in the algorithm 
  for computing SCCs (which can be done with $O(\log n)$ many sets).

\subsection{Upper Bounds for Approximate Diameter}
We present a $(1+\varepsilon)$-approximation algorithm for the diameter
(for any constant $\varepsilon > 0$) that takes $\widetilde{O}(n \sqrt{\dia})$
symbolic operations (the $\widetilde{O}$-notation hides
logarithmic factors). 

\begin{theorem}\label{th:diaalg}
A $(1+\epsilon)$-approximation of the diameter of a directed graph for any constant 
$\epsilon > 0$ can be obtained
with $\widetilde{O}(n \sqrt{\dia})$ symbolic operations, using $O(1)$ many sets.
\end{theorem}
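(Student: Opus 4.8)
The plan is to reduce diameter estimation to computing out-eccentricities and to combine an \emph{exact} short-range phase (handling small $\dia$) with a \emph{sampled} long-range phase (handling large $\dia$), balanced at depth $\Theta(\sqrt{\dia})$. The workhorse is a single-source symbolic BFS: starting from $R=\{v\}$ and iterating $R \leftarrow R \cup \post(R)$ reaches a fixed point after exactly $\mathrm{ecc}(v) := \max\{\dist(v,w): v\text{ reaches }w\}$ steps, costing $O(\mathrm{ecc}(v)+1)\le O(\dia+1)$ one-step operations and storing only $O(1)$ sets (the reached set, the frontier, and a non-set counter). Since the paper's definition gives $\dia = \max_v \mathrm{ecc}(v)$, it suffices to output $\adia$ with $(1-\varepsilon)\dia \le \adia \le \dia$, and the trivial exact algorithm is exactly this primitive run from all $n$ vertices at cost $O(n\cdot\dia)$.

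For small diameters I would run the primitive from every vertex but \emph{cap} each BFS at depth $\tau := \lceil 1/\varepsilon^2\rceil$, processing and discarding the sources one at a time so that only $O(1)$ sets are ever stored while a counter tracks the running maximum. Every $v$ with $\mathrm{ecc}(v)\le\tau$ then yields its exact eccentricity, and any $v$ not stabilized by depth $\tau$ is flagged. If nothing is flagged then $\dia\le\tau$ and the exact maximum is correct; this phase costs $\widetilde O(n/\varepsilon^2)=\widetilde O(n)$ for constant $\varepsilon$.

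Otherwise $\dia>\tau$, and I would estimate it with a hitting set. Marking each vertex independently with probability $\Theta(g^{-1}\log n)$ produces a set $S$ of size $\widetilde O(n/g)$ that, with high probability, meets the length-$g$ prefix of every shortest path (union bound over the $\le n^2$ pairs). Running the primitive to its fixed point from each $s\in S$ and taking $d_2:=\max_{s\in S}\mathrm{ecc}(s)$ gives, for a landmark $s=p_j$ with $j<g$ on a shortest path $p_0,\dots,p_{\dia}$ realizing the diameter, $\mathrm{ecc}(s)\ge\dist(p_j,p_{\dia})=\dia-j>\dia-g$ (a suffix of a shortest path is shortest), while trivially $d_2\le\dia$. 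With $g=\lceil\sqrt{\dia}\rceil$ the additive error is $\sqrt{\dia}$, so $d_2\ge(1-1/\sqrt{\dia})\,\dia\ge(1-\varepsilon)\dia$ precisely because $\dia>\tau=\lceil1/\varepsilon^2\rceil$, which after rescaling $\varepsilon$ is a $(1+\varepsilon)$-approximation at cost $\widetilde O((n/g)\,\dia)=\widetilde O(n\sqrt{\dia})$. As $\dia$ is unknown I would wrap this in a geometric search over guesses $\hat\dia=2\tau,4\tau,\dots$, each time setting $g=\lceil\sqrt{\hat\dia}\rceil$ and capping every BFS at depth $2\hat\dia$, stopping at the first guess for which no BFS reaches its cap (certifying $\hat\dia\ge\dia$); capping bounds the cost of guess $\hat\dia$ by $\widetilde O(n\sqrt{\hat\dia})$, and since the search halts at the first $\hat\dia\in[\dia,2\dia)$ the geometric sum is dominated by its last term and stays $\widetilde O(n\sqrt{\dia})$.

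I expect the main obstacle to be making the approximation guarantee uniform in $\dia$: the sampled phase alone only achieves $(1+\varepsilon)$ once $\dia\gtrsim 1/\varepsilon^2$, so the analysis must carefully glue it to the exact small-diameter phase and justify the stopping rule of the doubling search. The second delicate point is respecting the $O(1)$-set budget, which rules out the classical device of storing each vertex's near-neighborhood; instead every source must be processed and discarded one at a time, so that at any moment only the sample $S$ and a single running BFS reside in memory.
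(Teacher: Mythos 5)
Your proposal is correct in substance and achieves the stated bound, but it replaces the paper's key technical lemma with a genuinely different one. Both arguments share the Aingworth-et-al.\ skeleton: find a small set $S$ that intersects the $x$-neighborhood of the (unknown) endpoint of a diameter-realizing path, run a full BFS from each $s\in S$, return the maximum depth, and balance at $x=\Theta(\sqrt{\dia})$; both also dispose of the small-diameter regime by observing that an additive error of $\sqrt{\dia}$ is a $(1+\varepsilon)$-factor once $\dia \gtrsim 1/\varepsilon^2$. The divergence is in how $S$ is produced. The paper builds the auxiliary power graph $\hat{G}$ (edges $=$ paths of length $\le x$), handles the non-strongly-connected case via the set $\highdeg{x}$ of vertices with $|\distx{x}{v}|\ge x$, and runs a deterministic greedy out-dominating-set algorithm (Lemma~\ref{lem:domset}) costing an extra $O(n x \log n)$ symbolic steps; this yields a \emph{deterministic} algorithm. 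You instead take a uniform random sample of density $\Theta(g^{-1}\log n)$, which costs zero symbolic operations, needs no auxiliary graph, and sidesteps the $\highdeg{x}$ case entirely because the length-$g$ prefix of the diameter-realizing path always contains $g$ distinct vertices; the price is that your algorithm is Monte Carlo, so you prove a randomized variant of the theorem rather than the paper's deterministic one (the asymptotic cost is unchanged, since the greedy phase was also $\widetilde{O}(n\sqrt{\dia})$). One inaccuracy to repair in your doubling search: terminating when no BFS from $S$ reaches the cap $2\hat\dia$ does \emph{not} certify $\hat\dia\ge\dia$, since it only bounds $\max_{s\in S}\mathrm{ecc}(s)$, and the search may halt at some $\hat\dia<\dia$. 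This does not break the proof --- on early halting you still have, with high probability, $d_2\ge \dia-g$ with $g=\lceil\sqrt{\hat\dia}\rceil\le\lceil\sqrt{\dia}\rceil$ and $d_2\le\dia$, so the guarantee degrades gracefully and the cost only shrinks --- but the justification should be stated this way rather than as a certificate of $\hat\dia\ge\dia$.
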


\smallskip\noindent\emph{Technical Overview $(1+\varepsilon)$-Approximation Algorithm.}
The symbolic algorithm is based on the $3/2$-approximation 
algorithm by Aingworth et al.~\cite{AingworthCIM99} for explicit graphs, we give a high-level
overview of the differences here. An expensive step in the algorithm of \cite{AingworthCIM99}
is the computation of $s$-partial BFS trees that contain $s$ vertices closest
to a root vertex~$v$ and can be determined with $O(s^2)$ explicit operations (this 
part was later replaced and improved upon by \cite{RodittyW13,ChechikLRSTW14}).
In the symbolic model computing $s$-partial BFS trees would be less efficient, however,
we can compute \emph{all} vertices with distance at most~$x$
from $v$ with only $O(x)$ many symbolic operations. The limitation of the approximation
ratio~$c$ to $3/2$ in the algorithm of \cite{AingworthCIM99} comes from having to deal with 
vertices for which less than $s$ vertices are within distance at most $\dia / (2c)$.
In the symbolic model we do not have to consider this case since with a budget 
of $O(x)$ operations we can always reach at least $x$ vertices (assuming for now that
the graph is strongly connected and $x < \dia$). Thus the algorithm simplifies to the
second part of their
algorithm, whose core part is to find a vertex within distance at most $x$ for 
each vertex of the graph by using a greedy approximation algorithm for dominating
set. However, in the symbolic model storing a linear number of sets is too costly,
hence we inherently use that we can recompute vertices at distance at most $x$
efficiently when needed. 
Details are presented in Section~\ref{sec:diaalg}.

\section{Discussion}\label{sec:discussion}

\subsection{SCCs and Verification Objectives} 

First, our results show that the symbolic SCC algorithm by Gentilini et al.~\cite{GentiliniPP08} is essentially optimal.
That is, we have the three upper bounds of $O(n)$, $O(|SCCs(G)\cdot \dia|)$ and $O(\sum_{C \in SCCs(G)} (\dia_C +1))$
and matching lower bounds of $\Omega(n)$ (Theorem~\ref{thm:scc2}), $\Omega(|SCCs(G)|\cdot \dia)$ (Theorem~\ref{thm:scc2}), 
and $\Omega(|SCC| + \sum_{C \in SCCs(G)} \dia_C)$ (Theorem~\ref{thm:scc3}).
\begin{table}[h]
  \caption{Results}\smallskip
  \label{tab:results_scc}
  \centering
  \begin{tabular}{l | c  c  c}
    \toprule
        &  \multicolumn{3}{c}{number of symbolic operations}\\
    \midrule
    SCC & $\Theta(n)$ &  $\Theta(|SCCs(G)| \cdot \dia)$ & $\Theta(\sum_{C \in SCCs(G)} (\dia_C +1))$\\
    \bottomrule
  \end{tabular}
\end{table}

Our results for the different kinds of verification objectives are summarized in Table~\ref{tab:results_objectives}. 
We have an interesting separation between the reachability objective and  the other objectives in terms of the diameter $\dia$.
While reachability can be solved with $O(\dia)$ symbolic operations, all the other
objectives require $\Omega(n)$ symbolic one-step operation on graphs of constant diameter.

When considering the diameters $\dia_C$ of the SCCs we get another separation.
There we have that Liveness and Reachability can be solved with $O(\sum_{C \in SCCs(G)} (\dia_C +1))$ many symbolic operations, while
Safety and co-Liveness requires $\Omega(n)$ symbolic one-step operations on strongly connected graphs with constant diameter.
This reflects the fact that in the standard algorithm for Safety and co-Liveness the SCC computation is performed on a modified graph.

\begin{table*}[t]
  \caption{Results}\smallskip
  \label{tab:results_objectives}
  \centering
  \begin{tabular}{l | c | c | c}
    \toprule
    Objective &  \multicolumn{3}{c}{number of symbolic operations in terms of}\\
	& $n$   &  $\dia$ \& $|T|$    & $\dia_C$   \\
    \midrule
    \Reach{\target} & $\Theta(n)$ & $\Theta(\dia)$ & $\Theta(\dia)$ \\
    \Safe{\target} & $\Theta(n)$ & $\Theta(|\target|)$ & $\omega(\sum_{C \in SCCs(G)} (\dia_C +1))$\\\
    \Buchi{\target} & $\Theta(n)$ &  $O(|\target|\cdot \dia)$ / $\Omega(|\target|+ \dia)$ & $\Theta(\sum_{C \in SCCs(G)} (\dia_C +1))$\\
    \coBuchi{\target} & $\Theta(n)$ &  $\Theta(|\target| + \dia)$ & $\omega(\sum_{C \in SCCs(G)} (\dia_C +1))$\\
    \bottomrule
  \end{tabular}
\end{table*}

\subsection{Approximate Diameter}\label{sec:discuss:dia}

For explicitly represented graphs a $3/2$-approximation of the diameter can
be computed in $\widetilde{O}(m \sqrt{n})$ time \cite{RodittyW13,ChechikLRSTW14},
while under the strong exponential time hypothesis no $O(n^{2 - o(1)})$
time algorithm exists to distinguish graphs of diameter~2 and~3 (i.e., no
$(3/2-\varepsilon)$-approximation can be obtained) \cite{RodittyW13}. 
The fastest exact algorithms take $\widetilde{O}(mn)$ time. 
While for explicitly represented graphs small, constant diameters are a hard case, 
the current results suggest that in the symbolic model the diameter
of graphs with constant diameter can be determined more efficiently than for 
large diameters, as both the upper bound for exact and approximate computation
of the diameter depend on the diameter of the graph and are linear when the
diameter is constant. While the threshold of an approximation ratio of $3/2$ 
appears in our (linear) lower bound, the current symbolic upper bounds do not show 
this behavior. Several interesting open questions remain: Is there a $o(n)$ 
c-approximation algorithm when $c \in [3/2, 2)$? Is there a linear 
$(1+\varepsilon)$-approximation algorithm for graphs with super-constant diameter?
Or are there better lower bounds?

\section{Detailed Proofs}

\subsection{Proofs of Section~\ref{sec:lb_SCCs}}\label{sec:lb_SCCs_proofs}

\begin{proof}[Proof of Lemma~\ref{lem:scc1}.]
 We have to show that $f(x,y)=1$ iff the graph constructed in Reduction~\ref{red:scc2} has exactly $\ell$ SCCs.\smallskip

 First notice that there are no edges from a set $V_j$ to a set $V_i$ if $i<j$ and thus 
 there are at least $\ell$ SCCs, independently of the actual values of $x$ and $y$.
  
 \noindent$\Rightarrow:$ If $f(x,y)=1$ then all possible edges exists and it is easy to verify that 
	        the SCCs of the graphs are exactly the sets $V_i$ for $0 \leq i < \ell$.
 
 \noindent$\Leftarrow:$ If $f(x,y)=0$ then there are $0 \leq j \leq \bar{k}$, $0 \leq i < \ell$ 
		such that there is no edge from $v_{i,j+1}$ to $v_{i,j}$.
		Now the set $V_i$ splits up in at least two SCCs and thus there are at least $\ell+1$ SCCs.
\end{proof}

\begin{proof}[Proof of Theorem~\ref{thm:scc3}.]
 We have to show that any \upbr{probabilistic bounded error or deterministic} symbolic algorithm
 that computes the SCCs needs $\Omega(|SCCs(G)| + \sum_{C \in SCCs(G)} \dia_C)$ symbolic one-step operations.\smallskip

 First consider Reduction~\ref{red:scc2} and notice that for the constructed graph $\sum_{C \in SCCs(G)} \dia_C \in \Theta(n)$.
 Then by Theorem~\ref{thm:scc2} we already have a $\Omega(\sum_{C \in SCCs(G)} (\dia_C))$ bound.
 Now consider Reduction~\ref{red:scc3} and notice that the constructed graph has $n$ SCCS iff $x$ and $y$ are disjoint.
 Now we can use the same argument as in the proof of Theorem~\ref{thm:scc2} that each
 symbolic one-step operations just needs constant communication.
 The instances where $x$ and $y$ are disjoint have $n$ SCCs and $\sum_{C \in SCCs(G)} \dia_C =0$.
 Hence an algorithm with $o(|SCCs(G)|)$ symbolic one-step operations
 would imply a communication protocol with $o(k)$ communication, a contradiction to Theorem~\ref{th:lbdisj}. 
 By combining the two lower bounds we get the desired $\Omega(|SCCs(G)| + \sum_{C \in SCCs(G)} \dia_C)$ bound.
\end{proof}

\subsection{Proofs of Section~\ref{sec:lb_objectives}}

\begin{proof}[Proof of Theorem~\ref{thm:reach}.]
  We have to show that any \upbr{probabilistic bounded error or deterministic} symbolic algorithm
  that solves Reachability in graphs with diameter $\dia$ requires $\Omega(\dia)$ symbolic one-step operations.\smallskip

  Consider the graph from Reduction~\ref{red:scc2} with parameter $\ell=1$.
  We have that $v_0$ is reachable from all vertices iff the graph is strongly connected.
  From the proof of Theorem~\ref{thm:scc2} we have that testing whether the graph 
  is strongly connected requires $\Omega(k)$ symbolic one-step operations.
  Now notice that 
  (a) the graph is strongly connected iff $v_k$ can reach $v_0$ and that 
  (b) if the graph is strongly connected then $\dia=k$.
\end{proof}

\begin{proof}[Proof of Theorem~\ref{thm:buchi}.]
 For (1) \& (3) consider the graph constructed in Reduction~\ref{red:buchi} and 
 the target set $\target=\{v_0, v_1, \dots, v_{k-1}\}$.
 We have a valid reduction from the Set Disjointness problem as
 the vertex $v_0$ is winning for $\Buchi{\target}$ iff there is a loop for one of the vertices in $\target$
 iff $S_x \cap S_y \not= \emptyset$.
 By the same argument as in the proof of Theorem~\ref{thm:scc2} we have that each
 symbolic one-step operations just needs constant communication.
 Hence an algorithm with $o(n)$, $o(|\target|)$ or $o(|SCCs(G)|)$ symbolic one-step operations
 would imply a communication protocol with $o(k)$ communication, a contradiction. 
 
 For (2) consider the graph constructed in Reduction~\ref{red:scc2} with $\ell=1$ and the target set $\target=\{v_0\}$.
 It is easy to verify that the vertex $v_k$ is winning if it can reach $\target$. 
 Thus the $\Omega(\dia)$ lower bound for reachability also applies here.
 Notice that this also gives a $\Omega(\sum_{C \in SCCs(G)} \dia_C)$ lower bound.
 
 The  $\Omega(|\target| + \dia)$ lower bound in (4) is a direct consequence of the $\Omega(\dia)$ lower bound for instances 
 with constant size target sets $\target$, and the $\Omega(|\target|)$ lower bound for instances with constant diameter~$\dia$.

 Finally, (5) is by the $\Omega(|SCCs(G)|)$ bound from  Reduction~\ref{red:buchi} and the $\Omega(\sum_{C \in SCCs(G)} \dia_C)$ bound
 by Reduction~\ref{red:scc2}  with $\ell=1$.
\end{proof}

\begin{proof}[Proof of Theorem~\ref{thm:coBuchi}.]
 1) \& 2) 
 Consider the graph constructed in Reduction~\ref{red:buchi} and 
 the set of safe vertices $\target=\{v_0, v_1, \dots, v_{k-1}\}$.
 We have a valid reduction from the Set Disjointness problem as
 the vertex $v_0$ is winning for $\Safe{\target}$ iff there is a loop for one of the vertices in $\target$
 iff $S_x \cap S_y \not= \emptyset$.
 By the same argument as in the proof of Theorem~\ref{thm:scc2} we have that each
 symbolic one-step operations just needs constant communication.
 Thus an algorithm with $o(n)$ or $o(|\target|)$ symbolic one-step operations
 would imply a communication protocol with $o(k)$ communication, a contradiction. 
 
 3) Consider the graph constructed in Reduction~\ref{red:buchi} but replace the edge $(v_k,v_k)$ by 
    the edge $(v_k,v_0)$.
    When considering the set of safe vertices $\target=\{v_0, v_1, \dots, v_{k-1}\}$ 
    the same arguments as above apply and thus 
    we get a $\Omega(n)$ lower bound.
    However, the graph is strongly connected and has diameter $2$ and 
    thus $|SCCs(G)| + \sum_{C \in SCCs(G)} \dia_C = O(1)$.
\end{proof}

\begin{proof}[Proof of Proposition~\ref{prop:coBuchi}.]
 Consider the graph constructed in Reduction~\ref{red:scc2} with $\ell=1$, add an additional loop edge $(v_0,v_0)$,
 and consider the set $\target=\{v_0\}$ of safe vertices, i.e., $v_0$ is the only safe vertex.
 It is easy to verify that the vertex $v_k$ is winning in $\coBuchi{\target}$ if it can reach $\target$. 
 Thus the $\Omega(\dia)$ lower bound for reachability also applies here.
\end{proof}

\subsection{Proofs of Section~\ref{sec:lb_diameter}}

\begin{proof}[Proof of Lemma~\ref{lem:diasparse}.]
	Let $G = (V, E)$ be the graph given by Reduction~\ref{red:diasparse} and let 
	$D$ denote its diameter. 
	We have to show that if $S_x \cap S_y = \emptyset$, then $D = 2$, otherwise $D = 3$.\smallskip

	First note that through the edges adjacent to the auxiliary vertices the diameter
	of $G$ is at most~$3$. Furthermore we have for all $0 \le i,j < s$ that
	$\dist(a_i, b_j) \le 2$, $\dist(u, c_j) = 2$, and $\dist(t, c_j) = 1$,
	for all $i \ne j$ additionally $\dist(a_i, a_j) = \dist(b_i, b_j) = 
	\dist(c_i, c_j) = 2$, and for all $v \in A \cup B$ it holds that
	$\dist(u, v) = 1$ and $\dist(t, v) = 2$. Thus whether
	$D$ is $2$ or $3$ depends only on the maximum over all $0 \le i,j < s$
	of $\dist(a_i, c_j)$ and $\dist(b_i, c_j)$. 
	
	If $S_x \cap S_y = \emptyset$, then for each pair of indices $0 \le i, j < s$ at least
	one of the edges $(a_i, c_j)$ and $(b_i, c_j)$ exists. Since $(a_i, b_i) \in E$,
	we have for all $0 \le j < s$ and all $v \in A \cup B$ that $\dist(v, c_j) \le 2$
	and hence $D = 2$.
	
	If $S_x \cap S_y \ne \emptyset$, let $\ell \in S_x \cap S_y$ and let $0 \le i, j < s$
	be such that $\ell = i \cdot s + j$. Then neither the edge $(a_i, c_j)$ nor the 
	edge $(b_i, c_j)$ exists. Thus the vertex $a_i$, and analogously the vertex $b_i$,
	has edges to the following vertices only: the auxiliary vertex~$u$,
	the vertex $b_i$, and vertices $c_{j'} \in C$ with $j' \ne j$. The vertex $c_j$
	has edges only to the auxiliary vertex~$t$ and to vertices $a_{i'} \in A$ and 
	$b_{i'} \in B$ with $i' \ne i$. Hence none of the vertices adjacent to $a_i$,
	and respectively for $b_i$, is adjacent to $c_j$ and thus we have $D = \dist(a_i, c_j)
	= \dist(b_i, c_j) = 3$.
\end{proof}

\subsection{An Improved Upper Bound for Strongly Connected Components}\label{sec:sccalg}
\noindent\emph{Result.}
Gentilini et al.~\cite{GentiliniPP08} provide a symbolic algorithm for computing the 
strongly connected components (SCCs) and show a bound of 
$O(\min(n, \dia \cdot |SCCs(G)|))$ on the number of its symbolic operations 
for a directed graph with $n$ vertices, diameter~$\dia$, and $|SCCs(G)|$ many SCCs.
Let $\dia_C$ be the diameter of an SCC~$C$. We give a tighter analysis of 
the algorithm of \cite{GentiliniPP08} that shows an upper bound of 
$O(\sum_{C \in SCCs(G)} (\dia_C+1))$
symbolic operations that matches our lower bound (Theorem~\ref{thm:scc3}).  
We have both 
$\sum_{C \in SCCs(G)} (\dia_C+1) \le (\dia + 1) \cdot |SCCs(G)|$
and $\sum_{C \in SCCs(G)} (\dia_C+1) \le n + |SCCs(G)| \le 2n$
and thus our upper bound is always at most the previous one.
We additionally observe that the algorithm can be implemented with $O(\log n)$
many sets (when the SCCs are output immediately and not stored). 
We first explain the intuition behind the algorithm of \cite{GentiliniPP08}
and then present the improved analysis of its number of symbolic steps.

\smallskip\noindent\emph{Symbolic Breadth-First Search.}
While explicit algorithms for SCCs are based on depth-first search (DFS), DFS is 
impractical in the symbolic model. However, breadth-first search (BFS) from 
a set $U \subseteq V$ can be 
performed efficiently symbolically, namely proportional to its depth, as defined below.
\begin{Definition}[Symbolic BFS]
	A \emph{forward search} from a set of vertices $U = U_0$ is given by a sequence 
	of $\post$ operations such that $U_i = U_{i-1} \cup \post(U_{i-1})$ for $i > 0$ until
	we have $U_i = U_{i-1}$. We call $U_i \setminus U_{i-1}$ the $i$-th 
	\emph{level} of the forward	search and the index of the last non-empty
	level the \emph{depth} $\depthout{U}$ of the forward search. Let 
	$\fw{U} = U_{\depthout{U}}$ be the \emph{forward set}, which is equal to the 
	vertices reachable from $U$. Analogously we define the \emph{backward search}
	of depth $\depthin{U}$ and the \emph{backward set} $\bw{U}$ for $\pre$ operations.
	We denote a singleton set $U = \set{u}$ by~$u$.
\end{Definition}
\noindent There is a simple algorithm for computing the SCCs 
symbolically with BFS that takes $O(\dia \cdot |SCCs(G)|)$ many symbolic steps: 
Start with an arbitrary vertex~$v$. Compute the SCC containing $v$ by taking 
the intersection of $\fw{v}$ and $\bw{v}$, remove the obtained SCC from the graph, 
and repeat.

\smallskip\noindent\emph{Skeleton-based Ordering.} 
The importance of DFS for SCCs lies in the \emph{order} in which the SCCs are computed. 
Starting from a vertex~$v$ that lies in an SCC without outgoing edges (i.e.\ a sink 
in the DAG of SCCs of the graph), the forward search does not leave the SCC and 
for computing SCCs the backward search can be restricted to the vertices of $\fw{v}$,
i.e., the SCC of $v$ can be determined with proportional to the diameter 
of the SCC many symbolic steps. The DFS-based SCC algorithm of \cite{Tarjan72}
finds such an SCC first. The algorithm of \cite{GentiliniPP08} is based on 
an ordering obtained via BFS that achieves a DFS-like ordering suitable for computing
SCCs symbolically. Our tighter analysis essentially shows that their approach achieves
the best ordering we can hope for. The ordering is given by so-called \emph{skeletons}.
\begin{Definition}
A pair $(\S,v)$ with $v \in V$ and $\S \subseteq V$ is a \emph{skeleton} of $\fw{u}$ for $u \in V$ if
$v$ has maximum distance from $u$ and the vertices of $\S$ form a shortest path from $u$ to $v$.
\end{Definition}
\noindent Let 
$SCC(v)$ denote the SCC containing~$v \in V$. 
The SCCs of the vertices of $\S$ 
will be computed in the following order: First $SCC(u)$ is computed by 
performing a backward search from~$u$ within $\fw{u}$. The remaining SCCs of 
of the vertices of $\S$ 
are then computed in the reverse order of 
the path induced by~$\S$, starting with $SCC(v)$. 
We now describe the overall algorithm, where in addition to $SCC(\S, v)$ the SCCs
of $V \setminus \fw{u}$ are computed (potentially using a different, previously
computed skeleton).

\smallskip\noindent\emph{The Algorithm.} The pseudo-code of the algorithm is given in Algorithm~\ref{alg:SymbolicSCC}, the pseudo-code for the sub-procedure for computing
a forward set including a skeleton in Algorithm~\ref{alg:Skeleton}.
Processing a graph $G$, the algorithm proceeds as follows: 
it starts from some vertex~$v$ and computes the set of 
reachable vertices, i.e, the forward set~$\fw{v}$, including a skeleton
that includes exactly one vertex of each level of the forward search and forms
a shortest path in~$G$.
It then starts a backward search starting from $v$ in the subgraph induced by~$\fw{v}$.
Clearly the SCC of $v$ is given by the vertices that are reached by the backward search. 
The algorithm returns this SCC as an SCC of~$G$ and recurses on 
(a) the subgraph $G_{V \setminus \fw{v}}$ induced by the 
vertices of $V \setminus \fw{v}$ and (b) the subgraph $G_{\fw{v} \setminus SCC(v)}$
induced by the vertices of $\fw{v} \setminus SCC(v)$. For the recursion on
(a) we update a potentially already existing skeleton by removing all vertices 
that are in the current SCC (initially we have an empty skeleton) 
while for the recursion on (b) we use the skeleton computed by the forward search (but also remove vertices of the current SCC).
The skeleton is then used to select the starting vertex in the consecutive steps of the
algorithm: when the algorithm is called with skeleton $(\S, v)$, then 
the forward search is started from~$v$; when the skeleton was computed in a forward
search from a vertex~$u$, then this corresponds to the vertex of the skeleton that 
is furthest away from~$u$ and contained in this recursive call.

\renewcommand{\S}{\mathcal{S}}

\begin{algorithm2e}[t]
	\SetAlgoRefName{SCC-Find}
	\caption{Symbolic SCC Algorithm}
	\label{alg:SymbolicSCC}
	\SetKwInOut{Input}{Input}
	\SetKwInOut{Global}{Global}
	\SetKwInOut{Output}{Output}
	\SetKw{break}{break}
	\SetKwFunction{Skel}{Skel\_Forward}
	\SetKwFunction{SCCFind}{SCC-Find}
	\SetKwProg{myproc}{Procedure}{}{}
	\BlankLine
	\Input{%
	  \emph{Graph} $G = (V, E)$,
	  \emph{Skeleton} $(\S,v)$
	}
	\BlankLine
	\If{$ V = \emptyset$} 
	    {
		return $\emptyset$; 
	    }
	\If{$ \S = \emptyset$} 
	    {
		$v \gets Pick(V)$ \tcp*{If there is no skeleton, pick arbitrary vertex}
	    }
	$(FW, \S',v') \gets \mbox{\Skel}(V,E,v)$  \tcp*{Forward search incl.\ skeleton} \medskip
	
	\tcc{Compute the SCC containing v}\smallskip
	$SCC \gets \{v\}$\;	
	\While{ $(\pre(SCC) \cap FW) \setminus SCC \not= \emptyset$ \label{alg:SymbolicSCC:while}}
	  {
	      $SCC \gets SCC \cup (\pre(SCC) \cap FW)$
	  }
	output $SCC$ as an SCC\;\medskip
	
	\tcc{Recursive calls}\smallskip
	$\mbox{\SCCFind}(G_{V \setminus FW},\ (\S \setminus SCC, (pre(SCC \cap \S) \setminus SCC) \cap \S))$\;
	
	$\mbox{\SCCFind}(G_{FW \setminus SCC},\ (\S' \setminus SCC, v'))$\;\smallskip
	
	\Return SCCs\; 
\end{algorithm2e}

\newcommand{\LEVEL}{\text{LEVEL}}
\begin{algorithm2e}[t]
	\SetAlgoRefName{Skel-Forward}
	\caption{Skeleton Forward Search Algorithm}
	\label{alg:Skeleton}
	\SetKwInOut{Input}{Input}
	\SetKwInOut{Output}{Output}
	\SetKw{break}{break}
	\SetKwFunction{Skel}{Skel\_Forward}
	\SetKwFunction{SCCFind}{SCC-Find}
	\SetKwProg{myproc}{Procedure}{}{}
	\BlankLine
	\Input{%
	  \emph{Graph} $G = (V, E)$,
	  \emph{Node} $v$
	}
	\Output
	{
	    $FW$ Set of vertices reachable from $v$;\\
	    $(\S', v')$ Skeleton for $FW$
	}
	\BlankLine
	$FW \gets \emptyset$; $i \gets 0$; $\LEVEL[0] \gets \{v\}$\;\medskip
	
	\tcc{Forward Search}\smallskip
	
	\While{ $\text{\emph{LEVEL}}[i] \not= \emptyset$ }
	  {
	      $FW \gets FW \cup \LEVEL[i]$\;
	      $i \gets i+1$\;
	      $\LEVEL[i] \gets \post(\LEVEL[i-1]) \setminus FW$\;
	  }\medskip
	
	\tcc{Compute Skeleton}\smallskip

	$i \gets i-1$\;
	$v' \gets Pick(\LEVEL[i])$\;
	$\S'\gets \{v'\}$\;
	\While{$i \ge 1$}
	  {
	      $i \gets i-1$\;
	      $\S'\gets \S' \cup \{Pick(\pre(\S') \cap \LEVEL[i])\}$\;
	      
	  }\smallskip
	  
	\Return $(FW, \S', v')$\; 
\end{algorithm2e}

\smallskip\noindent\emph{A Refined Analysis.}
The correctness of the algorithm is by \cite{GentiliniPP08}. 
Notice that the algorithm would be correct even without the usage of skeletons 
but the skeletons are necessary to make it efficient,
i.e., to avoid unnecessarily long forward searches.
We show the following theorem.
\begin{theorem}[Restatement of Theorem~\ref{th:sccalg}]
 With input $(G, \emptyset)$ Algorithm~\ref{alg:SymbolicSCC} computes 
 the SCCs of $G$ and requires $O(\sum_{C \in SCCs(G)} (\dia_C+1))$ symbolic operations.
\end{theorem}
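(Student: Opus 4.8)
The plan is to bound separately the cost of the two expensive primitives used in each non-empty invocation of \ref{alg:SymbolicSCC}: the call to \ref{alg:Skeleton} (a forward search together with the backward reconstruction of the skeleton) and the backward search of the \textbf{while} loop that extracts $SCC(v)$. I would first record the structural facts that we may take from \cite{GentiliniPP08}: the algorithm is correct, each recursive call runs on an induced subgraph $G'$ whose vertex set is a union of SCCs of $G$, and every SCC of $G$ is output by exactly one call. Since each call with non-empty vertex set outputs one SCC and spawns two children, the recursion tree has $|SCCs(G)|$ internal nodes and $O(|SCCs(G)|)$ nodes in total; hence the $O(1)$ additive overhead per call, together with all base cases where $V=\emptyset$, contributes only $O(|SCCs(G)|) \le O\big(\sum_{C}(\dia_C+1)\big)$. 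Because $G'$ is a union of SCCs of $G$, every surviving SCC $C$ is intact in $G'$ and its internal distances, hence $\dia_C$, are unchanged, which lets me speak of $\dia_C$ uniformly across the recursion.

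For the backward searches the bound is immediate. In a call with forward set $FW=\fw{v}$ the \textbf{while} loop computes the $\pre$-closure of $\{v\}$ inside $FW$, which is exactly $\bw{v}\cap\fw{v}=SCC(v)$; the number of iterations is the smallest $t$ for which all of $SCC(v)$ lies within backward-distance $t$ of $v$ in $FW$, and since $SCC(v)\subseteq FW$ is strongly connected this is at most $\dia_{SCC(v)}$. Thus each call spends $O(\dia_{SCC(v)}+1)$ one-step operations on its backward search, and as each SCC is extracted exactly once these sum to $O\big(\sum_{C}(\dia_C+1)\big)$.

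The crux is the forward side. A single call to \ref{alg:Skeleton} performs $\depthout{v}+1$ $\post$ operations for the level-by-level forward search and $\depthout{v}$ $\pre$ operations to build the skeleton, so its cost is $\Theta(\depthout{v}+1)=\Theta(|\S'|)$, where $\S'$ is the produced skeleton, a shortest path in $G'$ with exactly one vertex per level. The total forward cost is therefore $\Theta\big(\sum_{\text{calls}}|\S'|\big)$, and the theorem reduces to the Key Lemma $\sum_{\text{calls}}|\S'| = O\big(\sum_{C}(\dia_C+1)\big)$. The first ingredient is local: since $\S'$ is a shortest path and the SCCs of $G'$ form a DAG, the vertices of $\S'$ lying in a fixed SCC $C$ occupy a contiguous block that is itself a shortest path inside $C$, so $|\S'\cap C|\le \dia_C+1$. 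Hence $|\S'|=\sum_{C}|\S'\cap C|\le \sum_{C:\,\S'\cap C\ne\emptyset}(\dia_C+1)$, and summing over all calls gives $\sum_{\text{calls}}|\S'|\le \sum_{C}(\dia_C+1)\cdot N_C$, where $N_C$ is the number of calls whose produced skeleton meets $C$.

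It remains to prove that $N_C=O(1)$ for every SCC $C$, and I expect this to be the main obstacle. The supporting invariant is that the start vertex handed to every recursive call is the deep endpoint $v'$ of the skeleton it receives, that this skeleton is a shortest path all of whose vertices reach $v'$, and that $\S\cap FW\subseteq SCC(v)$, so the skeleton passed to the $V'\setminus FW$ branch indeed lives in that branch. Using this I would argue that once $C$ first enters a forward set and is placed on a produced skeleton, its skeleton segment \emph{persists} on the inherited skeletons of all descendant calls — both branches pass their skeleton down after deleting only the current SCC — until $C$ is output, and that no later forward search re-creates a segment for $C$, because the start vertices of those searches lie strictly beyond $C$ in the skeleton order and therefore cannot reach $C$. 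Making this persistence precise — tracking how the two recursive calls split and inherit the skeleton, and in particular ruling out that a $V'\setminus FW$ branch re-skeletonizes an already-skeletonized SCC — is the delicate part; granting it, each $C$ contributes to only a constant number of produced skeletons, so $N_C=O(1)$, and the Key Lemma and hence the theorem follow, matching the lower bound of Theorem~\ref{thm:scc3}.
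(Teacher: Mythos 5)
Your route is sound and shares most of its ingredients with the paper's proof: the per-call cost decomposition $O(\dia_{SCC(v)}+1+|\S'|)$, the observation that a skeleton is a shortest path so that its intersection with any single SCC is a contiguous block of size at most $\dia_C+1$, and the $O(|SCCs(G)|)$ bound on the recursion tree. Where you genuinely diverge is the final accounting. You charge each produced skeleton immediately to every SCC it meets and therefore need the multiplicity bound $N_C=O(1)$. The paper instead defers: in each call it charges only $|\S'\cap SCC(v)|\le\dia_{SCC(v)}+1$ to the SCC output there, leaves a unit charge sitting on each vertex of $\S'\setminus SCC(v)$, notes that from then on every such vertex belongs to exactly one inherited skeleton (the two recursive calls receive the disjoint sets $\S\setminus SCC\subseteq V\setminus FW$ --- using $\S\cap FW\subseteq SCC(v)$ --- and $\S'\setminus SCC\subseteq FW\setminus SCC$), and cashes the deferred charges in the call where the vertex's SCC is output, via the lemma that an \emph{inherited} skeleton satisfies $|\S\cap SCC(v)|\le\dia_{SCC(v)}+1$. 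That bookkeeping never needs to count how many produced skeletons touch a given SCC.

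The step you flag as delicate does go through, with $N_C\le 2$, and your sketch names the right mechanism; here is how to close it. Suppose a produced skeleton $\S'$ meets $C$ with $C\ne SCC(v)$. Then $C\subseteq FW$ (an SCC of the current graph is either contained in or disjoint from $\fw{v}$), so $C$ survives only into the $FW\setminus SCC$ branch, whose inherited skeleton $\S'\setminus SCC(v)$ meets $C$. Now maintain the invariant that the current inherited skeleton $\S$ meets $C$ while $C\ne SCC(v)$: since every vertex of $\S$ reaches the call's start vertex $v$, a vertex of $\S\cap C$ lying in $\fw{v}$ would be in $SCC(v)$, so $C\cap FW=\emptyset$; hence the newly produced skeleton, which lives inside $FW$, misses $C$, and $C$ descends into the $V\setminus FW$ branch where the truncated skeleton $\S\setminus SCC$ still meets $C$ and the invariant persists. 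So after the first produced skeleton that meets $C$, no later produced skeleton meets $C$ except possibly the one in the call that outputs $C$, i.e.\ $N_C\le 2$. With that, your Key Lemma and the theorem follow. (Minor point: the start vertex of the $V\setminus FW$ branch is not the deep endpoint $v'$ of the freshly computed skeleton but the last surviving vertex of the truncated inherited skeleton $\S\setminus SCC$; this does not affect your argument, but the blanket statement that every call starts from ``the deep endpoint $v'$'' should be read per branch.)
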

\noindent 
The analysis of \cite{GentiliniPP08} of the number of symbolic steps of 
Algorithm~\ref{alg:SymbolicSCC}
uses that (1) each vertex is added to at most two skeletons, (2) the 
steps of the forward searches can be charged to the vertices in the skeletons,
and (3) backward searches are only performed to immediately identify an SCC and 
thus can be charged to the vertices of the SCC; hence
both the steps of the forward and of the backward searches can be bounded with $O(n)$.
For the backward searches it can easily be seen that the number of
$\pre$ operations to identify the SCC~$C$ is also bounded by 
$\dia_C + 1$. For the forward searches we show that 
each part of the skeleton (that in turn is charged for the forward search) can be 
charged to $\dia_C + 1$ for some SCC~$C$;
this in particular exploits that skeletons are shortest paths (in the graph
in which they are computed).
\begin{lemma}[\cite{GentiliniPP08}]\label{lem:shortestpath}
 For each recursive call of \ref{alg:SymbolicSCC} with input $G, (\S, v)$ we have 
 that $\S$ is a set of vertices that induces a shortest path in the graph~$G$ 
 and $v$ is the last vertex of this path.
\end{lemma}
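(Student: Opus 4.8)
The plan is to prove the statement as a recursion invariant, by induction on the recursion tree of \ref{alg:SymbolicSCC}. For the base case the top-level call is made with the empty skeleton $\S = \emptyset$, which vacuously induces a shortest path; the same covers any later call that receives $\S = \emptyset$ and then selects its root via $Pick(V)$. For the inductive step I would fix a call with input $(G,(\S,v))$ satisfying the invariant, so $\S = (s_0,\dots,s_d)$ induces a shortest path in $G$ with $s_d = v$, and verify that both recursive calls again receive a skeleton of this form. Throughout I rely on two elementary facts: every contiguous subpath of a shortest path is again a shortest path, and a shortest path all of whose vertices survive in a subgraph remains shortest there, since deleting vertices can only increase distances.

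For the recursive call on $G_{\fw{v}\setminus SCC}$ the passed skeleton is $(\S'\setminus SCC, v')$, where $(\fw{v},\S',v')$ is the output of \ref{alg:Skeleton}. I would first observe that \ref{alg:Skeleton} selects exactly one vertex per BFS level and joins consecutive picks by a $\pre$-edge, so $\S' = (t_0,\dots,t_{d'})$ with $t_0 = v$, $t_{d'} = v'$ is a shortest path from $v$ to $v'$; as $\fw{v}$ is closed under $\post$ in $G$, it is also a shortest path in $G$. The key structural claim is that $\S' \cap SCC(v)$ is a \emph{prefix} of $\S'$: if $t_{i+1}\in SCC(v)$, then $t_{i+1}$ reaches $v$, and since $t_i\in\fw{v}$ is reachable from $v$ with $t_i\to t_{i+1}$, the vertex $t_i$ lies on a cycle through $v$, hence $t_i\in SCC(v)$. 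Thus $\S'\setminus SCC$ is a suffix $(t_{q+1},\dots,t_{d'})$, a shortest subpath ending at $v'$, which survives in $G_{\fw{v}\setminus SCC}$, as required.

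For the recursive call on $G_{V\setminus \fw{v}}$ the passed skeleton is $(\S\setminus SCC,\ (\pre(SCC\cap\S)\setminus SCC)\cap\S)$. I would first show the symmetric fact that $\S\cap SCC(v)$ is a \emph{suffix} of $\S$ and, moreover, that $\S\cap \fw{v} = \S\cap SCC(v)$: any $s_i$ on $\S$ reaches $v=s_d$ along the path, so $s_i\in\fw{v}$ forces $v$ and $s_i$ onto a common cycle, i.e.\ $s_i\in SCC(v)$; and if $s_i\in SCC(v)$ then its path-successor $s_{i+1}$ is too. Writing $\S\cap SCC(v) = (s_p,\dots,s_d)$, this yields $\S\setminus SCC = \S\setminus\fw{v} = (s_0,\dots,s_{p-1})$, a shortest prefix whose vertices all lie in $V\setminus\fw{v}$, hence still shortest in $G_{V\setminus\fw{v}}$. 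It then remains to identify its last vertex $s_{p-1}$ with the expression $(\pre(SCC\cap\S)\setminus SCC)\cap\S$: since $s_{p-1}\to s_p$ and $s_{p-1}\notin SCC$, the vertex $s_{p-1}$ belongs to this set, and it is the only one, because an edge from some earlier $s_j$ ($j<p-1$) into $SCC\cap\S=\{s_p,\dots,s_d\}$ would give a strictly shorter path from $s_0$ to $v$, contradicting the invariant.

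The main obstacle is the call on $G_{V\setminus\fw{v}}$, specifically pinning down the new start vertex: one must simultaneously establish that the deleted part $\S\cap SCC$ is a contiguous suffix, that removing $\fw{v}$ deletes exactly these skeleton vertices (so the surviving skeleton is consistent with the new vertex set), and that $(\pre(SCC\cap\S)\setminus SCC)\cap\S$ singles out precisely $s_{p-1}$. All three hinge on the shortest-path property, which is exactly why the statement must be carried as an invariant and proved by induction rather than in isolation.
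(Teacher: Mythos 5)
The paper itself does not prove this lemma---it imports it wholesale from Gentilini et al.\ \cite{GentiliniPP08} and only uses the statement---so there is no in-paper argument to compare against; what you have written is a correct, self-contained proof that fills that gap. Your inductive structure is the right one, and the three pivotal claims all check out: (i) $\mathcal{S}'$ returned by \ref{alg:Skeleton} picks one vertex per BFS level with consecutive $\pre$-edges, and since every shortest path from $v$ stays inside $\fw{v}$, levels coincide with distances in $G$, so $\mathcal{S}'$ is a shortest $v$--$v'$ path in $G$; (ii) $\mathcal{S}' \cap SCC(v)$ is a prefix and $\mathcal{S} \cap SCC(v)$ is a suffix, both by the ``path into/out of the SCC closes a cycle through $v$'' argument, and $\mathcal{S} \cap \fw{v} = \mathcal{S} \cap SCC(v)$ guarantees the surviving prefix really lives in $G_{V \setminus \fw{v}}$; (iii) the expression $(\pre(SCC \cap \mathcal{S}) \setminus SCC) \cap \mathcal{S}$ is exactly the last surviving vertex $s_{p-1}$, where uniqueness uses that a shortcut edge $s_j \to s_{j'}$ with $j' > j+1$ would contradict minimality of the path---this is precisely where the shortest-path invariant earns its keep, as you note. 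The only points you leave implicit are routine: that the backward search restricted to $\fw{v}$ computes the true $SCC(v)$ (any path witnessing membership in $SCC(v)$ stays in $\fw{v}$), and that the degenerate cases where the surviving skeleton is empty are absorbed by the $\mathcal{S} = \emptyset$ branch of \ref{alg:SymbolicSCC}, which you do flag for the base case.
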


We first recall the result from \cite{GentiliniPP08} that shows 
that the number of symbolic operations in an execution of \ref{alg:Skeleton}
is proportional to the size of the computed skeleton.
\begin{lemma}[\cite{GentiliniPP08}]\label{lem:skeleton}
 \ref{alg:Skeleton} only requires $O(\S')$ symbolic operations, i.e., is linear in the output,
  and can be implemented using only constantly many sets.
\end{lemma}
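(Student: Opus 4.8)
The plan is to prove the two assertions of the lemma separately: the bound of $O(|\S'|)$ on the number of symbolic one-step operations, and the implementability with a constant number of sets. The operation count is the substantive part and is a direct tally. In the forward-search phase of \ref{alg:Skeleton} the body of the \texttt{while} loop is entered exactly once for each non-empty level $\LEVEL[0],\dots,\LEVEL[\depthout{v}]$, and each such iteration performs a single $\post$ operation (the final one producing the empty level that terminates the loop), so this phase uses $\depthout{v}+1$ one-step operations. In the skeleton phase the \texttt{while} loop performs exactly one $\pre$ operation per level strictly below the top, i.e.\ $\depthout{v}$ operations. Since the skeleton receives $v'$ from the top level and then one vertex from each lower level (levels $0,\dots,\depthout{v}$), it contains one vertex per level and hence $|\S'|=\depthout{v}+1$. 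Therefore the total number of one-step operations is $(\depthout{v}+1)+\depthout{v}=2\depthout{v}+1=2|\S'|-1=O(|\S'|)$, while the remaining work consists only of basic set operations whose number is of the same order and which are not counted; this is exactly the claimed linearity in the output.

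For the space bound I would replace the verbatim algorithm, which stores all $\depthout{v}+1$ level sets, by an equivalent implementation that keeps only $O(1)$ sets alive at once. In the forward phase I keep only the cumulative forward set $FW$ and the current frontier (overwriting the previous frontier), and I record $v'$ as an arbitrary vertex of the last non-empty frontier. The skeleton is then rebuilt by a backward walk that retains only a constant number of sets: the output set $\S'$, the current skeleton vertex, the current ``ball'' of vertices within the relevant forward distance, and a constant number of scratch sets. The structural fact that makes such a local backward walk sound is that the skeleton is a \emph{shortest} path (Lemma~\ref{lem:shortestpath}): a within-$FW$ predecessor of a vertex at level $i+1$ lies at level $\ge i$, so intersecting $\pre$ of the current skeleton vertex with the ball of radius $i$ isolates a level-$i$ vertex to extend the skeleton, and no full level set needs to be stored.

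The step I expect to be the main obstacle is precisely this space claim, and concretely the maintenance of the nested balls during the backward walk. There is a genuine tension: storing every level $\LEVEL[i]$ costs $\Theta(\depthout{v})$ sets, whereas recomputing the ball of radius $i$ afresh by a forward search from $v$ costs $\Theta(i)$ one-step operations per step and hence $\Theta(\depthout{v}^2)$ in total, which would break the $O(|\S'|)$ budget just established. Resolving this requires advancing from the ball of radius $i$ to that of radius $i-1$ using only local information at the current skeleton vertex, exploiting both the adjacency of consecutive skeleton vertices and the shortest-path property of Lemma~\ref{lem:shortestpath} so that each step costs $O(1)$ one-step operations and reuses a constant number of sets. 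This bookkeeping is the delicate point, and I would follow the argument of \cite{GentiliniPP08} to verify that a constant number of sets indeed suffices without inflating the operation count beyond $O(|\S'|)$.
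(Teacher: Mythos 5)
Your first paragraph is correct and is essentially the paper's own argument: one $\post$ per level in the forward phase, one $\pre$ per level in the skeleton phase, one skeleton vertex per level, hence $2|\mathcal{S}'|-1$ one-step operations and a linear number of set operations.

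The gap is in the space bound, precisely where you flag it, and the invariant you sketch cannot be repaired along the lines you propose. You want to keep the forward ball of radius~$i$ (the set $U_i$ of vertices at distance at most $i$ from~$v$) and advance it to the ball of radius $i-1$ ``using only local information at the current skeleton vertex.'' No such update exists: passing from $U_i$ to $U_{i-1}$ amounts to computing $\mathit{LEVEL}[i]=U_i\setminus U_{i-1}$, and a vertex at level~$i$ need not have any edge into level $i+1$ (all of its out-edges may lead back to earlier levels), so $\mathit{LEVEL}[i]$ --- and hence $U_{i-1}$ --- is simply not determined by $U_i$, the skeleton suffix, and $\pre$/$\post$ applied to them; Lemma~\ref{lem:shortestpath} constrains the skeleton, not the levels, and does not help here. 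The paper's implementation avoids forward balls altogether during the backward walk: it recomputes a level substitute \emph{from the next level}, obtaining from $\mathit{LEVEL}[i+1]$ a replacement for $\mathit{LEVEL}[i]$ via one $\pre$ operation, intersected with $\fw{v}$ and with the union of the already-recovered higher levels removed --- one $\pre$, $O(1)$ set operations, and $O(1)$ stored sets per step, so the operation count grows only by a constant factor. (Even this is genuinely delicate: the recomputed sets need not coincide with the original forward levels, so one must still argue that the sequence of picked vertices forms a shortest path ending at $v'$, which is all that the later charging arguments use.) Finally, your fallback of ``following the argument of \cite{GentiliniPP08}'' does not close the gap, because that reference stores all $\depthout{v}+1$ level sets on a stack; the constant-set refinement is exactly the present paper's addition to their analysis, so the second assertion of the lemma remains unproven in your write-up.
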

\begin{proof}
 For each level of the forward search we need one $\post$ operation in the first
 while loop and one $\pre$ operation in the second
 while loop, and the number of set operations is in the order of one-step operations.
 As for each level we add one vertex to $\S'$, the result follows.
 Moreover, there is no need to explicitly store all the levels as they can be 
 easily recomputed from the next level when needed, increasing the number of 
 symbolic operations only by a constant factor.
\end{proof}
\begin{Remark}
 Given Lemma~\ref{lem:skeleton} we can implement \ref{alg:SymbolicSCC}
 using $O(\log n)$ many sets at a time by recursing on the smaller of the
 two sub-graphs $G_{V \setminus FW}$ and $G_{FW \setminus SCC}$ first.
\end{Remark}

We split the cost of $O(\S')$ symbolic operations for \ref{alg:Skeleton} 
into two parts: the part $\S' \cap SCC(v)$ where $SCC(v)$ is the SCC 
identified in this level of recursion and the part $\S' \setminus SCC(v)$ that 
is passed to one of the recursive calls.
The following lemma shows that the first part and the subsequent backward search 
can be charged to $\dia_{SCC(v)} + 1$, using that $\S'$ is a shortest path in $\fw{v}$.

\begin{lemma}
 Without accounting for the recursive calls, each call of \ref{alg:SymbolicSCC}
 takes  $O(\dia_{SCC(v)} + 1 + |\S'\setminus SCC(v)|)$ symbolic operations, where
 $\S'$ is the new skeleton computed by \ref{alg:Skeleton}.
\end{lemma}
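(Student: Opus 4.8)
The plan is to bound the symbolic work in a single call of \ref{alg:SymbolicSCC} by accounting separately for its two sources of one-step operations: the forward search carried out by \ref{alg:Skeleton} (whose cost is $O(|\S'|)$ by Lemma~\ref{lem:skeleton}) and the backward search in the \textbf{while} loop (which identifies $SCC(v)$). Everything else---the two emptiness tests, the $Pick$, and the set operations used to assemble the arguments of the recursive calls---contributes only a constant number of operations plus a factor proportional to the one-step operations already counted, so it can be absorbed into the bound.

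First I would bound the backward search. Each iteration of the \textbf{while} loop performs one $\pre$ operation (intersected with $FW$), and the loop terminates one iteration after it stops adding vertices. The set it computes is exactly $SCC(v)$: a vertex $w \in FW = \fw{v}$ is added iff it can reach $v$ within $\fw{v}$, and since $v$ reaches every vertex of $\fw{v}$, this happens iff $w$ and $v$ lie in the same SCC. The number of iterations equals the largest backward-BFS level reached, i.e.\ the maximum over $w \in SCC(v)$ of the shortest-path distance from $w$ to $v$ \emph{within} $\fw{v}$. Because $SCC(v)$ is strongly connected and contained in $\fw{v}$, for every such $w$ there is a path from $w$ to $v$ lying entirely inside $SCC(v)$ of length at most $\dia_{SCC(v)}$; hence this distance is at most $\dia_{SCC(v)}$ and the backward search uses $O(\dia_{SCC(v)} + 1)$ symbolic operations.

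Next I would bound the forward-search contribution by splitting $|\S'| = |\S' \cap SCC(v)| + |\S' \setminus SCC(v)|$ and showing $|\S' \cap SCC(v)| \le \dia_{SCC(v)} + 1$. By construction $\S'$ is a shortest path from $v$ to the furthest reachable vertex $v'$ and contains exactly one vertex at each forward-BFS level $0, 1, \ldots, \depthout{v}$. A skeleton vertex at level $i$ is at distance exactly $i$ from $v$ in $\fw{v}$; but every vertex of $SCC(v)$ lies at forward distance at most $\dia_{SCC(v)}$ from $v$, again because a shortest path from $v$ inside the strongly connected set $SCC(v) \subseteq \fw{v}$ has length at most $\dia_{SCC(v)}$. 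Thus the skeleton vertices that fall in $SCC(v)$ occupy only levels $0$ through $\dia_{SCC(v)}$, giving at most $\dia_{SCC(v)} + 1$ of them. Combining with Lemma~\ref{lem:skeleton}, the forward search costs $O(|\S'|) = O(\dia_{SCC(v)} + 1 + |\S' \setminus SCC(v)|)$.

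Adding the two bounds yields the claimed $O(\dia_{SCC(v)} + 1 + |\S' \setminus SCC(v)|)$. The one step I expect to require the most care is the pair of distance claims---that shortest paths between $v$ and the vertices of $SCC(v)$ can be realized \emph{within} $\fw{v}$ and are therefore bounded by $\dia_{SCC(v)}$. This is where the restriction of both searches to $\fw{v}$ (rather than the whole graph) matters, and it relies on the strong connectivity of $SCC(v)$ together with $SCC(v) \subseteq \fw{v}$; for the forward bound it additionally uses the skeleton structure (one vertex per level, shortest path) guaranteed by Lemma~\ref{lem:shortestpath} and \ref{alg:Skeleton}.
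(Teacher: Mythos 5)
Your proof is correct and follows essentially the same decomposition as the paper: charge the backward search to $O(\dia_{SCC(v)}+1)$ via the strong connectivity of $SCC(v)$ inside $\fw{v}$, invoke Lemma~\ref{lem:skeleton} for the forward search, and bound $|\S' \cap SCC(v)| \le \dia_{SCC(v)}+1$ using the shortest-path structure of the skeleton. The only (harmless) difference is in that last step, where you count skeleton vertices by BFS level (one per level, and $SCC(v)$ occupies only levels $0,\dots,\dia_{SCC(v)}$), whereas the paper argues that $\S' \cap SCC(v)$ is a prefix of the skeleton path and that this sub-path is itself a shortest path of length at most $\dia_{SCC(v)}$.
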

\begin{proof}
 By Lemma~\ref{lem:skeleton}, the call to \ref{alg:Skeleton} takes 
 $O(|\S'|)$ symbolic operations.
 In the $i$-th iteration of the while loop (Line~\ref{alg:SymbolicSCC:while}) 
 we add those vertices of $SCC(v)$ that can reach $v$ in $i$~steps. 
 That is, the loop terminates after $\dia_{SCC(v)}$ iterations and thus only requires
 $O(\dia_{SCC(v)}+1)$ many symbolic operations. 
 All the other steps just need a constant number of symbolic operations.
 That is, we have an upper bound of $O(\dia_{SCC(v)}+1+|\S'|)$.
 Now as $\S'$ induces a path starting at $v$ in $\fw{v}$,
 we have that whenever a vertex $u \ne v$ of $\S'$ can reach $v$, then also
 all vertices on the path from $v$ to $u$ can reach $v$
 and are therefore in the same SCC as~$v$. 
 Since the path is a shortest path, also every sub-path is a shortest path 
 and thus we have that $|\S' \cap SCC(v)| \leq \dia_{SCC(v)}+1$,
 i.e., $|\S'| \in O(\dia_{SCC(v)} + 1 + |\S'\setminus SCC(v)|)$.
 Hence we obtain the desired bound of $O(\dia_{SCC(v)} + 1 + |\S'\setminus SCC(v)|)$ 
 for the number of symbolic operations.
\end{proof}

Note that at each level of recursion we only charge the diameter of the SCC that
is output and the vertices of the newly computed skeleton. Thus we do not charge vertices of a 
skeleton again until they are contained in an SCC that is identified. The 
following lemma shows that in this case we can charge the symbolic steps 
that were charged to the vertices of the skeleton to $\dia_C + 1$,
where $C$ is the SCC the part of the skeleton belongs to.
Notice that $\S \setminus SCC(v)$ is the skeleton for the first recursive call and 
$\S'\setminus SCC(v)$ is the skeleton for the second recursive call, i.e., all vertices
of a skeleton are finally assigned to an SCC.
That is, we can bound the total number of symbolic steps by
$O(\sum_{C \in SCCs(G)} (\dia_C+1))$.
\begin{lemma}
 Whenever \ref{alg:SymbolicSCC} is called for a graph~$H$ and a skeleton $(\S,v)$,
 then $|\S \cap SCC(v)| \leq \dia_{SCC(v)}+1$ .
\end{lemma}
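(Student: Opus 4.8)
The plan is to prove the invariant $|\S \cap SCC(v)| \leq \dia_{SCC(v)}+1$ by leveraging Lemma~\ref{lem:shortestpath}, which tells us that in any recursive call with input $(H, (\S, v))$, the set $\S$ induces a shortest path in the current graph~$H$ with $v$ as its last vertex. The key geometric fact I would exploit is that any sub-path of a shortest path is itself a shortest path, combined with the observation that the vertices of $\S$ lying in $SCC(v)$ must form a contiguous suffix of this path ending at~$v$.

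First I would invoke Lemma~\ref{lem:shortestpath} to fix notation: write $\S$ as inducing a shortest path $w_0, w_1, \dots, w_m = v$ in~$H$, where the edges go in the direction of increasing index. The central claim is that the vertices of $\S$ belonging to $SCC(v)$ are exactly those $w_i$ with $i$ large enough, i.e.\ they form a suffix of the path. To see this, suppose $w_i \in SCC(v)$ for some $i < m$; then $w_i$ can reach $v$ (it is in $v$'s SCC) and, since $\S$ induces a path from $w_i$ through $w_{i+1}, \dots, w_m = v$, every intermediate vertex $w_j$ with $i < j \le m$ lies on a path from $w_i$ to~$v$ and also can reach~$v$. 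Conversely $v$ reaches each such $w_j$ along the path, so every $w_j$ with $j \ge i$ is in the same SCC as~$v$. This establishes that $\S \cap SCC(v) = \{w_i, w_{i+1}, \dots, w_m\}$ for some index $i$, a contiguous suffix.

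Next I would bound the length of this suffix using the shortest-path property. Since $\S$ induces a shortest path in~$H$ and $SCC(v) \subseteq H$, the restriction of this path to its suffix $w_i, \dots, w_m$ is a shortest path from $w_i$ to~$v$ of length $m - i$. But this suffix lies entirely inside $SCC(v)$, so $\dist(w_i, v) = m - i$ is a distance realized within the strongly connected component, and hence $m - i \le \dia_{SCC(v)}$. Therefore $|\S \cap SCC(v)| = m - i + 1 \le \dia_{SCC(v)} + 1$, which is exactly the desired bound.

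\textbf{The main obstacle} I anticipate is carefully justifying that the shortest-path structure is preserved appropriately: specifically, that the distance $m - i$ measured along the skeleton path in~$H$ coincides with (or is at least an upper bound matching) the intra-SCC distance $\dist(w_i, v)$ within $SCC(v)$. The subtlety is that $H$ may be a strict subgraph of the original graph arising from recursion, so one must confirm that $SCC(v)$ is computed relative to~$H$ and that $\dia_{SCC(v)}$ refers to the diameter of this component as a subgraph—which is consistent because an SCC's internal distances are unaffected by vertices outside it. Since the skeleton path is a shortest path in~$H$ and its suffix stays within $SCC(v)$, the suffix length cannot exceed the diameter of that SCC; once this alignment of the two notions of distance is pinned down, the inequality follows immediately. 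Everything else is the contiguity argument, which is purely combinatorial and follows directly from the path structure guaranteed by Lemma~\ref{lem:shortestpath}.
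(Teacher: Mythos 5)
Your proposal is correct and follows essentially the same route as the paper's own proof: invoke Lemma~\ref{lem:shortestpath}, observe that the skeleton vertices in $SCC(v)$ form a suffix of the path all of whose vertices lie in $SCC(v)$, and use that a sub-path of a shortest path is a shortest path to conclude $|\S \cap SCC(v)| - 1 \le \dia_{SCC(v)}$. The only (harmless) looseness is the phrase that $v$ reaches each $w_j$ ``along the path''---more precisely, $v$ reaches $w_i$ because $w_i \in SCC(v)$ and then follows the path from $w_i$ to $w_j$---but this is exactly the argument the paper makes.
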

\begin{proof}
By Lemma~\ref{lem:shortestpath} the set $\S$ induces a shortest path in~$H$ that 
ends at~$v$. Thus if $v$ can reach a vertex $u \ne v$ of $\S$, then it can also 
reach all vertices of $\S$ that are on the path from $u$ to $v$ and all vertices
on this sub-path are in the same SCC as~$v$. Furthermore, the sub-path is a 
shortest path as well and thus the vertices $|\S \cap SCC(v)|$ form a 
shortest path in $SCC(v)$ and hence the diameter $\dia_{SCC(v)}$ of $SCC(v)$
 is at least $|\S \cap SCC(v)|-1$.
\end{proof}

\subsection{\texorpdfstring{$(1+\varepsilon)$}{(1+epsilon)}-Approximation of Diameter with \texorpdfstring{$\widetilde{O}(n \sqrt{D})$}{O(n * sqrt(D))} Symbolic Operations}
\label{sec:diaalg}

\smallskip\noindent\emph{Notation.}
Given a vertex $u \in V$, let $\distx{x}{u}$ denote the vertices with distance
at most $x$ \emph{from} $u$ and let $\distxto{y}{u}$ be the set of vertices with distance at most $y$ \emph{to} $u$. We have that $\set{u} \cup \post(\set{u}) = \distx{1}{u}$ and $\distx{x}{u} \cup \post(\distx{x}{u}) = \distx{x+1}{u}$. 
The maximum distance from the vertex $u$ to any other vertex $v \in V$ is 
given by the smallest~$x$ for which $\distx{x}{u} \cup 
\post(\distx{x}{u}) = \distx{x}{u}$.
Note that $x$ is at most $\dia$ and that computing~$x$ in this way corresponds
to performing a breadth-first-search (BFS) from $x$ on explicitly represented graphs;
thus following \cite{AingworthCIM99}, we denote the smallest $x$ for which
$\distx{x}{u} \cup \post(\distx{x}{u}) = \distx{x}{u}$ 
with $\depthout{u}$ and the smallest 
$y$ for which $\distxto{y}{u} \cup \pre(\distxto{y}{u}) = \distxto{y}{u}$
with $\depthin{u}$.
The set of vertices reachable from $u \in V$ is given by $\distx{\depthout{u}}{u}$.

\smallskip\noindent\emph{The Basic Exact Algorithm.}
The maximum  $\dist(u, v)$ over all pairs $u, v \in V$ for which $u$ can reach $v$ 
can be computed by taking the maximum of $\depthout{u}$ over all $u \in V$. 
Computing $\depthout{u}$ for all $u \in V$ takes $O(n \cdot \dia)$ many
$\post$ operations. To obtain only the value of $\dia$, only a constant number 
of sets have to be stored. Note that this basic algorithm only uses a linear number 
of symbolic operations for graphs with constant diameter. See Section~\ref{sec:lb_diameter} for
a matching lower bound for this case.

\smallskip\noindent\emph{A Simple 2-Approximation Algorithm.} 
If the graph $G$ is strongly connected, then a 2-approximation of 
$\dia$ is given by $(\depthout{u} + \depthin{u}) / 2$ for
any vertex $u \in V$. This follows from the triangle inequality and 
takes $O(\dia)$ many symbolic steps to compute.

\smallskip\noindent\emph{Result.}
We present an algorithm that computes an estimate $\adia$ of the diameter~$\dia$
of the input graph~$G$ such that $\adia \in [\dia-x ,\dia]$ for a parameter 
$x \le \sqrt{\dia}$ and takes $O(n \cdot \dia / x \log n )$ symbolic steps and uses a 
constant number of sets. For $x = \sqrt{\dia}$ this implies a bound of 
$O(n \sqrt{\dia} \log n)$ on the number of symbolic steps and an approximation guarantee
that is better than a $(1+\varepsilon)$-approximation for any constant 
$\varepsilon > 0$ (here we assume $\sqrt{\dia} \ge (1 + \varepsilon)/\varepsilon$; 
otherwise $\dia$ is constant anyway
and thus the exact algorithm only takes $O(n)$ symbolic steps). To pick 
the parameter $x$ correctly, one can use the 2-approximation algorithm
if the graph is strongly connected or use doubling search
at the cost of an additional factor of $\log n$.

\smallskip\noindent\emph{Searching from Neighborhood.} Let $a$ and $b$ be two vertices with maximum distance in $G$, i.e., $\dist(a,b) = \dia$.
We start with the simple observation that it is sufficient to determine the 
depth of a BFS from a vertex with distance at most $x$ from $a$ to obtain 
an estimate that is at most $x$ smaller than $\dia$.
\begin{observation}[see also \cite{AingworthCIM99}]
	Let $a,b \in V$ be such that $\dist(a,b) = \dia$. Then $\depthout{v} \ge \dia -x$
	for $v \in \distx{x}{a}$ and $\depthin{u} \ge \dia - y$ for $u \in \distxto{y}{b}$.
\end{observation}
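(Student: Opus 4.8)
The plan is to prove both inequalities by a single application of the triangle inequality for shortest-path distances, reading off the claimed bounds from the fact that $\depthout{\cdot}$ and $\depthin{\cdot}$ are precisely the forward and backward eccentricities, i.e.\ $\depthout{w}$ is the maximum of $\dist(w, \cdot)$ over the vertices reachable from $w$ and $\depthin{w}$ is the maximum of $\dist(\cdot, w)$ over the vertices that reach $w$.

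First I would handle the forward statement. For a fixed $v \in \distx{x}{a}$ one has $\dist(a,v) \le x$, and concatenating a shortest path from $a$ to $v$ with one from $v$ to $b$ produces a walk from $a$ to $b$; hence $\dia = \dist(a,b) \le \dist(a,v) + \dist(v,b) \le x + \dist(v,b)$, which rearranges to $\dist(v,b) \ge \dia - x$. Since $b$ is reachable from $v$, it is one of the vertices over which the maximum defining $\depthout{v}$ is taken, so $\depthout{v} \ge \dist(v,b) \ge \dia - x$, as required.

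The backward statement is entirely symmetric, and I would just transpose the roles of source and sink. For $u \in \distxto{y}{b}$ we have $\dist(u,b) \le y$, and the triangle inequality $\dia = \dist(a,b) \le \dist(a,u) + \dist(u,b) \le \dist(a,u) + y$ yields $\dist(a,u) \ge \dia - y$. As $a$ reaches $u$, the vertex $a$ witnesses the backward eccentricity of $u$, giving $\depthin{u} \ge \dist(a,u) \ge \dia - y$.

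The one point requiring care — rather than a genuine obstacle — is that the bound transfers from a distance to an eccentricity only when the relevant endpoint is actually reachable: in the forward case $b$ must be reachable from $v$, and in the backward case $u$ must be reachable from $a$, for otherwise the corresponding distance is infinite and the vertex fails to serve as a witness. In the strongly connected setting in which this observation is used, all four quantities $\dist(a,v)$, $\dist(v,b)$, $\dist(a,u)$, and $\dist(u,b)$ are finite, so this reachability is automatic and the triangle-inequality step constitutes the whole argument.
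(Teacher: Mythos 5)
Your argument is correct and is precisely the triangle-inequality reasoning the paper leaves implicit---the observation is stated without a written proof, with only a pointer to Aingworth et al., and $\dia = \dist(a,b) \le \dist(a,v) + \dist(v,b)$ is exactly the intended step. The reachability caveat you flag at the end is a genuine one rather than a formality: for a general directed graph the statement as written can fail (take $v$ a sink with $\dist(a,v)=1$, so $\depthout{v}=0$), so the observation implicitly requires $b$ to be reachable from $v$ (resp.\ $u$ to be reachable from $a$), a hypothesis your proof correctly isolates but which the paper, whose Theorem~\ref{th:diaalg} is claimed for arbitrary directed graphs, does not make explicit.
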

\noindent Thus to obtain an estimate for the diameter, it is certainly sufficient 
to find a vertex~$u$ in $\distx{x}{v}$ for \emph{every} vertex $v\in V$ and compute 
$\depthout{u}$ for all these vertices.
If the graph is not strongly connected, it can happen that some vertices $v$ 
can not reach $x$ vertices and hence $\distx{x}{v}$ might contain less than $x$ 
vertices. In this case we know that $\depthout{v} < \dia$. Thus 
it also suffices to find a vertex~$u$ in $\distx{x}{v}$ for every vertex $v \in V$
for which $|\distx{x}{v}| \ge x$; we denote this set of vertices with $\highdeg{x}$.
\begin{corollary}\label{cor:approxdiagivenS}
	Let $S$ be a set of vertices such that $S \cap \distx{x}{v} \ne \emptyset$
	for all $v \in \highdeg{x}$ for $x < \dia$. Let $\adia = \max_{u \in S} \depthout{u}$. Then 
	$\adia \in [\dia - x, \dia]$. Given $S$, computing $\adia$ takes $O(|S| \cdot \dia)$ symbolic operations and storing $O(1)$ many sets.
\end{corollary}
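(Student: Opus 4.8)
The plan is to separate the claim into the approximation guarantee $\adia \in [\dia-x, \dia]$ and the resource bound, and to establish the two inequalities $\adia \le \dia$ and $\adia \ge \dia-x$ independently. The upper bound is immediate from the definitions: for every vertex $u$ the quantity $\depthout{u}$ is the largest $\dist(u,w)$ over vertices $w$ reachable from $u$, and by the definition of the diameter as the maximum finite pairwise distance this is at most $\dia$; taking the maximum over $u \in S$ yields $\adia = \max_{u \in S}\depthout{u} \le \dia$.

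For the lower bound I would fix a pair $a,b$ with $\dist(a,b)=\dia$ and argue that the hitting hypothesis forces $S$ to contain a vertex close to $a$. First I would verify that $a \in \highdeg{x}$: since $x < \dia$, any shortest path from $a$ to $b$ contains at least $x+1$ distinct vertices lying within distance $x$ of $a$, so $|\distx{x}{a}| \ge x+1 \ge x$. The hypothesis on $S$ then applies to $v = a$ and produces a vertex $u \in S \cap \distx{x}{a}$, i.e.\ a $u$ with $\dist(a,u) \le x$. Invoking the preceding Observation with this $u$ playing the role of $v$ gives $\depthout{u} \ge \dia - x$, whence $\adia = \max_{w \in S}\depthout{w} \ge \depthout{u} \ge \dia - x$. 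Combining the two bounds proves $\adia \in [\dia - x, \dia]$.

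For the complexity I would note that computing $\depthout{u}$ for a single $u$ is a symbolic forward BFS: starting from $\set{u}$ one repeatedly replaces the current set $R$ by $R \cup \post(R)$ while maintaining an integer level counter, stopping at the first fixed point. This terminates after $\depthout{u}+1 \le \dia+1$ $\post$ operations and needs only the current and previous frontier together with a scratch set, i.e.\ $O(1)$ sets. Iterating this over the elements of $S$ one at a time — peeling one vertex off a working copy of $S$ per round and keeping a running maximum in a non-symbolic register — gives a total of $O(|S|\cdot \dia)$ one-step operations while the number of simultaneously stored sets stays $O(1)$. Since $S$ is supplied as input, no operations are charged for producing it.

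The only place requiring genuine care is the lower-bound step, where the purely combinatorial hitting condition that $S$ meets $\distx{x}{v}$ for every $v \in \highdeg{x}$ must be converted into the distance bound $\adia \ge \dia - x$. This hinges on two facts lining up exactly: that the extremal source $a$ itself belongs to $\highdeg{x}$ (so that the hypothesis is actually triggered at $v=a$), and that every vertex within forward distance $x$ of $a$ has forward eccentricity at least $\dia-x$, which is precisely what the Observation supplies. Once these are in place the remainder is bookkeeping.
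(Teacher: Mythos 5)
Your proof is correct and follows essentially the same route as the paper, which leaves this corollary's justification implicit in the surrounding text: the preceding Observation gives $\depthout{u} \ge \dia - x$ for $u \in \distx{x}{a}$, and the discussion of $\highdeg{x}$ ensures the diameter-realizing source $a$ is covered by the hitting hypothesis (you argue $a \in \highdeg{x}$ directly via the $x+1$ vertices on a shortest $a$--$b$ path, where the paper argues the contrapositive that $|\distx{x}{v}| < x$ forces $\depthout{v} < \dia$ — the same fact). The complexity accounting by iterated symbolic BFS with $O(1)$ stored sets is exactly the paper's intended argument.
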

\smallskip\noindent\emph{Dominate each Neighborhood.} 
An \emph{out-dominating set} for a set of vertices $A \subseteq V$ contains for each 
vertex of~$A$ either the vertex or one of its successors. 
Finding a set $S \subseteq V$ such that $S \cap \distx{x}{v} \ne \emptyset$
for all $v \in \highdeg{x}$ is equivalent to find an 
out-dominating set for all vertices with degree at least $x$ in the following graph: 
Let $\hat{G}$ be the graph obtained from $G$ by adding an edge from~$v$ to 
each vertex of $\distx{x}{v} \setminus \set{v}$ for all $v \in V$. 
In $\hat{G}$ every vertex of $\highdeg{x}$ has out-degree at least~$x$.
Thus an out-dominating set for $\highdeg{x}$ in $\hat{G}$ contains a 
vertex of $\distx{x}{v}$ for all $v \in \highdeg{x}$, i.e., for all vertices $v \in V$
with $|\distx{x}{v}| \ge x$. We adopt the classical greedy algorithm for 
dominating set to compute an out-dominating set in $\hat{G}$ with the following guarantees.
We prove Lemma~\ref{lem:domset} in the following subsection.
\begin{lemma}\label{lem:domset}
	An out-dominating set $S$ for the vertices of $\highdeg{x}$
	in $\hat{G}$ with $|S| \in O(n / x \cdot \log n)$ can 
	be found with $O(n \cdot x \cdot \log n)$ symbolic operations on $G$,
	storing $O(1)$ many sets.
\end{lemma}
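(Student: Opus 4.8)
The plan is to analyze the classical greedy dominating-set algorithm in the symbolic setting, separating the combinatorial size bound from the symbolic-implementation accounting. First I would translate the covering requirement into the correct set-cover instance: choosing a vertex $u$ as a center out-dominates in $\hat{G}$ exactly those $v$ with $u \in \distx{x}{v}$, i.e.\ the vertices $v$ with $\dist(v,u)\le x$, which is precisely the in-ball $\distxto{x}{u}$. Hence finding $S$ with $S\cap\distx{x}{v}\neq\emptyset$ for all $v\in\highdeg{x}$ is the problem of covering $\highdeg{x}$ by the family $\{\distxto{x}{u}\cap\highdeg{x}\}_{u\in V}$, and the defining property of $\highdeg{x}$, namely $|\distx{x}{v}|\ge x$ for every $v\in\highdeg{x}$, is the quantitative input driving the greedy bound.

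Second, I would establish the size bound by the standard greedy/averaging argument. Let $U\subseteq\highdeg{x}$ be the currently undominated vertices. Double counting gives $\sum_{u\in V}|\distxto{x}{u}\cap U| = \sum_{v\in U}|\distx{x}{v}| \ge x\,|U|$, so some center covers at least the average $x|U|/n$. Selecting such a center shrinks $U$ to at most $(1-x/n)|U|$, and since $n(1-x/n)^t \le n\,e^{-tx/n}<1$ once $t>(n/x)\ln n$, the set $U$ is emptied after $O((n/x)\log n)$ iterations. As one center is added per iteration, this yields $|S| \in O((n/x)\log n)$.

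Third comes the symbolic implementation under the $O(1)$-set constraint. Each forward or backward ball $\distx{x}{v}$, $\distxto{x}{u}$ is obtained by a depth-$x$ BFS, i.e.\ $O(x)$ symbolic one-step operations on $G$; crucially, unlike the explicit algorithm of \cite{AingworthCIM99}, which stores all $n$ balls, here we keep only $U$ together with a constant number of working sets and recompute any ball on demand. I would initialize $U=\highdeg{x}$ by testing $|\distx{x}{v}|\ge x$ for every $v$ in $O(nx)$ operations, then in each iteration locate a center covering $\ge x|U|/n$ vertices of $U$ (evaluating a candidate $u$ costs the $O(x)$ of computing $\distxto{x}{u}$ plus the cheap set-size count of $\distxto{x}{u}\cap U$), add it to $S$, and update $U\leftarrow U\setminus\distxto{x}{u}$ in a further $O(x)$ operations.

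The main obstacle is precisely this center-selection step within the operation budget. The averaging argument only guarantees a high-coverage center somewhere among all $n$ vertices, yet we cannot afford to evaluate all $n$ in-balls in every iteration (that would cost $O(nx)$ per iteration and $O(n^2\log n)$ overall), and with only $O(1)$ live sets we cannot precompute and cache the balls as the explicit setting does. The plan is therefore to amortize the ball recomputations across the $O((n/x)\log n)$ iterations so that the total reaches $O(nx\log n)$, which then dominates both the $O(nx)$ initialization of $U$ and the $O(n\log n)$ spent updating $U$. Verifying this accounting, and confirming that at every point only a constant number of vertex sets are stored while the balls are transparently recomputed, is the part that requires the most care.
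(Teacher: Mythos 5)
Your reduction to set cover, your averaging argument for the $O((n/x)\log n)$ size bound, and your identification of the central obstacle are all correct — but the obstacle you flag in your last paragraph is precisely the content of the lemma, and your proposal does not resolve it. You observe that evaluating all $n$ in-balls in each of the $O((n/x)\log n)$ iterations would cost $O(n^2\log n)$ operations, and then state that "the plan is to amortize the ball recomputations" down to $O(nx\log n)$ without giving a mechanism. For the greedy rule you propose (pick a center of maximum, or at least of average, marginal coverage) no such amortization exists: with only $O(1)$ stored sets nothing from earlier iterations can be cached, the marginal coverages change after every selection, and certifying that a candidate covers $\ge x|U|/n$ uncovered vertices in general requires recomputing $\Omega(n)$ balls per iteration. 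So the accounting you defer to "the part that requires the most care" cannot be made to work for this version of the algorithm.

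The paper closes the gap by changing the greedy rule itself (a lazy, descending-threshold greedy in the style of Badanidiyuru--Vondr\'ak): maintain a threshold $2^j$ with $j$ going from $\lfloor\log_2 n\rfloor$ down to $0$; in phase $j$, sweep once over all vertices not yet in $S$ and add any $v$ whose ball $\distxto{x}{v}$ newly covers at least $2^j$ uncovered vertices, updating the covered set $C$ immediately. Each phase costs $O(n\cdot x)$ one-step operations (one depth-$x$ backward search per candidate) and there are $O(\log n)$ phases, giving $O(n x\log n)$ in total with $O(1)$ live sets; at $j=0$ every remaining uncovered vertex of $\highdeg{x}$ is picked, so $S$ is a valid out-dominating set. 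The price is that a selected center is only within a factor $2$ of the best marginal coverage at the moment of selection (it failed the threshold $2^{j+1}$ in the previous phase), so the size analysis must absorb this slack; the paper does so by charging each selection against the fractional out-dominating set of size $n/x$ obtained by giving every vertex weight $1/x$ (feasible since every $v\in\highdeg{x}$ has out-degree at least $x$ in $\hat{G}$), which yields $|S|\in O((n/x)\log n)$ with an extra constant factor. Your averaging argument would also survive the factor-$2$ relaxation, but without the descending-threshold device the operation bound does not follow.
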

\smallskip\noindent\emph{Overall Algorithm.} 
Hence our algorithm is as follows. First we find a set $S$ of size 
$O(n / x \cdot \log n)$ that contains a vertex of $\distx{x}{v}$ for every 
$v \in \highdeg{x}$ in $O(n \cdot x \cdot \log n)$
symbolic steps (Lemma~\ref{lem:domset}). Then we compute $\depthout{u}$ for 
all $u \in S$ with $O(n \cdot \dia / x \cdot \log n )$ many symbolic steps and 
return the maximum value of $\depthout{u}$ that was found
(Corollary~\ref{cor:approxdiagivenS}). Together with the observations 
at the beginning of this section we obtain the following theorem. 
The $\widetilde{O}$-notation hides the logarithmic factors.
\begin{theorem}[Restatement of Theorem~\ref{th:diaalg}]
	A $(1+\epsilon)$-approximation of the diameter of a directed graph for any constant 
	$\epsilon > 0$ can be obtained 
	with $\widetilde{O}(n \sqrt{\dia})$ symbolic operations, using $O(1)$ sets.
\end{theorem}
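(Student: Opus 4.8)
The plan is to instantiate the two building blocks developed above—Lemma~\ref{lem:domset} and Corollary~\ref{cor:approxdiagivenS}—and to balance their costs by choosing the parameter $x = \sqrt{\dia}$. First I would invoke Lemma~\ref{lem:domset} to compute, in $O(n \cdot x \cdot \log n)$ symbolic operations and using only $O(1)$ sets, a set $S$ of size $O(n/x \cdot \log n)$ that meets $\distx{x}{v}$ for every $v \in \highdeg{x}$. Then by Corollary~\ref{cor:approxdiagivenS} I would compute $\adia = \max_{u \in S} \depthout{u}$; this yields an estimate with $\adia \in [\dia - x, \dia]$ at a cost of $O(|S| \cdot \dia) = O(n \cdot \dia / x \cdot \log n)$ symbolic operations, again storing only $O(1)$ sets.

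Summing the two contributions gives $O(n x \log n + n \dia / x \cdot \log n)$ symbolic operations, and with $x = \sqrt{\dia}$ both terms coincide at $O(n \sqrt{\dia} \log n) = \widetilde O(n \sqrt{\dia})$. For the approximation ratio I would use that $\adia \ge \dia - \sqrt{\dia} = \dia\,(1 - 1/\sqrt{\dia})$, so $\dia / \adia \le 1/(1 - 1/\sqrt{\dia}) \le 1 + \varepsilon$ precisely when $\sqrt{\dia} \ge (1 + \varepsilon)/\varepsilon$. As $\varepsilon$ is a fixed constant, this inequality fails only for constantly many values of $\dia$; in that regime $\dia$ is itself constant, so the basic exact algorithm already runs in $O(n)$ symbolic steps and the claimed bound holds trivially.

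The main obstacle I anticipate is that setting $x = \sqrt{\dia}$ presupposes knowledge of $\dia$, which is exactly the quantity we are approximating. I would resolve this as sketched in the Result paragraph: if $G$ is strongly connected, the simple $2$-approximation $(\depthout{u} + \depthin{u})/2$ computed from an arbitrary vertex $u$ in $O(\dia)$ steps pins down $\dia$ up to a factor $2$, from which a suitable $x = \Theta(\sqrt{\dia})$ can be read off; in general one runs a doubling search over candidate values of $x$, at the price of an extra $O(\log n)$ factor that is absorbed into the $\widetilde O$. A secondary point to verify is correctness when $G$ is not strongly connected, but this is already handled by the restriction to $\highdeg{x}$: any $v \notin \highdeg{x}$ has $\distx{x}{v}$ of size below $x$ and hence $\depthout{v} < \dia$, so such a vertex need not be covered by $S$, which is exactly the hypothesis under which Corollary~\ref{cor:approxdiagivenS} guarantees $\adia \ge \dia - x$.
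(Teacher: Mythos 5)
Your proposal is correct and follows essentially the same route as the paper: combine Lemma~\ref{lem:domset} and Corollary~\ref{cor:approxdiagivenS}, balance the two costs with $x = \sqrt{\dia}$, handle small $\dia$ via the exact algorithm and unknown $\dia$ via the 2-approximation or doubling search, and rely on the restriction to $\highdeg{x}$ for graphs that are not strongly connected. Nothing essential is missing.
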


\subsubsection{Proof of Lemma~\ref{lem:domset}}

A \emph{fractional} out-dominating set of a set $A \subseteq V$ is a function
that assigns a weight $w_v \in [0,1]$ to each $v \in V$ such that for every 
$v \in A$ the sum of the weights over $v$ and its successors is at least one.
The size of a fractional out-dominating set is the sum of all weights~$w_v$.
For Lemma~\ref{lem:domset} we want to obtain an out-dominating set 
of $\highdeg{x}$. The vertices of $\highdeg{x}$ have out-degree at least $x$ in $\hat{G}$. 
Thus a fractional out-dominating set of $\highdeg{x}$ in $\hat{G}$
is obtained by assigning each vertex a weight of $1 / x$. The 
size of this fractional out-dominating set is $O(n/x)$. 
We show a greedy algorithm that finds an out-dominating set of $\highdeg{x}$
in $\hat{G}$ of size within a logarithmic factor of the optimal fractional 
out-dominating set, i.e., of size $O(n / x \cdot \log n)$.
The greedy algorithm is given in Algorithm~\ref{alg:dominatingSet} and is 
a modification of the greedy algorithm by \cite{Johnson74,Lovasz75,Chvatal79}
using an idea from 
\cite{BadanidiyuruV14}. We first describe the algorithm and show that it takes
$O(n \cdot x \cdot \log n)$ symbolic steps to output an out-dominating set
of $\highdeg{x}$ and then prove that the size of 
the obtained out-dominating set for $\highdeg{x}$ is within $O(\log n)$ of 
the optimal fractional solution.

\begin{algorithm2e}[t]
	\SetAlgoRefName{Dominating Set}
	\caption{Algorithm for Out-Dominating Set in $\hat{G}$}
	\label{alg:dominatingSet}
	\SetKwInOut{Input}{Input}
	\SetKwInOut{Output}{Output}
	\BlankLine
	\Input{%
	  {Graph} $G = (V, E)$, parameter $x$
	}
	\Output
	{
	    Set $S\subseteq V$ that contains a vertex of $\distx{x}{v}$ for all $v$ with $|\distx{x}{v}| \ge x$
	}
	\BlankLine
	$S \gets \emptyset$ \tcc*{dominating set}
	$C \gets \emptyset$ \tcc*{covered vertices}
	$j \gets \lfloor \log_2 n \rfloor$ \tcc*{size threshold}
	
	\For(\tcc*[f]{don't have to cover vertices that reach $< x$ vertices}){$v \in V$\label{l:excludelowdeg_start}}
	{
	  \If{$|\distx{x}{v}| < x$}{
		$C \gets C \cup \set{v}$\label{l:excludelowdeg_end}\;
	  }
	}
	
	\While{$j \geq 0$}
	  {
	      \For{$v \in V \setminus S$}
		{
		  \If{$|\distxto{x}{v} \setminus C| \geq 2^j$}
		    {
		      $S \gets S \cup \{v\}$\;
		      $C \gets C \cup \distxto{x}{v}$\;
		    }
		}
		$j \gets j-1$\;
	  }
	  \Return $S$\;
\end{algorithm2e}

Algorithm~\ref{alg:dominatingSet} takes the graph $G = (V, E)$ and the parameter~$x$ 
as input. Constructing the graph $\hat{G}$, i.e., storing the sets $\distx{x}{v}$ for 
all $v\in V$, is too costly. 
Note that there is a one-to-one correspondence 
between the edges of $\hat{G}$ and the paths of length $\le x$ in $G$ and that 
the union of a vertex~$v$ with its successors 
in $\hat{G}$ is given by $\distx{x}{v}$
and the union of a vertex~$v$ with its predecessors
in $\hat{G}$ is given by $\distxto{x}{v}$.
We recompute the sets $\distxto{x}{v}$ in
$O(x)$ symbolic steps per set when needed by the algorithm.

Observe that the set~$S$ is an out-dominating set of $\highdeg{x}$ in $\hat{G}$
if and only if $\cup_{v \in S} \distxto{x}{v} \supseteq \highdeg{x}$. We say that
the vertices of $\cup_{v \in S} \distxto{x}{v}$ are covered by the set~$S$.
In the algorithm the set~$S$ denotes the set of vertices added to the out-dominating 
set so far and the set~$C$ denotes the vertices that are covered by~$S$;
we additionally add the vertices of $V \setminus \highdeg{x}$ to $C$, as they 
do not have to be covered
(lines~\ref{l:excludelowdeg_start}--\ref{l:excludelowdeg_end}).

The main part of the algorithm consists of a while-loop with $O(\log n)$ many 
iterations. The variable~$j$ is initialized with $\lfloor \log_2 n \rfloor$
and is decreased after each iteration of the while-loop; in the last iteration of the 
while-loop we have $j = 0$.
In each iteration every vertex that is not yet in $S$ is considered
one after the other. 
For each vertex~$v \in V \setminus S$ the set
$\distxto{x}{v}$ is computed and the vertex is added to $S$ if the set $\distxto{x}{v}$ 
contains more than $2^j$ vertices that are not yet covered. 
When a vertex~$v$ is added to $S$, the vertices
of  $\distxto{x}{v}$ are marked as covered by adding them to $C$. In the 
last iteration we have $2^j = 1$ and thus all vertices of $\highdeg{x}$ 
that were not covered yet are added to $S$. Hence the returned set~$S$ is an 
out-dominating set of $\highdeg{x}$. In each iteration of the while-loop 
$O(n \cdot x)$ symbolic operations are used, thus the algorithm takes 
$O(n \cdot x \log n)$ symbolic steps in total, storing $O(1)$ many sets at a time.

It remains to show that the size of $S$ is within a factor of $O(\log n)$
of the size of an optimal fractional out-dominating set $S^*$ of $\highdeg{x}$ 
in $\hat{G}$ with weights $w_v \in [0,1]$ for all $v \in V$.
The outline of the proof is as follows: For each vertex~$v$ that the greedy 
algorithm adds to~$S$, we charge a total cost $\ge 1$ to the weights $w_u$ and 
show that the total cost charged to $w_u$ for each $u \in V$ is at most
$2 \cdot H_{ \Delta_u} \cdot w_u$, where $H_i \in O(\log i)$ is the $i$-th harmonic number and $\Delta_u$ is the in-degree of $u$ in $\hat{G}$. 
This implies that the number of vertices in~$S$ is bounded
by the sum of $ 2 \cdot w_u \cdot H_n$ over all $u \in V$, which proves the claim.

We charge the weights when adding $v$ to $S$ in 
Algorithm~\ref{alg:dominatingSet} as follows: For each vertex of
$\distxto{x}{v} \setminus C$ (i.e. the newly covered vertices) we consider all vertices~$u$ that contribute to 
the cover of this vertex and charge them $w_u / |\distxto{x}{v} \setminus C|$.
Note that all vertices of $V \setminus \highdeg{x}$ are contained in~$C$ and 
thus $\distxto{x}{v} \setminus C \subseteq V_x$, hence each vertex 
of $\distxto{x}{v} \setminus C$ is covered by the optimal 
fractional out-dominating set with a weight of at least one. Thus 
this charges at least $1 / |\distxto{x}{v} \setminus C|$
per vertex of $\distxto{x}{v} \setminus C$ and hence at least one per vertex
added to~$S$.
We have that for a fractional out-dominating set 
the set $\distxto{x}{u}$ is the set of vertices to whose covering the weight 
$w_u$ contributes.
Thus the charge for vertex~$u$ when adding~$v$ is given by
\begin{equation*}
	\frac{\left\lvert \left(\distxto{x}{v} \setminus C \right) \cap
	\distxto{x}{u}\right\rvert}
	{\left\lvert\distxto{x}{v}\setminus C\right\rvert} \cdot w_u\,.
\end{equation*} 

We finally show that each vertex $v$ is charged at most $w_v \cdot H_{n+1}$.
Let $j'$ be the value of $j$ when $v$ is added to $S$,
i.e., $\lvert \distxto{x}{v} \setminus C \rvert \ge 2^{j'}$. 
Note that if a vertex~$u$ is charged a non-zero
amount, then it is not contained in $C$ and therefore not in $S$. Hence 
we have that $u$ was not added to $S$ in the previous iteration of the 
while-loop and thus by the greedy condition 
$\lvert \distxto{x}{u} \setminus C \rvert \le 2^{j'+1}$.
Hence whenever $u$ is charged for a vertex $v$, we have
\begin{equation*}
	\left\lvert\distxto{x}{v}\setminus C\right\rvert 
	\ge \frac{1}{2} \left\lvert\distxto{x}{u}\setminus C\right\rvert\,,
\end{equation*}
and thus
\begin{equation*}
	\frac{\left\lvert \left(\distxto{x}{v} \setminus C \right) \cap \distxto{x}{u}\right\rvert}
	{\left\lvert\distxto{x}{v}\setminus C\right\rvert} \cdot w_u 
	\\ \le 
	\frac{2 \cdot \left\lvert \left(\distxto{x}{v} \setminus C \right) \cap \distxto{x}{u}\right\rvert}
	{\left\lvert\distxto{x}{u}\setminus C\right\rvert} \cdot w_u \,.
\end{equation*}
Now consider the vertex $u$ over the whole algorithm. The vertex~$u$ is charged 
for each vertex in $\distxto{x}{u}$ at most once. By the above it is charged at most 
$2 w_u/|\distxto{x}{u}|$ for the first vertex it is charged for, at most 
$2 w_u/(|\distxto{x}{u}|-1)$ for the second vertex, and at most 
$2 w_u/(|\distxto{x}{u}|-i + 1)$ for the $i$-th vertex.
Thus a vertex~$u$ with $|\distxto{x}{u}| = \Delta_u$ is charged at most $ 2 w_u \cdot H_{\Delta_u} \in O (w_u \log(\Delta_u) )$, and hence
we obtain an $O(\log(n))$ approximation of $S^*$.

\subparagraph*{Acknowledgements.}
All authors are partially supported by the Vienna
Science and Technology Fund (WWTF) through project ICT15-003.
K.~C.~is partially supported by the Austrian Science Fund (FWF)
NFN Grant No S11407-N23 (RiSE/SHiNE) and an ERC Start grant
(279307: Graph Games). V.~L.~is partially supported by the ISF grant \#1278/16
and an ERC Consolidator Grant (project MPM). 
For W.~D.~and M.~H.~the research leading to these results has received 
funding from the European Research Council under the European Union’s
Seventh Framework Programme (FP/2007-2013) / ERC Grant Agreement no. 340506.

\printbibliography[heading=bibintoc]

\end{document}